\newcommand{\kelly}{\texttt{ApproxKelly}\xspace}
\newcommand{\Wtilde}{\widetilde{W}\xspace}
\newcommand{\Mtilde}{\widetilde{A}\xspace}
\newcommand{\wor}{WoR\xspace}
\newcommand{\var}{\mathbb{V}}
\newcommand{\cov}{\mathrm{Cov}}
\newcommand{\mhat}{\widehat{m}}
\newcommand{\vhat}{\widehat{v}}
\newcommand{\iid}{i.i.d.}
\newcommand{\rlfa}{(\varepsilon, \delta)\text{-}\mathrm{RLFA}\xspace}
\title{ Risk-limiting Financial Audits  \\via Weighted Sampling without Replacement
}
\date{}
\author[1]{Shubhanshu Shekhar}
\author[1]{Ziyu Xu}
\author[2,3]{\authorcr Zachary C. Lipton}
\author[3]{Pierre J. Liang} 
\author[1,2]{Aaditya Ramdas}
\affil[1]{Department of Statistics and Data Science, Carnegie Mellon University}
\affil[2]{Machine Learning Department, Carnegie Mellon University}
\affil[3]{Tepper School of Business, Carnegie Mellon University}
\begin{document}

\maketitle

\begin{abstract}
    
We introduce the notion of a risk-limiting financial auditing~(RLFA): given $N$ transactions, the goal is to estimate the total misstated monetary fraction~($m^*$) to a given accuracy $\epsilon$, with confidence $1-\delta$.  We do this by constructing new confidence sequences~(CSs)  for the weighted average of $N$ unknown values,  based on samples drawn without replacement  according to a (randomized) weighted sampling scheme.  Using the idea of importance weighting to construct test martingales,  we first develop a framework to construct CSs  for arbitrary sampling strategies.  Next, we develop methods to improve the quality of CSs by incorporating side information about  the unknown values associated with each item. We show that when the side information is sufficiently predictive, it can directly drive the sampling. Addressing the case where the accuracy is unknown \emph{a priori}, we introduce a method that incorporates  side information via control variates. Crucially, our construction is adaptive: if the side information is highly predictive of the unknown misstated amounts, then the benefits of incorporating it are significant;  but if the side information is uncorrelated,  our methods learn to ignore it.  Our methods recover state-of-the-art bounds  for the special case when the weights are equal, which has already found applications in election auditing. The harder weighted case solves our more challenging problem of AI-assisted financial auditing.

\end{abstract}
\tableofcontents

\section{Introduction}
    \label{sec:introduction}

    Consider the following scenario: in a given year, 
    a company has $N$ recorded financial transactions 
    with reported monetary values $M(i) \in (0, \infty)$ 
    for each $i \in [N] \coloneqq \{1, \dots, N\}$.
    As required by law, an external auditor is required 
    to attest with ``reasonable assurance'' 
    about whether the financial records 
    as a whole are free from ``material misstatement.''
   For example, the company has cash receipts for sales of products, 
   and it wants to ensure that the reported monetary value 
   matches the true amount that was made on the sales 
   according to prescribed accounting rules as some receipts 
   may actually represent past sales or future deliveries. 
   This can be done, for instance, 
   by manually examining the entire sales process 
   to determine the true sales amount 
   against the the amount recorded by the company.
    Since the task of \emph{auditing} each transaction
    can be %
    complex %
    requires substantial human labor
    it can be prohibitively expensive to perform
    a comprehensive audit of a company's records.

    Suppose that the auditor has built 
    an AI system for ``automated auditing'',
    i.e., this AI system can output predictions 
    about the accuracy of a transaction value, 
    based on receipts, OCR (optical character recognition), databases, etc. 
    Such systems are in a state of active development and deployment,
    and the high level of industry demand is unsurprising
    given the remarkable predictive capabilities 
    of modern machine learning algorithms.
    But there's a catch: because the system is trained and deployed
    on differently distributed data, 
    its accuracy on a new set of records 
    in a new time period
    is unknown \emph{a priori}.
    Even if anecdotally, the AI system seems to perform reasonably well
    on data collected from a variety of companies, 
    we cannot make statistically certifiable conclusions
    based solely on the output of the AI system 
    on a new company and/or in a new time period.
    Thus we can think of AI systems in deployment
    as black boxes for which we have
    (reasonable) hopes of high accuracy
    but lack formal guarantees.

    The auditor's goal is to minimize the amount of manual auditing that must be done by a person, while accurately estimating the true monetary amount of those transactions that have not manually audited. When the AI system is accurate, we want to reduce the amount of human auditing effort required. More importantly, we want a statistically rigorous conclusion regardless of the AI system accuracy. Hence, our method should interpolate between using predictions to reduce its uncertainty rapidly when the system is accurate, and the most efficient AI-free strategy when the system is inaccurate.

    \paragraph{Problem setup and notation.} Denote the unknown misstated fraction of the $i$th transaction as $f(i) \in [0,1]$, for each $i \in [N]$. In other words, if $M^*(i)$ denotes the true value of the transaction $i$, and $M(i)$ is the reported value, then~\footnote{We are primarily concerned with estimating the downside that arises from misstatement, e.g., $M(i)$ represents the money that should have been received for a sale, and $M^*(i)$ represents the actual money received. In this scenario, we may lose at most $M(i)$ amount of money if $M^*(i) = 0$. Hence, we assume $f(i) \in [0, 1]$. 
    } $f(i) = |M^*(i)-M(i)|/M(i)$. We can normalize the reported transaction values by the sum over all transaction values to get a weight $\pi(i) \coloneqq M(i) / (\sum_{i = 1}^N M(i))$ for each $i \in [N]$, where $\sum_{i=1}^n \pi(i) = 1$. The auditor wishes to obtain an estimate of  $m^* = \sum_{i=1}^N \pi(i)f(i)$, the fraction of the total monetary value that is misstated, up to an accuracy $\varepsilon \in [0,1]$. By $S(i)$, we denote the \emph{side information}, a score for the $i$th transaction 
    that (ideally) predicts $f(i)$.
    In our setup, the side information can be generated through any method, e.g., through an AI system that automatically analyzes the documents 
    a human auditor would use, may also be available to the auditor.
    Each transaction can be evaluated 
    by the auditor to reveal $M^*(i)$ 
    (or equivalently, $f(i)$). 
    Thus, \emph{given an $\varepsilon>0$, 
    in what order should the transactions be audited 
    to estimate $m^*$ within $\varepsilon$ additive accuracy, 
    using the fewest number of calls to the auditor?}

    If we allow for no uncertainty, i.e., 
    we want to produce a confidence interval (CI) 
    for $m^*$ with 100\% confidence, 
    then the best strategy is to audit the transactions 
    in decreasing order of their reported value,
    and stop when the remaining transactions 
    constitute smaller than an $\varepsilon$ fraction of the total. 
    However, we can show that if we want to provide an estimate of $m^*$ 
    that is $\varepsilon$-accurate with probability at least $1-\delta$, 
    for a tolerance level $\delta$~(e.g., $0.01$), 
    there exist strategies based on randomized sampling \wor 
    that allow us to stop much earlier. 
    In other words, for each $t \in [N]$, 
    we adaptively construct a sampling distribution $q_t$ 
    over the remaining $N-t+1$ unaudited transactions,
    and sample $I_t$, the index of the $t$th transaction to audit, 
    according to $q_t$. 
    We then obtain $f(I_t)$ through manual auditing, 
    and incorporate this new information 
    to update our estimate of $m^*$. 
    If our residual uncertainty is sufficiently small~(i.e., 
    smaller than $\varepsilon$), we stop sampling. 
    Otherwise, we continue the process 
    by drawing the next index, $I_{t+1}$, 
    according to an appropriately chosen distribution $q_{t+1}$.  
    
    Before presenting the technical details,
    we note that we use $(X_t)_{t \in \mathbb{I}}$ 
    to denote a sequence of objects indexed by a set $\mathbb{I}$, 
    and the $t$th object is $X_t$. 
    We drop the indexing subscript if it is clear from context. 
    For any $t \in [N]$, 
    we use $\filtration_t \coloneqq \sigma(\{I_i\}_{i \in [t]})$ 
    to denote the sigma-algebra over our query selections 
    for the first $t$ queries.

    \paragraph{Risk-limiting financial audit (RLFA).} Formally, a $(\varepsilon, \delta)$-\textit{risk-limiting financial audit (RFLA)} is a procedure that outputs an interval $\mc{C}$ where $|\mc{C}| \leq \varepsilon$ and $\mc{C}$ contains the true misstated fraction, $m^*$, with probability at least $1 - \delta$. This is a natural generalization of risk-limiting audits that are used to ensure statistically valid election auditing \citep{stark_conservative_statistical_2008a,stark_cast_canvass_2009,lindeman2012gentle} to the financial setting, where each transaction is weighted by its reported monetary value (as opposed to uniform weighting for all votes in the election setting). We also consider other possible definitions of an RLFA in \Cref{sec:alt-defs}. Our goal is to produce $\mc{C}$ that satisfies the conditions of an RLFA with as few audits, i.e, queries of $f$, as possible. To produce such an interval, we propose a framework for building RLFAs by constructing confidence sequences, which we introduce next.

    \paragraph{Confidence sequences for sequential estimation.}  Let $T \in [N]$ be a random stopping time, that is,  a random variable for which the event $\{T = t\}$ belongs to $\filtration_t$ for each $t \in [N]$, and  let $\mc{T}$ denote the universe of all such stopping times. \emph{Confidence sequences}~\citep{lai1976confidence, howard2021time} (CSs), or time-uniform confidence intervals, are sequences of intervals, $(\mc{C}_t)_{t \in [N]}$, that satisfy
    \begin{align}
        \label{eq:conf-seq}
        \underset{T \in \mc{T}}{\sup}\ \prob{m^* \not\in \mc{C}_T} \leq \delta \Leftrightarrow
        \mathbb{P} \lp \exists t \in [N]:  m^* \not \in \mc{C}_t \rp \leq \delta,
    \end{align} where $\delta \in (0, 1)$ is a fixed error level. \citet{ramdas_admissible_anytime-valid_2020} showed the equivalence above, i.e., that any sequence of intervals $(\mc{C}_t)$ that satisfies one side of the implication will immediately satisfy the other as well. 
    
    Using this equivalence, we can define a simple $\rlfa$ procedure: construct a CS for $m^*$, denoted by $(\mc{C}_t)$, and produce $\mc{C}_\tau$ where $\tau$ is the following stopping time:
    \begin{gather}
    \tau = \tau(\varepsilon, \delta) \defined  \min \{t \geq 1: |\mc{C}_t| \leq \varepsilon \}. \label{eq:TauDef}
    \end{gather}
    The width of all nontrivial CSs converges to zero as $t \to N$, 
    and thus the above stopping time is well-defined, 
    and is usually smaller than $N$. 
    
    Note that the only source of randomness in this problem 
    is the randomized sampling strategy $(q_t)_{t \in [N]}$, 
    used to select transactions for manual evaluation. 
    Hence, $(q_t)_{t \in [N]}$ is another design choice for us to make. 
    To summarize, our goal in  this paper is 
    to \textbf{(i)} design sampling strategies $(q_t)$, 
    and \textbf{(ii)} develop methods of aggregating 
    the information so collected with any available side information, 
    in order to construct CSs for $m^*$
    whose width decays rapidly to $0$. 
    
    Among existing works in literature, the recent papers by~\citet{waudby2020estimating, waudby2020confidence} 
    are the most closely related to our work. 
    In these works, the authors considered 
    the problem of estimating the average value of $N$ items 
    via \wor sampling---however, they considered only uniform sampling, 
    and estimating only the unweighted mean of the population. 
    Our methods work with any sampling scheme, 
    and can estimate any weighted mean; 
    we recover their existing results 
    in \Cref{sec:HoefEBComparison}.

        \paragraph{\wor confidence intervals for a fixed sample size.}~Most existing results on concentration inequalities for observations drawn via \wor sampling focus on the fixed sample size setting, starting with  \citet{hoeffding1963probability}, who bounded the probability of deviation of the unweighted empirical mean with \wor sampling in terms of the range of the observations. In particular, \citet{hoeffding1963probability} showed that for observations $X_{I_1}, \ldots, X_{I_n} \in [a,b]$ drawn uniformly \wor from $N$ values $(X_i)_{i \in [N]}$, we have
        \begin{align}
        {
            \label{eq:hoeffding-fixed-time}
            \mathbb{P}\lp \tfrac{\sum_{t=1}^n X_{I_t}}{n}  - \tfrac{\sum_{t=1}^N X_i}{N}  > \varepsilon \rp \leq \exp \lp - \tfrac{2 n \varepsilon^2}{(b-a)^2} \rp.}
        \end{align}
        In \wor sampling, as the sample size $n$ approaches $N$, the total number of items, we expect the empirical estimate to approximate the true average very accurately. This observation, not captured by the above bound, was made formal by \citet{serfling1974probability}, who showed that the $n$ in~\eqref{eq:hoeffding-fixed-time} can be replaced by $\frac{n}{1-(n-1)/N}$, thus highlighting the significant improvement possible for larger $n$ values.
        \citet{ben2018weighted} prove a Hoeffding style concentration inequality on the unweighted sample mean to its own expectation, which is a different estimand than the weighted population mean considered in this paper.  Finally, in the unweighted case, \citet{bardenet2015concentration} obtained variance adaptive Bernstein and empirical-Bernstein variants of Serfling's results, that are tighter in cases where the variance of the observations is small. 
        These results appear to be incomparable to those of~\citet{waudby2020confidence,waudby2020estimating}, 
        that have found successful application to auditing elections~\cite{waudby2021rilacs}.
        In this paper, we develop techniques that generalize the CS constructions of~\citet{waudby2021rilacs, waudby2020confidence} in order to estimate the weighted average of $N$ quantities~(instead of simple, unweighted average) sampled via an adaptive scheme (instead of uniform), motivated by financial auditing applications.

    \subsection{Contributions}
    \label{subsec:overview}
    We introduce the concept of $\rlfa$ that generalizes 
    the notion of a risk-limiting audit introduced by \citet{stark_conservative_statistical_2008a} for election auditing. Unlike risk-limiting audits, where the main concern is testing an announced result, the objective of an RLFA is to precisely estimate the misstated monetary fraction of the reported financial transactions. 
    To accomplish this, we make the following key technical contributions:
        \begin{enumerate}[itemsep=0em, leftmargin=*]
            \item \emph{New CSs for weighted means with non-uniform sampling.} To design an $\rlfa$ procedure, we construct novel CSs for $m^*$ that are based on a betting method that was pioneered in \citep{waudby2020estimating} in \Cref{sec:side information}, as well as Hoeffding and empirical-Bernstein CSs in \Cref{sec:hoeffding-empirical-bernstein} (which are looser but have a simple analytical form). Our results generalize previous methods in two ways: \textbf{(i)} they can estimate the weighted mean of $N$ items, and \textbf{(ii)} they work with adaptive, data-dependent, sampling strategies.
            In particular, our betting CSs, which we show empirically are the most powerful in \Cref{sec:HoefEBExperiments}) are based on simultaneously playing  gambling games with an aim to disprove the possibility that $m^* =m$, for each $m \in [0, 1]$. Values for $m$, where we accumulate much wealth are eliminated from the CS. Consequently, we develop a simple, lucrative betting strategy for this setting (\kelly), which is equivalent to formulating narrower CSs. 
            \item \emph{Adaptive sampling strategies that minimize CS width.} In addition to designing CSes that are intrinsically narrow, we are also able to change the sampling distribution of the transactions at each time step, and develop a sampling strategy that will minimize CS width in concert with any valid CS construction. 
            We propose two sampling strategies, \propM and \propMS, the latter of which can incorporate approximately accurate scores $(S(i))_{i \in [N]}$ to improve the sample efficiency of our CSs. This is accomplished by choosing the sampling distribution, at each time step, that maximizes the wealth accumulated by the betting strategies that underlie our CSs. We find that this is approximately equivalent to choosing the sampling distribution with the minimal variance, and we show that our sampling strategies result in a noticeable improvement over uniform sampling through simulations in \Cref{sec:experiments}. 
            \item \emph{Robust use of side information to tighten CSs.} Finally, in~\Cref{sec:side information}, we develop a principled way of leveraging any available side information, inspired by the idea of control variates used for variance reduction in  Monte Carlo sampling. Interestingly, our method adapts to the quality of the side information---if $(S(i))_{i \in [N]}$ and $(f(i))_{i \in [N]}$ are highly correlated, the resulting CSs are tighter, while in the case of uncorrelated $(S(i))$, we simply learn to discard the side information. 
\end{enumerate}

\section{Betting-based CS construction}
    \label{subsec:betting-CS-no-side-info}
    We derive our CSs by designing sequential tests to simultaneously check the hypotheses that $m^*=m$, for all $m \in [0,1]$. 
    By the principle of \emph{testing by betting}~\citep{shafer_testing_betting_2021}, 
    this is equivalent to playing repeated gambling games 
    aimed at disproving the null $m^*=m$, for each $m \in [0,1]$. 
    Formally, for all $m \in [0,1]$, 
    we construct a process $(W_t(m))_{t \in [N]}$~(the wealth process), 
    such that \textbf{(i)} if $m=m^*$, 
    then $(W_t(m))$ is a \emph{test martingale}, 
    i.e., a nonnegative martingale with initial value $1$, 
    and \textbf{(ii)} if $m \neq m^*$,
    then $W_t(m)$ grows at an exponential rate.
    Recall that a process $(W_t)_{t \in [N]}$ adapted to $(\filtration_t)_{t \in [N]}$ is a supermartingale iff  $\expect[W_t \mid \filtration_{t - 1}] \leq W_{t - 1}$ for all $t \in [N]$,
    and a martingale if the inequality is replaced with an equality. 
    Assuming we can construct such a process, 
    we define the confidence set at any time $t$ 
    as the set of those $m \in [0,1]$ 
    for which $(W_t(m))$ is `small', 
    because a nonnegative martingale 
    is unlikely to take large values. 
    
    As mentioned earlier, this approach requires us 
    to design sampling distributions $(q_t)$, 
    and a method for constructing a CS $(\mathcal{C}_t)$ 
    from the queried indices.  
    We begin by formally defining a sampling strategy.
    \begin{definition}[Sampling Strategy]
        \label{def:sampling-strategy}
        A sampling strategy consists of a sequence $(q_t)_{t \in [N]}$, where $q_t$ is a probability distribution on the set $\mc{N}_t \defined [N] \setminus \{I_1, \ldots, I_{t-1}\}$. Here $I_j$ denotes the index drawn according to the predictable~(i.e., $\filtration_{j-1}$-measurable) distribution $q_j$. 
    \end{definition}

    A natural baseline sampling strategy is to set $q_t$ to be uniform over  $\mc{N}_t$ for all $t \in [N]$. We will develop other, more powerful, sampling strategies that are more suited to our problem in~\Cref{sec:sampling-strategies}. 

    We now describe how to construct the wealth process for an arbitrary sampling strategy. First, define the following:
    \begin{align}
        Z_t \coloneqq f(I_t) \tfrac{\pi(I_t)}{q_t(I_t)},\text{ and } \mu_t(m) \defined m - \sum_{j=1}^{t-1} \pi(I_j) f(I_j).
    \end{align} Note that $\mu_t(m)$ is the remaining misstated fraction after accounting for the first $t - 1$ queries to $f$ if $m$ is truly the total misstated fraction.
    Now, we can define the \emph{wealth process}:
    \begin{align}
        W_t(m) = W_{t-1}(m) \times \lp 1 + \lambda_t(m)\lp Z_t - \mu_t(m) \rp\rp, \label{eq:wealth-process-0}
    \end{align}
    with $W_0=1$. $(\lambda_t(m))_{t \in [N]}$ is a predictable sequence with values in $[0,1/u_t(m)]$, and  $u_t(m)$ is the largest value in the support of $Z_t - \mu_t(m)$, for each $t \in [N]$. 
     Note that this constraint on $(\lambda_t(m))$  ensures that  $W_t(m)$ is nonnegative for each $t \in [N]$. We also let $W_0(m) = 1$ for all $m \in [0, 1]$. If we view the wealth process as the wealth we earn from gambling on the outcome of $Z_t - \mu_t(m)$, then $(\lambda_t(m))$ represents a betting strategy, i.e.,  how much money to gamble each turn. Hence, we refer to $(\lambda_t(m))$ as a \emph{betting strategy}.

    It is easy to verify that $(W_t(m^*))$ is a nonnegative martingale  for any sampling strategy $(q_t)$ and betting strategy $(\lambda_t(m^*))$). Hence, it is unlikely to take large values, as we describe next. 
    \begin{proposition}
       \label{prop:type-I}
       For any sampling and betting strategies $(q_t)$ and $(\lambda_t(m^*))$, the following holds:
       \begin{align}
           \mathbb{P}\lp \exists t \geq 1: W_t(m^*) \geq 1/\delta \rp \leq \delta.
       \end{align}
    \end{proposition}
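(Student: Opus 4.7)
The plan is to recognize this as an immediate application of Ville's maximal inequality for nonnegative supermartingales, once we verify that $(W_t(m^*))_{t \in [N]}$ is a nonnegative martingale with initial value $1$ with respect to the filtration $(\filtration_t)$. Nonnegativity and the initial condition $W_0(m^*)=1$ are built into the construction (the constraint $\lambda_t(m) \in [0,1/u_t(m)]$ was imposed precisely to ensure $1+\lambda_t(m)(Z_t-\mu_t(m)) \geq 0$, so the product form of the wealth process preserves nonnegativity). So the only real content is the conditional mean-one increment, i.e., showing that $\mathbb{E}[W_t(m^*) \mid \filtration_{t-1}] = W_{t-1}(m^*)$.

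Since $W_{t-1}(m^*)$ and $\lambda_t(m^*)$ are both $\filtration_{t-1}$-measurable, this reduces to checking that
\[
    \mathbb{E}\bigl[Z_t - \mu_t(m^*) \,\big|\, \filtration_{t-1}\bigr] = 0.
\]
First I would expand the conditional expectation of $Z_t$ using the fact that $I_t$ is drawn from the $\filtration_{t-1}$-measurable distribution $q_t$ supported on $\mc{N}_t$:
\[
    \mathbb{E}[Z_t \mid \filtration_{t-1}] = \sum_{i \in \mc{N}_t} q_t(i)\, f(i)\, \frac{\pi(i)}{q_t(i)} = \sum_{i \in \mc{N}_t} \pi(i) f(i).
\]
This is the standard importance-weighting identity, which is precisely why the paper defined $Z_t$ with the factor $\pi(I_t)/q_t(I_t)$. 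Next I would observe that by definition
\[
    \mu_t(m^*) = m^* - \sum_{j=1}^{t-1} \pi(I_j) f(I_j) = \sum_{i=1}^N \pi(i) f(i) - \sum_{j=1}^{t-1} \pi(I_j) f(I_j) = \sum_{i \in \mc{N}_t} \pi(i) f(i),
\]
which matches the conditional expectation of $Z_t$ exactly. This yields the desired martingale identity.

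Once $(W_t(m^*))$ is established as a nonnegative martingale with $W_0(m^*) = 1$, I would invoke Ville's inequality, which states that for any such process, $\mathbb{P}(\exists t \geq 1:\ W_t(m^*) \geq 1/\delta) \leq \mathbb{E}[W_0(m^*)] \cdot \delta = \delta$. No step here is a genuine obstacle; the only subtle point worth emphasizing is that the \wor sampling and the data-dependent betting/sampling strategies $(\lambda_t, q_t)$ are handled uniformly by conditioning on $\filtration_{t-1}$, since the crucial cancellation $\mathbb{E}[Z_t \mid \filtration_{t-1}] = \mu_t(m^*)$ is a purely algebraic consequence of the importance-weighted definition of $Z_t$ and of the bookkeeping in the definition of $\mu_t$.
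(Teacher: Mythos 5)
Your proposal is correct and follows the same route as the paper: the paper notes that $(W_t(m^*))$ is easily verified to be a nonnegative martingale and then cites Ville's inequality (Fact~\ref{fact:ville}), and your write-up simply spells out that verification via the importance-weighting identity $\mathbb{E}[Z_t \mid \filtration_{t-1}] = \sum_{i \in \mc{N}_t}\pi(i)f(i) = \mu_t(m^*)$. No substantive difference.
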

    This is a consequence of Ville's inequality, first obtained by~\citet{ville1939etude}, which is a time-uniform version of Markov's inequality for nonnegative supermartingales.
    This result immediately implies that  for any sampling strategy, and any betting strategy, the term $m^*$ must lie in the set
    \begin{align}
        \label{eq:conf-seq-def-1}
        \mc{C}_t = \{m : W_t(m) < 1/\delta\}
    \end{align}    with probability at least $1-\delta$, making $(\mc{C}_t)$ a $(1 - \delta)$-CS. 
    \begin{theorem}
        $(\mc{C}_t)$ is an $(1 - \delta)$-CS, where $\mc{C}_t$ defined by \eqref{eq:conf-seq-def-1}. Hence, a procedure that outputs $\mc{C}_\tau$ is an $\rlfa$, for any sampling strategy $(q_t)$ and betting strategies $(\lambda_t(m))$ for each $m \in [0, 1]$. Recall that the $\tau$ is defined in \eqref{eq:TauDef} as the first time where $|\mc{C}_t| \leq \varepsilon$.
        \label{thm:CSthm}
    \end{theorem}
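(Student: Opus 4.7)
The plan is to show that Theorem \ref{thm:CSthm} follows almost directly from Proposition \ref{prop:type-I}, together with the definition of $\tau$ and the equivalence \eqref{eq:conf-seq}. The only nontrivial ingredient that the theorem implicitly requires (and which Proposition \ref{prop:type-I} already packages) is that $(W_t(m^*))$ is a nonnegative martingale under any sampling strategy $(q_t)$ and any admissible betting strategy $(\lambda_t(m^*))$, so I would begin by making this explicit as a preliminary step.

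First I would verify the martingale property of $(W_t(m^*))$. From the multiplicative update \eqref{eq:wealth-process-0}, it suffices to check that $\expect[Z_t - \mu_t(m^*) \mid \filtration_{t-1}] = 0$. Since $q_t$ and $\mu_t(m^*)$ are $\filtration_{t-1}$-measurable and $I_t \sim q_t$ on $\mc{N}_t$, we get the importance-weighting identity
\[
\expect[Z_t \mid \filtration_{t-1}] = \sum_{i \in \mc{N}_t} q_t(i) \cdot f(i) \tfrac{\pi(i)}{q_t(i)} = \sum_{i \in \mc{N}_t} \pi(i) f(i) = m^* - \sum_{j=1}^{t-1} \pi(I_j)f(I_j) = \mu_t(m^*).
\]
Combined with the constraint $\lambda_t(m^*) \in [0, 1/u_t(m^*)]$ (which ensures $1 + \lambda_t(m^*)(Z_t - \mu_t(m^*)) \geq 0$) and $W_0(m^*)=1$, this makes $(W_t(m^*))$ a test martingale. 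Proposition \ref{prop:type-I} (Ville's inequality) then gives $\prob(\exists t: W_t(m^*) \geq 1/\delta) \leq \delta$.

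Next I would translate this bound into the CS statement. By the definition \eqref{eq:conf-seq-def-1} of $\mc{C}_t$, the event $\{m^* \notin \mc{C}_t\}$ is exactly $\{W_t(m^*) \geq 1/\delta\}$, so
\[
\prob\bigl(\exists t \in [N]: m^* \notin \mc{C}_t\bigr) = \prob\bigl(\exists t \in [N]: W_t(m^*) \geq 1/\delta\bigr) \leq \delta.
\]
Invoking the equivalence \eqref{eq:conf-seq} of \citet{ramdas_admissible_anytime-valid_2020}, this is precisely the defining property of a $(1-\delta)$-CS.

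Finally, I would handle the \rlfa claim. The stopping time $\tau$ in \eqref{eq:TauDef} is a valid stopping time with respect to $(\filtration_t)$ because each event $\{|\mc{C}_t| \leq \varepsilon\}$ depends only on $W_1(\cdot), \dots, W_t(\cdot)$, which in turn are $\filtration_t$-measurable. On the event $\{\tau < \infty\}$ we have $|\mc{C}_\tau| \leq \varepsilon$ by construction, and by the CS property applied at the stopping time (via the left-hand side of \eqref{eq:conf-seq}) we have $\prob(m^* \in \mc{C}_\tau) \geq 1-\delta$. Hence the procedure outputting $\mc{C}_\tau$ satisfies the $(\varepsilon,\delta)$-\rlfa definition. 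I do not anticipate a genuine obstacle here; the only subtlety worth flagging is ensuring $\tau$ is well-defined, which the paper already notes follows from the fact that nontrivial CSs collapse as $t \to N$.
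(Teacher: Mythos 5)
Your proposal is correct and follows essentially the same route as the paper: verify that $(W_t(m^*))$ is a test martingale via the importance-weighting identity $\expect[Z_t \mid \filtration_{t-1}] = \mu_t(m^*)$, apply Ville's inequality (Proposition~\ref{prop:type-I}) to get the time-uniform bound, and convert this to the CS and $\rlfa$ claims via the complement of the event $\{W_t(m^*) \geq 1/\delta\}$ and the stopping-time equivalence~\eqref{eq:conf-seq}. The only difference is that you spell out the conditional-expectation calculation that the paper dismisses as ``easy to verify,'' which is a welcome addition rather than a deviation.
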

    This methodology gives us flexible framework for constructing different $(\mc{C}_t)$ that result in different RLFAs. Now, we can turn our attention to finding betting strategies $(\lambda_t(m))$ that reduces the CS width quickly and minimizes $\tau$.

    \begin{remark}
        \label{remark:betting-CS-width}
        Note that the set $\mc{C}_t$ in~\eqref{eq:conf-seq-def-1}, does not admit a closed form expression, and is computed numerically in practice by choosing $m$ values over a sufficiently fine grid on $[0,1]$. In~\Cref{sec:hoeffding-empirical-bernstein}, we design CSs based on nonnegative supermartingales~(instead of martingales) that do admit closed form representation. However, this analytical tractability comes as the price of empirical performance, as we demonstrate in~\Cref{sec:HoefEBExperiments}. 
    \end{remark}    

    \begin{remark}
        \label{remark:cs-optimality} 
        Ville's inequality~(\Cref{fact:ville} in ~\Cref{appendix:betting-cs-construction}), used for proving~\Cref{prop:type-I},  is known to be tight for continuous-time nonnegative martingales with infinite quadratic variation, and incurs a slight looseness as we move to the case of discrete time martingales. As a result, the martingale-based CSs constructed in this section provide nearly tight coverage guarantees, that are strictly better than the supermartingale based closed-form CSs discussed in~\Cref{sec:hoeffding-empirical-bernstein}. This near-tightness of the error probability of our betting-based CSs implies that there exists no other CS that is uniformly tighter than ours, while also controlling the error probability below $\alpha$. In other words, our CSs satisfy a notion of admissibility or Pareto-optimality. 
    \end{remark}
    \subsection{Powerful betting strategies}
    \label{sec:powerful-bet-strat}
        Besides validity, we also want the size of the CS to shrink rapidly. This depends on how quickly the values of $W_t(m)$ for $m \neq m^*$ grow with $t$.
        One such criterion is to consider the \emph{growth rate}, 
        i.e., the expected logarithm of the outcome of each bet.
        We can define the \emph{one-step growth rate} $D_n$, 
        for each $n \in [N]$ as follows:
        \begin{align}
            D_{n}(m, \lambda)&\defined  \log (1 + \lambda(Z_t - \mu_t(m))).
        \end{align}
        We are interested in maximizing the expected logarithm of the wealth process \citep{grunwald_safe_testing_2020,shafer_testing_betting_2021}, since it is equivalent to minimizing the expected time for a wealth process to exceed a fixed threshold (asymptotically, as the threshold grows larger) \citep{breiman_optimal_gambling_1961}. Thus, \textit{in the context of the auditing problem, 
        maximizing $\expect[D_t(\lambda, m) \mid \filtration_{t - 1}]$, approximately minimizes $\expect[\tau]$}. The one-step growth rate is a broadly studied objective known as the ``Kelly criterion'' \citep{kelly_new_interpretation_1956}.
        In general, finding the best sequence of bets $\lambda_t(m)$ for different values of $n$ is non-tractable. Instead we consider the approximation $\log(1+x) \geq x - x^2$ for $|x| \leq 1/2$, and define the best constant bet $\lambda^*_n$ in hindsight, as
        \begin{align}
            B_t(m, \lambda) &\defined \lambda \lp Z_t - \mu_t(m)\rp - \lambda^2 \lp Z_t - \mu_t(m) \rp^2, \label{eq:Bn}\\
            \lambda^*_n &\defined \underset{\lambda \in [\pm 1/2c]}{\argmax}\ \frac{1}{n} \sum_{t = 1}^n B_t(m, \lambda),
            \label{eq:approx-lambda-star}
        \end{align}
        where  $c = \max \{|Z_t - \mu_t(m)|: t\in[n]\}$. We get the following result on $\lambda^*_n$ for each $n \in [N]$:
        \begin{gather}
            \lambda^*_n \propto \frac{\sum_{t = 1}^n Z_t - \mu_t(m)}{\sum_{t = 1}^n (Z_t - \mu_t(m))^2} \defined \frac{A_n}{V_n}.
        \end{gather}
        Since $\lambda^*_n$ depends on the $n$th sample itself, $Z_n$, we cannot use this strategy in our CS construction. Instead, at any $n \in [N]$, we can use a predictable approximation of this strategy, that we shall refer to as the \kelly betting strategy. This strategy sets $\lambda_t(m)$ as follows:
        \begin{align}
            \lambda_t(m) = c_t \frac{A_{t - 1}}{V_{t -1}}, \label{eq:approx-kelly} \tag*{(\kelly)}
        \end{align}
        where the (predictable) factor $c_t$ is selected to ensure that $\lambda_t(m) \times \lp Z_t - \mu_t(m) \rp \in (-1, \infty)$, i.e., to satisfy the nonnegativity constraint of $(W_t(m))$.

        \begin{remark}
            \label{remark:other-betting-methods}
            We note there exist several other betting schemes in literature besides \kelly, such as those based on alternative approximations of $\log(1+x)$~\citep{fan_exponential_inequalities_2015, waudby2020estimating, ryu2022confidence}, or the ONS strategy that relies on the exp-concavity of the $\log$-loss~\citep{cutkosky2018black}. In practice, however, we did not observe significant difference in their performance, and we focus on the \kelly\ strategy in this paper due to its conceptual simplicity.
        \end{remark}
    
    \subsection{Logical CS}\label{subsec:logical-CS}
        Irrespective of the choice of the sampling and betting strategies, we can construct a CS that contains $m^*$ with probability 1, based on purely logical considerations. After sampling $t$ transactions, we know that $m^*$ is lower bounded by quantities derived from the the misstatement fraction accumulated in the items we have sampled already. %
        Hence, we can derive the following deterministic bounds:
        \begin{align}
            L_l(t) \defined \sum_{j=1}^t \pi(I_j)f(I_j) \leq m^*, \quad \text{and} \quad 
            U_l(t) \defined L_l(t) + \sum_{i \in \mc{U}_t} \pi(i) \geq m^* .
        \end{align}
        Note that $L_l(t)$~(resp. $U_l(t)$) values are obtained by noting that all the remaining unknown $f$ values must be larger than $0$~(resp. smaller than $1$).
        Additionally, due to the time-uniform nature of confidence sequences,%
         we can intersect the logical CS with a `probabilistic' CS constructed in~\eqref{eq:conf-seq-def-1}, and obtain the following CS:
        \begin{align}
            \label{eq:combined-conf-seq}
            \widetilde{\mc{C}}_t \defined \mc{C}_t \cap [L_\ell(t), U_\ell(t)] \cap \widetilde{\mc{C}}_{t - 1},
        \end{align}
        where $\widetilde{C}_{0} \defined [0,1]$. Note that we may take the running intersection of a CS since it remains a CS, simply by definition. Consequently, the combined CS in \eqref{eq:combined-conf-seq} dominates the probabilistic CS.%
\section{Sampling Strategies}
    \label{sec:sampling-strategies}
    The choice of the sampling strategy, $(q_t)$, is also critical to reducing uncertainty about $m^*$ quickly. Recall that $q_t$ is a probability distribution on the remaining indices $\remain_t$ for each $t \in [N]$. To motivate the choice of our sampling strategy, we first consider the following question: \emph{what is the randomized sampling strategy that leads to the fastest reduction in uncertainty about $m^*$?}
    
    In general, it is difficult to characterize this strategy in closed form~(other than the computational aspect of the strategy being the solution of a multistage optimization problem). Thus, we consider a simplified question, that of finding the sampling strategy that maximizes the expectation of the one-step growth rate, $D_n(\lambda, m)$, for each $n \in [N]$. We seek to maximize the lower bound, $B_n(\lambda, m)$, introduced in~\eqref{eq:Bn}:
    \begin{align}
        q_n^* \defined \underset{q \in \Delta^{\mc{N}_{n}}}{\argmax}\ \expect_{I_n \sim q }\left[ B_n(\lambda, m) \right],  
        \label{eq:max-bound}
    \end{align}
    where $\Delta^{\mc{N}_n}$ is the universe of distributions supported on $\mc{N}_n$. Our next result presents a closed-form characterization of $q_n^*$. 
    \begin{proposition}
        \label{theorem:oracle-strategy}
        Note that $q_n^* = \argmin_{q \in \Delta^{\mc{N}_{n}}}\ \mathbb{V}_{I_n \sim q}[Z_n]$, which implies that $q_n^*(i) \propto \pi(i)f(i)$. Hence, for any valid betting strategy $(\lambda_t)$ and sampling strategy $(q_t)$, 
        we have $\expect_{I \sim q_t}[B_t(\lambda_t,  m) ] \leq \expect_{I \sim q^*_t}[B_t(\lambda_t,  m)]$.
    \end{proposition}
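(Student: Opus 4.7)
The plan is to show that maximizing $\mathbb{E}_{I_n \sim q}[B_n(\lambda, m)]$ over $q$ reduces to minimizing $\mathbb{V}_{I_n \sim q}[Z_n]$, and then solve the resulting variance minimization problem in closed form using Lagrange multipliers.

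First, I would expand the expectation of $B_n$ by linearity:
\begin{align*}
\mathbb{E}_{I_n \sim q}[B_n(\lambda,m)]
= \lambda \bigl(\mathbb{E}_q[Z_n] - \mu_n(m)\bigr) - \lambda^2 \mathbb{E}_q\bigl[(Z_n - \mu_n(m))^2\bigr].
\end{align*}
The key observation, which makes the proof essentially a one-line calculation, is that importance weighting makes $\mathbb{E}_q[Z_n]$ \emph{independent} of the sampling distribution $q$ over $\mc{N}_n$: indeed,
\begin{align*}
\mathbb{E}_{I_n \sim q}[Z_n] \;=\; \sum_{i \in \mc{N}_n} q(i) \cdot \frac{f(i)\pi(i)}{q(i)} \;=\; \sum_{i \in \mc{N}_n} f(i)\pi(i),
\end{align*}
which is a (predictable) constant. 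Since $\mu_n(m)$ is also predictable, the first term on the right-hand side above does not depend on $q$. Using the bias-variance decomposition $\mathbb{E}_q[(Z_n - \mu_n(m))^2] = \mathbb{V}_q[Z_n] + (\mathbb{E}_q[Z_n] - \mu_n(m))^2$, the second term differs from $\lambda^2 \mathbb{V}_q[Z_n]$ only by a $q$-independent additive constant. Hence
\begin{align*}
\underset{q \in \Delta^{\mc{N}_n}}{\arg\max}\; \mathbb{E}_{I_n \sim q}[B_n(\lambda,m)]
\;=\; \underset{q \in \Delta^{\mc{N}_n}}{\arg\min}\; \mathbb{V}_{I_n \sim q}[Z_n],
\end{align*}
establishing the variance-minimization characterization stated in the proposition, and the claimed inequality $\mathbb{E}_{I \sim q_t}[B_t(\lambda_t,m)] \leq \mathbb{E}_{I \sim q_t^*}[B_t(\lambda_t,m)]$ follows immediately.

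Second, since $\mathbb{E}_q[Z_n]$ is constant in $q$, minimizing $\mathbb{V}_q[Z_n]$ is the same as minimizing the second moment
\begin{align*}
\mathbb{E}_q[Z_n^2] \;=\; \sum_{i \in \mc{N}_n} \frac{\bigl(f(i)\pi(i)\bigr)^2}{q(i)},
\end{align*}
subject to $\sum_{i \in \mc{N}_n} q(i) = 1$ and $q(i) \geq 0$. Applying Cauchy--Schwarz (or equivalently a one-line Lagrangian calculation with $\partial_{q(i)} L = -(f(i)\pi(i))^2/q(i)^2 - \gamma = 0$, giving $q(i) \propto f(i)\pi(i)$) yields the unique minimizer
\begin{align*}
q_n^*(i) \;=\; \frac{\pi(i)f(i)}{\sum_{j \in \mc{N}_n} \pi(j)f(j)}, \qquad i \in \mc{N}_n,
\end{align*}
which is exactly the proportionality claim in the proposition. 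There is no real obstacle here: the crucial insight is the variance-only dependence on $q$ induced by importance weighting, after which the optimization is a standard textbook exercise. The only mild caveat to flag is the edge case where $\sum_{j \in \mc{N}_n} \pi(j) f(j) = 0$, in which the objective is zero for any $q$ and the optimizer is not unique; any $q$ (e.g., uniform) then attains the maximum.
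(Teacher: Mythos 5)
Your proof is correct and follows essentially the same route as the paper: both reduce the maximization of $\expect_{I_n \sim q}[B_n(\lambda,m)]$ to minimizing $\mathbb{V}_{I_n \sim q}[Z_n]$ by observing that the importance weighting makes $\expect_q[Z_n]$ independent of $q$, so only the variance term depends on the sampling distribution. The only cosmetic difference is in identifying the minimizer: you solve the second-moment minimization via Cauchy--Schwarz/Lagrange, whereas the paper simply notes that $q(i)\propto \pi(i)f(i)$ makes $Z_n$ constant and hence achieves zero variance, which is trivially optimal.
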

    We defer the proof to \Cref{sec:oracle-proof}, which proceeds by showing  that maximizing the lower bound on the one-step growth rate is equivalent to minimizing the variance of $Z_n$. It turns out that $q_n^*(i) \propto \pi(i)f(i)$ is the minimum~(in fact, zero) variance sampling distribution, and thus, $(q^*_t)$ dominates any other sampling strategy w.r.t.\ maximizing the expected bound on the one-step growth rate.
    \begin{remark}
        The oracle strategy in \Cref{theorem:oracle-strategy} can be considered as a solution of an alternative question: suppose there is an oracle who knows the true values of $f(i)$, and needs to convince an observer that the value $m^*$ is within an interval of width $\varepsilon$ with probability at least $1-\delta$. The oracle wishes to do so by revealing as few $f(i)$ values to the observer as possible. Clearly, any deterministic sampling strategy from the oracle will lead to skepticism from the observer (i.e., the observer will only be convinced once the $\pi(i)$ corresponding to the unrevealed $f(i)$ sum to $\varepsilon$). Hence, the sampling strategy used by the oracle must be random, and according to~\Cref{theorem:oracle-strategy}, it should draw transactions with probability $\propto \pi(i) \times f(i)$. 
    \end{remark}
    \paragraph{Sampling without side information.} Since the $(f(i))$ values are unknown by definition of the problem, we cannot use $(q_t^*)$ in practice. Instead, we consider a sampling strategy that selects a index $i \in \mc{N}_t$ in proportion to its $\pi(i)$ value --- we refer to this strategy as the $\propM$ strategy.  This strategy is also known as ``sampling proportional to size'' in auditing literature~\citep{bickel1992inference}, and is similar to the best deterministic strategy, which queries indices in descending order w.r.t.\ $\pi(i)$.
    \begin{align}
        q_t(i) = \frac{\pi(i)}{\sum_{j \in \mc{N}_t} \pi(j)}, \tag*{(\propM)}
    \end{align} for each $i \in \mc{N}_t$.
    Sampling with \propM minimizes the ``worst case'' support range, and max value, of $Z_t$.
    This allows for the largest possible choice of $\lambda_t$, i.e., our bet.
    
    \paragraph{Using accurate side information for sampling.}
        \Cref{theorem:oracle-strategy} motivates a natural sampling strategy in situations where we have access to side information $(S(i))$ that is known to be a high-fidelity approximation of the true $(f(i))$ values---draw indices proportional to $\pi(i) \times S(i)$. We will refer to this strategy as the $\propMS$ strategy:
        \begin{align}
            q_t(i) = \frac{\pi(i) S(i)}{\sum_{j \in \mc{N}_t} \pi(j) S(j)} ~ .
            \tag*{(\propMS)}
        \end{align}
        Under certain relative accuracy guarantees on the side information, we can characterize the performance achieved by the \propMS strategy as compared to the optimal strategy of~\Cref{theorem:oracle-strategy}, as we state next.  
        \begin{corollary}
            \label{corollary:accurate-side-info}
            Assume that the side information, $(S(i))$, is an accurate prediction of $(f(i))$, i.e., there exists a known parameter $a \in [0,1)$, such that
            \begin{align}
                S(i) / f(i) \in [1 \pm a]
                \label{eq:accurate-side-info-assumption}
            \end{align} for all $i \in [N]$. With the \propMS strategy for $(q_t)$, we can ensure $\expect_{I_t \sim q_t}[B_t(\lambda_t, m)] \geq \expect_{I_t \sim q_t^*}[B_t(\lambda_t, m)]\lp \frac{1}{1+a} \rp^2$, where $(q^*_t)$ is the optimal sampling strategy of \Cref{theorem:oracle-strategy}.
        \end{corollary}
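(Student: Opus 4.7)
The plan is to evaluate both sides of the claimed inequality in closed form and then exploit the accuracy of $S$ to bound their ratio. Since under the oracle strategy $q^*_t$, the importance-weighted value $Z_t = f(I_t)\pi(I_t)/q^*_t(I_t) = \mu_t(m^*)$ is deterministic, the right-hand side simplifies to $\expect_{q^*_t}[B_t(\lambda_t, m)] = \lambda_t(m^* - m) - \lambda_t^2(m^* - m)^2$. Under \propMS, setting $C_t \defined \sum_{i \in \mc{N}_t} \pi(i) S(i)$, we have instead $Z_t = C_t f(I_t)/S(I_t)$, and a short calculation shows that $\expect_{q^{MS}_t}[Z_t] = \mu_t(m^*)$ regardless of $S$. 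Combining this with the bias-variance decomposition $\expect[(Z_t - \mu_t(m))^2] = \var(Z_t) + (m^* - m)^2$ yields the identity
\begin{equation*}
\expect_{q^{MS}_t}[B_t(\lambda_t, m)] = \expect_{q^*_t}[B_t(\lambda_t, m)] - \lambda_t^2 \, \var_{q^{MS}_t}(Z_t),
\end{equation*}
so the claim reduces to a suitable upper bound on $\var_{q^{MS}_t}(Z_t)$.

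Next, I would use the accuracy guarantee $S(i)/f(i) \in [1-a, 1+a]$ to control this variance. The assumption directly gives $C_t \in [(1-a)\mu_t(m^*), (1+a)\mu_t(m^*)]$ and $f(i)/S(i) \in [1/(1+a), 1/(1-a)]$, confining $Z_t$ to a small multiplicative neighborhood of $\mu_t(m^*)$. Expanding $\var_{q^{MS}_t}(Z_t) = C_t \sum_i \pi(i) f(i)^2/S(i) - \mu_t(m^*)^2$ and applying a Kantorovich-type inequality to the product $\bigl(\sum_i \pi(i) S(i)\bigr)\bigl(\sum_i \pi(i) f(i)^2/S(i)\bigr)$, I expect a bound of roughly $\var_{q^{MS}_t}(Z_t) \leq \mu_t(m^*)^2 \, a^2/(1-a^2)$, reflecting that $Z_t$ becomes essentially constant as $a \to 0$.

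The main obstacle is converting this variance estimate into the precise multiplicative factor $(1+a)^{-2}$ appearing in the statement. My preferred route is a pointwise/change-of-measure argument: since the ratio $q^{MS}_t(i)/q^*_t(i) = (S(i)/f(i))(\mu_t(m^*)/C_t)$ lies in $[(1-a)/(1+a), (1+a)/(1-a)]$, one can lower-bound $\expect_{q^{MS}_t}[B_t(\lambda_t, m)]$ pointwise against its (constant) counterpart under $q^*_t$. The delicate step is landing exactly on $(1+a)^{-2}$ rather than a weaker constant such as $(1-a)/(1+a)$; I expect this to fall out of the quadratic structure of $B_t$, combined with the fact that only the second-order (variance) term distinguishes the two expectations, so that the pointwise multiplicative distortion effectively gets squared before appearing as a penalty on $\expect[B_t]$.
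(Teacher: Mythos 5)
Your reduction is the right one, and it is the only route the paper itself suggests: the identity $\mathbb{E}_{I_t\sim q}[B_t(\lambda_t,m)] = \lambda_t(m^*-m) - \lambda_t^2(m^*-m)^2 - \lambda_t^2\,\mathbb{V}_{I_t\sim q}[Z_t]$ from the proof of \Cref{theorem:oracle-strategy} (the paper states this corollary without an explicit proof), together with $\mathbb{V}_{q_t^*}[Z_t]=0$, correctly turns the claim into an upper bound on $\mathbb{V}_{q_t}[Z_t]$ under \propMS. Your Kantorovich computation is also correct: writing $w_i\propto\pi(i)f(i)$ and $X_i=S(i)/f(i)\in[1-a,1+a]$ gives $\mathbb{V}_{q_t}[Z_t]=\mu_t(m^*)^2\lp\mathbb{E}_w[X]\,\mathbb{E}_w[1/X]-1\rp\le \mu_t(m^*)^2\,a^2/(1-a^2)$.

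The gap is exactly the step you defer, and it is not a technicality --- it is where the proof has to happen, and no argument of the kind you sketch can close it. The deficit $\lambda_t^2\,\mathbb{V}_{q_t}[Z_t]$ is measured in units of $\mu_t(m^*)^2$, whereas the benchmark $\mathbb{E}_{q_t^*}[B_t(\lambda_t,m)]=\lambda_t(m^*-m)-\lambda_t^2(m^*-m)^2$ vanishes as $m\to m^*$; at $m=m^*$ the claimed inequality reads $-\lambda_t^2\,\mathbb{V}_{q_t}[Z_t]\ge 0$, which fails whenever $\lambda_t\neq 0$ and $S$ is not exactly proportional to $f$ on $\mc{N}_t$. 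So a uniform-in-$m$ multiplicative guarantee cannot follow from any bound of the form $\mathbb{V}_{q_t}[Z_t]\le c(a)\,\mu_t(m^*)^2$. The change-of-measure fallback does not rescue this: the likelihood-ratio bound $q_t(i)/q_t^*(i)\in[\tfrac{1-a}{1+a},\tfrac{1+a}{1-a}]$ transfers expectations of a \emph{fixed nonnegative} integrand, but $B_t$ takes both signs and $Z_t$ itself changes with the sampling distribution; and even formally, squaring $\tfrac{1-a}{1+a}$ yields $\tfrac{(1-a)^2}{(1+a)^2}$, not $(1+a)^{-2}$. To land on a provable statement you would need either to restrict to $m$ bounded away from $m^*$ (so that $(m^*-m)^2$ dominates the variance penalty), or to reinterpret the corollary as a comparison of second moments, e.g.\ $\mathbb{E}_{q_t}[Z_t^2]\le\tfrac{1}{1-a^2}\,\mathbb{E}_{q_t^*}[Z_t^2]$, which your Kantorovich bound does deliver cleanly.
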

        In many cases, we do not have the accuracy guarantees on the side information required by~\Cref{corollary:accurate-side-info}, and we develop an approach to properly incorporate such side information in the next section. 

\section{Using possibly inaccurate side information}
\label{sec:side information}
    Often, we do not have a uniform guarantee on accuracy on $(S(i))$ as we assumed in the previous section. In such cases, we cannot continue to use the \propMS strategy, as it requires 
    knowledge of the range of $f(i)/S(i)$ 
    in order to select the betting fractions 
    that ensure non-negativity of the process $(W_t(m))$. 
    Nevertheless, we develop new techniques in this section 
    that can exploit the side information 
    without the uniform accuracy guarantees, 
    provided that the side information 
    is correlated with the unknown $(f(i))$ values. 
    In particular, the method developed in this section 
    for incorporating the side information 
    is orthogonal to the choice of the sampling strategy; 
    and thus, it can be combined with any sampling strategy 
    that ensures the non-negativity of the process $(W_t(m))$.  
    
    Our approach is based on the idea of control variates~\citep[\S~V.2]{asmussen2007stochastic} that are used to reduce the variance of Monte Carlo~(MC) estimates of an unknown quantity, using some correlated side information whose expected value is known.  More specifically, let $\mhat$ denote an unbiased estimate of an unknown parameter $m$, and let $\vhat$ denote another (possibly correlated to $\mhat$) statistic with zero mean. Then, the new statistic, $\mhat_\beta = \mhat + \beta \vhat$ is also an unbiased estimate of $m$, for all $\beta \in \mathbb{R}$. Furthermore, it is easy to check that $\var(\mhat_\beta) = \var(\mhat) + \beta^2\var(\vhat) + 2 \beta \cov(\mhat, \vhat)$, which implies that the variance of this new estimate is minimized at  $\beta=\beta^* \defined - \lp \cov(\mhat, \vhat) / \var(\vhat) \rp$. Finally, note that the variance of $\mhat_{\beta^*}$ cannot be larger than the variance of the original estimate $\mhat$, since $\var(\mhat_{\beta^*}) \leq \var(\mhat_0) = \var(\mhat)$ by the definition of $\beta^*$.

    Returning to our problem, given some possibly inaccurate side information $(S(i))$,
     define the control variate~(that is, an analog of the term $\vhat$)  as 
    \begin{align}
        U_t \defined S(I_t) - \mathbb{E}_{I' \sim q_t}[S(I')], 
    \end{align}
    and let $(\beta_t)$ denote a sequence of predictable terms taking values in $[-1, 1]$ used to weigh the effect of $(U_t)$. Note that, similar to $\vhat$, the term  $U_t$ has zero mean for each $t \in [N]$. We now define the wealth process with control variates, denoted by $(\Wtilde_t(m))$, and its corresponding CS as follows: 
     \begin{align}
         \Wtilde_t(m) &\defined \prod_{t=1}^n \lp 1 + \lambda_t(m) (Z_t + \beta_t U_t - \mu_t(m)) \rp,\\
         \mc{C}_t &= \{m \in [0,1]: \Wtilde_n(m) < 1/\alpha \},
        \label{def:CS-control-variates}
     \end{align}
     where $(\lambda_t(m))$ is a betting strategy for each $m \in [0, 1]$.%
     \begin{theorem}
        For any set of side information $(S(i))$, sequence $(\beta_t)$, sampling strategy $(q_t)$, and betting strategies $(\lambda_t(m))$, $(\mc{C}_t)$ as defined in \eqref{def:CS-control-variates} is an $(1 - \delta)$-CS for $m^*$. Consequently, the procedure that produces $\mc{C}_\tau$ is an $\rlfa$.
        \label{thm:SideCSthm}
    \end{theorem}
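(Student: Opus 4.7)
The strategy is to mimic the argument that justified the CS in Theorem~\ref{thm:CSthm}: show that $(\Wtilde_t(m^*))_{t \in [N]}$ is a nonnegative martingale with respect to $(\filtration_t)$ starting at $1$, and then invoke Ville's inequality (as used in Proposition~\ref{prop:type-I}). Once this is done, the CS and $\rlfa$ claims follow for free by exactly the same reasoning as before.

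The first step is to compute $\mathbb{E}[Z_t \mid \filtration_{t-1}]$. Since $I_t \sim q_t$ and $q_t$ is $\filtration_{t-1}$-measurable, the importance-weighting trick built into $Z_t = f(I_t)\pi(I_t)/q_t(I_t)$ yields $\mathbb{E}[Z_t \mid \filtration_{t-1}] = \sum_{i \in \mc{N}_t} \pi(i) f(i)$, which is exactly $m^* - \sum_{j<t} \pi(I_j)f(I_j) = \mu_t(m^*)$ because the full weighted mean splits into the already-sampled part and the remaining indices in $\mc{N}_t$. This is the same identity that made the original wealth process a martingale at $m=m^*$.

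The second step is to observe that $U_t$ is designed to be a zero-mean control variate: by construction $\mathbb{E}[U_t \mid \filtration_{t-1}] = \mathbb{E}_{I_t \sim q_t}[S(I_t)] - \mathbb{E}_{I' \sim q_t}[S(I')] = 0$. Combining this with step one, and using that both $\beta_t$ and $\lambda_t(m^*)$ are $\filtration_{t-1}$-measurable, we get
\[
\mathbb{E}\bigl[\lambda_t(m^*)(Z_t + \beta_t U_t - \mu_t(m^*)) \,\big|\, \filtration_{t-1}\bigr] = 0,
\]
so that $\mathbb{E}[\Wtilde_t(m^*) \mid \filtration_{t-1}] = \Wtilde_{t-1}(m^*)$. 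Nonnegativity of $(\Wtilde_t(m^*))$ is guaranteed by the implicit constraint on the betting strategy $(\lambda_t(m))$, namely that $\lambda_t(m)$ is chosen so each factor $1 + \lambda_t(m)(Z_t + \beta_t U_t - \mu_t(m))$ is nonnegative; this is the direct analogue of the support constraint used in \eqref{eq:wealth-process-0}.

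With $(\Wtilde_t(m^*))$ established as a nonnegative martingale with $\Wtilde_0(m^*)=1$, Ville's inequality gives $\mathbb{P}(\exists t: \Wtilde_t(m^*) \geq 1/\delta) \leq \delta$, which is equivalent to $\mathbb{P}(\exists t: m^* \notin \mc{C}_t) \leq \delta$ via the definition in \eqref{def:CS-control-variates}. Hence $(\mc{C}_t)$ is a $(1-\delta)$-CS, and for the stopping time $\tau$ defined in \eqref{eq:TauDef}, the interval $\mc{C}_\tau$ has width at most $\varepsilon$ by construction while containing $m^*$ with probability at least $1-\delta$, which is exactly the $\rlfa$ property. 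There is no real obstacle here; the only substantive ingredient beyond the proof of Theorem~\ref{thm:CSthm} is the zero-mean property of $U_t$, which holds by design, so the introduction of the control variate and predictable $\beta_t$ is ``free'' from the martingale perspective.
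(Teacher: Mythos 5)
Your proposal is correct and follows essentially the same route the paper takes (the paper only sketches it): establish that $Z_t + \beta_t U_t - \mu_t(m^*)$ is conditionally zero-mean using the importance-weighting identity $\expect[Z_t \mid \filtration_{t-1}] = \mu_t(m^*)$ and the zero-mean construction of $U_t$, conclude that $(\Wtilde_t(m^*))$ is a nonnegative test martingale, and apply Ville's inequality as in Proposition~\ref{prop:type-I}. The only point worth stating slightly more explicitly is that the admissible range of $\lambda_t(m)$ must be computed from the support of $Z_t + \beta_t U_t - \mu_t(m)$ (not of $Z_t - \mu_t(m)$ alone) to guarantee nonnegativity, which you correctly flag as the analogue of the constraint in \eqref{eq:wealth-process-0}.
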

    The discussion above suggests that by a suitable choice of the parameters $(\beta_t)$, we can reduce the variance of the first term. To see why this is desirable, recall that the optimal value  of the approximate growth rate after $n$ steps of the new wealth process satisfies the following:
    \begin{gather}
        \begin{aligned}
            \widetilde{B}_n(\lambda, m)\coloneqq &\lambda (Z_t + \beta_t U_t  - \mu_t(m))\\
            &- \lambda^2 (Z_t + \beta_tU_t - \mu_t(m))^2, %
        \end{aligned}\\
            \max_{\lambda} \widetilde{B}_n(\lambda, m)  \propto \frac{\sum_{t = 1}^n Z_t + \beta_tU_t - \mu_t(m)}{\sum_{t = 1}^n (Z_t + \beta_tU_t - \mu_t(m))^2}.
        \label{eq:control-variates-exponent-1}
    \end{gather}
    Note that by setting $\beta_t=0$ for all $t \in [n]$, we recover $\widetilde{B}_n(\lambda, m) = B_n(\lambda, m)$, i.e., the  wealth lower bound with no side information. Next, we observe that $\sum_{t=1}^n \beta_t U_t$ concentrates strongly around its mean~($0$).
    \begin{proposition}
        \label{prop:control-variates-1}
        For any $\delta \in (0,1)$ and sequence $(\beta_t)$, the following statement is simultaneously true for all $n \in [N]$ with probability at least $1-\delta$
        \begin{align}
            \left \lvert \frac{1}{n}\sum_{t = 1}^n \beta_tU_t \right \rvert = \mc{O}\lp \sqrt{ \frac{ \log(\log n /\delta)}{n} } \rp.
        \end{align}
    \end{proposition}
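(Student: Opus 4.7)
The plan is to recognize that $(\beta_t U_t)_{t \in [N]}$ is a bounded martingale difference sequence with respect to the filtration $(\mathcal{F}_t)$, and then invoke a standard time-uniform (law-of-the-iterated-logarithm style) concentration bound for the partial sums.

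First I would verify the martingale-difference property. By \Cref{def:sampling-strategy}, $q_t$ is $\mathcal{F}_{t-1}$-measurable, and by assumption $\beta_t$ is also predictable. Since $I_t \sim q_t$ conditional on $\mathcal{F}_{t-1}$, we get
\begin{align}
\mathbb{E}[U_t \mid \mathcal{F}_{t-1}] \;=\; \mathbb{E}_{I_t \sim q_t}[S(I_t)] - \mathbb{E}_{I' \sim q_t}[S(I')] \;=\; 0,
\end{align}
so $\mathbb{E}[\beta_t U_t \mid \mathcal{F}_{t-1}] = \beta_t \cdot 0 = 0$. Moreover, since $S(i) \in [0,1]$ and $\beta_t \in [-1,1]$, the increments satisfy $|\beta_t U_t| \leq 1$ almost surely, so we have a uniformly bounded martingale difference sequence.

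Next I would apply a time-uniform Azuma/Hoeffding-type concentration inequality. The cleanest route is to construct an exponential supermartingale of the form $\exp(\lambda M_n - \lambda^2 n / 2)$ (where $M_n = \sum_{t=1}^n \beta_t U_t$), combine it with a mixture over $\lambda$ or a ``stitching'' argument over geometrically-spaced epochs, and apply Ville's inequality (\Cref{fact:ville}). This produces a uniform bound of the form
\begin{align}
\mathbb{P}\!\left(\exists n \geq 1 : |M_n| \geq C\sqrt{n \log(\log n / \delta)}\right) \;\leq\; \delta,
\end{align}
for a universal constant $C$; this is precisely the boundary used by \citet{howard2021time} for sub-Gaussian-like processes with variance proxy growing linearly in $n$. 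Dividing by $n$ yields the claimed $\mathcal{O}(\sqrt{\log(\log n / \delta)/n})$ rate, holding simultaneously for all $n \in [N]$ with probability at least $1 - \delta$.

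The main obstacle is essentially bookkeeping: choosing the right off-the-shelf time-uniform boundary. A naive approach of applying Azuma–Hoeffding at each $n$ and union bounding would cost an extra $\log n$ factor rather than $\log \log n$, so one must either cite \citet{howard2021time} directly or implement the stitching/mixture argument explicitly. Since the statement is only an asymptotic big-$\mathcal{O}$ bound, no delicate constant tracking is needed; the only substantive content beyond the martingale verification is the choice of a sufficiently tight time-uniform bound with a $\sqrt{\log \log n}$ iterated-logarithm dependence.
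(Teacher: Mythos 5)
Your proposal is correct and follows essentially the same route as the paper: verify that $(\beta_t U_t)$ is a bounded martingale difference sequence (using predictability of $\beta_t$ and $q_t$, and $\mathbb{E}[U_t \mid \filtration_{t-1}]=0$), then invoke the time-uniform bounded-difference martingale inequality of \citet{howard2021time} — precisely the stitched sub-Gaussian boundary you describe — and divide by $n$. Your remark that a naive Azuma-plus-union-bound argument would only give $\log n$ rather than $\log\log n$ correctly identifies why the off-the-shelf time-uniform boundary is the right tool here.
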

    This result, proved in~\Cref{proof:control-variates-1}, implies that in order to select the parameters $(\beta_t)$, we can focus on its effect on the second order term in the denominator. In particular, the best value of $\beta$ for the first $n$ observations, is the one that minimizes the denominator, and can be defined as follows:
    \begin{align}
        \beta^*_n &\defined  \underset{\beta \in [-1,1]}{\argmin} \; \sum_{t=1}^n {(Z_t - \mu_t(m) + \beta U_t)^2}
                  \propto-\frac{\sum_{t = 1}^n(Z_t - \mu_t(m))U_t}{\sum_{t = 1}^nU_t^2}.
    \end{align}
    The numerator of $\beta_n^*$ varies with $\sum_{t = 1}^n f(I_t)S(I_t)$---hence, the magnitude of $\beta_t$ increases with the amount of correlation between $f(i)$ and $S(i)
    $.
    Since $\beta^*_n$ is not predictable (it is $\mc{F}_n$ instead of $\mc{F}_{n-1}$ measurable), we will use the following strategy of approximating $\beta_n^*$ at each $n \in [N]$:
    \begin{align}
        \beta_n \propto -\frac{\sum_{t = 1}^{n - 1}(Z_t - \mu_t(m))U_t}{\sum_{t = 1}^{n - 1}U_t^2}.
    \end{align} for $n \geq 2$ and we let $\beta_1 = 0$. This provides a principled way of incorporating side information even when the relationship between the side information and the ground truth is unclear.

    \begin{remark}
        \label{remark:risk-vs-correlation}
        Our work is motivated by applications where the side-information is generated by an ML model trained on historical transaction data. In practice, ML models are trained via empirical risk minimization, and we expect that models with lower risk should result in side-information with higher correlation. For some simple cases, such as least-squares linear regressors, we can obtain a precise relation between correlation and risk: $\rho^2 = 1- MSE/S_y$, where $S_y$ is the empirical variance of the target variable $y$ used in training the model.  Characterizing this relation for more general models is left for future work. 
    \end{remark}
\section{Experiments}
\label{sec:experiments}
    We conduct simulations of our RLFA methods on a variety of scenarios for $\pi$ and $f$. For each simulation setup, we choose two positive integers $\Nlarge$ and $\Nsmall$ such that $\Nlarge+\Nsmall=N$. We generate the weight distribution $\pi$, consisting of $\Nlarge$ `large' values and $\Nsmall$ `small' values. The exact range of values taken by these terms are varied across experiments, but on an average the ratio of `large' over `small' $\pi$ values lie between $10$ and $10^3$. We then generate the $f$ values in one of two ways: (1) $f \propto \pi$, where indices with where large $\pi$ values take $f$ values in $[0.4, 0.5]$ and small $\pi$ values take on $f$ values in $[0.001, 0.01]$, or (2) $f \propto 1/\pi$, where the $f$ value ranges are swapped for large and small values. The simulations in this section focus on the different sampling strategies as well as the efficacy of control variates --- we provide additional experiments comparing the betting CS with other types of CS in \Cref{sec:HoefEBExperiments}.
    \begin{figure}[htb!]
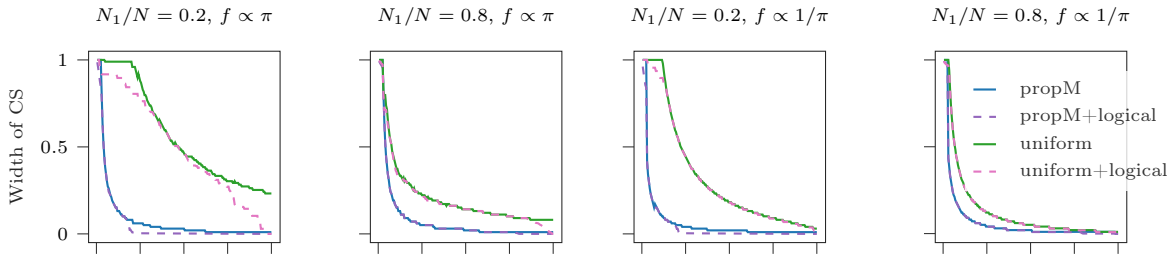

        \centering
            \def\figwidth{0.25\hsize}
            \def\figheight{0.25\hsize} %
            \begin{tabular}{cccc}
            \input{Figures/NoSideInfoCSf_propto_M_large_40} &
            \input{Figures/NoSideInfoCSf_propto_M_large_160} & 
            \input{Figures/NoSideInfoCSf_inv_propto_M_large_40} &
            \input{Figures/NoSideInfoCSf_inv_propto_M_large_160}
            \end{tabular}
            \caption{A comparison of \propM vs. uniform sampling distributions, and the impact of intersecting with the logical CS (\Cref{subsec:logical-CS}) on the width of the CSs, when $\delta = 0.05$. The \propM strategy produces tighter CSs that results, and intersecting with the logical CS further reduces the width, particularly when few transactions are large ($\Nlarge = 0.2$).}
            \label{fig:no-side-info-CS}
    \end{figure}

     \begin{figure}[htb!]
        \centering
            \def\figwidth{0.25\hsize}
            \def\figheight{0.25\hsize} %
            \begin{tabular}{cccc}
             \begin{tikzpicture}

\definecolor{darkslategray38}{RGB}{38,38,38}
\definecolor{forestgreen4416044}{RGB}{44,160,44}
\definecolor{lightgray204}{RGB}{204,204,204}
\definecolor{mediumpurple148103189}{RGB}{148,103,189}
\definecolor{orchid227119194}{RGB}{227,119,194}
\definecolor{steelblue31119180}{RGB}{31,119,180}

\tikzstyle{every node}=[font=\scriptsize]
\begin{axis}[
axis line style={darkslategray38},
height=\figheight,
legend cell align={left},
legend style={
  fill opacity=0.2,
  draw opacity=1,
  text opacity=1,
  at={(0.08,0.97)},
  anchor=north west,
  draw=none
},
tick align=outside,
tick pos=left,
title={$N_1/N=0.2$,    \(\displaystyle f \propto \pi\)},
width=\figwidth,
x grid style={lightgray204},
xmin=25.675, xmax=208.825,
xtick style={color=darkslategray38},
y grid style={lightgray204},
ylabel=\textcolor{darkslategray38}{Density},
ymin=0, ymax=0.5,
ytick style={color=darkslategray38}, 
yticklabels={,,}
]
\draw[draw=none,fill=steelblue31119180,fill opacity=0.7] (axis cs:53,0) rectangle (axis cs:59.1,0.101311475409836);
\addlegendimage{ybar,ybar legend,draw=none,fill=steelblue31119180,fill opacity=0.7}
\addlegendentry{propM}

\draw[draw=none,fill=steelblue31119180,fill opacity=0.7] (axis cs:59.1,0) rectangle (axis cs:65.2,0.0554098360655738);
\draw[draw=none,fill=steelblue31119180,fill opacity=0.7] (axis cs:65.2,0) rectangle (axis cs:71.3,0.00360655737704918);
\draw[draw=none,fill=steelblue31119180,fill opacity=0.7] (axis cs:71.3,0) rectangle (axis cs:77.4,0.00163934426229508);
\draw[draw=none,fill=steelblue31119180,fill opacity=0.7] (axis cs:77.4,0) rectangle (axis cs:83.5,0.00131147540983607);
\draw[draw=none,fill=steelblue31119180,fill opacity=0.7] (axis cs:83.5,0) rectangle (axis cs:89.6,0.000327868852459017);
\draw[draw=none,fill=steelblue31119180,fill opacity=0.7] (axis cs:89.6,0) rectangle (axis cs:95.7,0);
\draw[draw=none,fill=steelblue31119180,fill opacity=0.7] (axis cs:95.7,0) rectangle (axis cs:101.8,0);
\draw[draw=none,fill=steelblue31119180,fill opacity=0.7] (axis cs:101.8,0) rectangle (axis cs:107.9,0);
\draw[draw=none,fill=steelblue31119180,fill opacity=0.7] (axis cs:107.9,0) rectangle (axis cs:114,0.000327868852459017);
\draw[draw=none,fill=mediumpurple148103189,fill opacity=0.7] (axis cs:34,0) rectangle (axis cs:34.7,0.0285714285714285);
\addlegendimage{ybar,ybar legend,draw=none,fill=mediumpurple148103189,fill opacity=0.7}
\addlegendentry{propM+logical}

\draw[draw=none,fill=mediumpurple148103189,fill opacity=0.7] (axis cs:34.7,0) rectangle (axis cs:35.4,0.111428571428572);
\draw[draw=none,fill=mediumpurple148103189,fill opacity=0.7] (axis cs:35.4,0) rectangle (axis cs:36.1,0.305714285714284);
\draw[draw=none,fill=mediumpurple148103189,fill opacity=0.7] (axis cs:36.1,0) rectangle (axis cs:36.8,0);
\draw[draw=none,fill=mediumpurple148103189,fill opacity=0.7] (axis cs:36.8,0) rectangle (axis cs:37.5,0.457142857142855);
\draw[draw=none,fill=mediumpurple148103189,fill opacity=0.7] (axis cs:37.5,0) rectangle (axis cs:38.2,0.322857142857142);
\draw[draw=none,fill=mediumpurple148103189,fill opacity=0.7] (axis cs:38.2,0) rectangle (axis cs:38.9,0);
\draw[draw=none,fill=mediumpurple148103189,fill opacity=0.7] (axis cs:38.9,0) rectangle (axis cs:39.6,0.157142857142857);
\draw[draw=none,fill=mediumpurple148103189,fill opacity=0.7] (axis cs:39.6,0) rectangle (axis cs:40.3,0.0428571428571431);
\draw[draw=none,fill=mediumpurple148103189,fill opacity=0.7] (axis cs:40.3,0) rectangle (axis cs:41,0.00285714285714285);
\draw[draw=none,fill=forestgreen4416044,fill opacity=0.7] (axis cs:199.5,0) rectangle (axis cs:199.6,0);
\addlegendimage{ybar,ybar legend,draw=none,fill=forestgreen4416044,fill opacity=0.7}
\addlegendentry{uniform}

\draw[draw=none,fill=forestgreen4416044,fill opacity=0.7] (axis cs:196.6,0) rectangle (axis cs:199.7,0);
\draw[draw=none,fill=forestgreen4416044,fill opacity=0.7] (axis cs:199.7,0) rectangle (axis cs:199.8,0);
\draw[draw=none,fill=forestgreen4416044,fill opacity=0.7] (axis cs:199.8,0) rectangle (axis cs:199.9,0);
\draw[draw=none,fill=forestgreen4416044,fill opacity=0.7] (axis cs:199.9,0) rectangle (axis cs:200,0);
\draw[draw=none,fill=forestgreen4416044,fill opacity=0.7] (axis cs:200.1,0) rectangle (axis cs:200.2,0);
\draw[draw=none,fill=forestgreen4416044,fill opacity=0.7] (axis cs:200.2,0) rectangle (axis cs:200.3,0);
\draw[draw=none,fill=forestgreen4416044,fill opacity=0.7] (axis cs:200.3,0) rectangle (axis cs:200.4,0);
\draw[draw=none,fill=forestgreen4416044,fill opacity=0.7] (axis cs:200.4,0) rectangle (axis cs:200.5,0);
\draw[draw=none,fill=orchid227119194,fill opacity=0.7] (axis cs:127,0) rectangle (axis cs:134.2,0.000277777777777778);
\addlegendimage{ybar,ybar legend,draw=none,fill=orchid227119194,fill opacity=0.7}
\addlegendentry{uniform+logical}

\draw[draw=none,fill=orchid227119194,fill opacity=0.7] (axis cs:134.2,0) rectangle (axis cs:141.4,0.000277777777777777);
\draw[draw=none,fill=orchid227119194,fill opacity=0.7] (axis cs:141.4,0) rectangle (axis cs:148.6,0);
\draw[draw=none,fill=orchid227119194,fill opacity=0.7] (axis cs:148.6,0) rectangle (axis cs:155.8,0.000277777777777777);
\draw[draw=none,fill=orchid227119194,fill opacity=0.7] (axis cs:155.8,0) rectangle (axis cs:163,0.000833333333333335);
\draw[draw=none,fill=orchid227119194,fill opacity=0.7] (axis cs:163,0) rectangle (axis cs:170.2,0.00333333333333334);
\draw[draw=none,fill=orchid227119194,fill opacity=0.7] (axis cs:170.2,0) rectangle (axis cs:177.4,0.00777777777777776);
\draw[draw=none,fill=orchid227119194,fill opacity=0.7] (axis cs:177.4,0) rectangle (axis cs:184.6,0.0247222222222223);
\draw[draw=none,fill=orchid227119194,fill opacity=0.7] (axis cs:184.6,0) rectangle (axis cs:191.8,0.0433333333333332);
\draw[draw=none,fill=orchid227119194,fill opacity=0.7] (axis cs:191.8,0) rectangle (axis cs:199,0.0580555555555557);
\legend{}
\end{axis}

\end{tikzpicture} &
             \begin{tikzpicture}

\definecolor{darkslategray38}{RGB}{38,38,38}
\definecolor{forestgreen4416044}{RGB}{44,160,44}
\definecolor{lightgray204}{RGB}{204,204,204}
\definecolor{mediumpurple148103189}{RGB}{148,103,189}
\definecolor{orchid227119194}{RGB}{227,119,194}
\definecolor{steelblue31119180}{RGB}{31,119,180}

\tikzstyle{every node}=[font=\scriptsize]
\begin{axis}[
axis line style={darkslategray38},
height=\figheight,
legend cell align={left},
legend style={
  fill opacity=0.8,
  draw opacity=1,
  text opacity=1,
  at={(0.03,0.97)},
  anchor=north west,
  draw=none
},
tick align=outside,
tick pos=left,
title={
$N_1/N=0.8$,    \(\displaystyle f \propto \pi\)},
width=\figwidth,
x grid style={lightgray204},
xmin=45.625, xmax=207.875,
xtick style={color=darkslategray38},
y grid style={lightgray204},
ylabel=\textcolor{darkslategray38}{},
ymin=0, ymax=0.2,
ytick style={color=darkslategray38}, 
yticklabels={,,}
]
\draw[draw=none,fill=steelblue31119180,fill opacity=0.7] (axis cs:53,0) rectangle (axis cs:55.8,0.0657142857142858);
\addlegendimage{ybar,ybar legend,draw=none,fill=steelblue31119180,fill opacity=0.7}
\addlegendentry{propM}

\draw[draw=none,fill=steelblue31119180,fill opacity=0.7] (axis cs:55.8,0) rectangle (axis cs:58.6,0.0985714285714284);
\draw[draw=none,fill=steelblue31119180,fill opacity=0.7] (axis cs:58.6,0) rectangle (axis cs:61.4,0.0907142857142858);
\draw[draw=none,fill=steelblue31119180,fill opacity=0.7] (axis cs:61.4,0) rectangle (axis cs:64.2,0.0849999999999999);
\draw[draw=none,fill=steelblue31119180,fill opacity=0.7] (axis cs:64.2,0) rectangle (axis cs:67,0.0128571428571429);
\draw[draw=none,fill=steelblue31119180,fill opacity=0.7] (axis cs:67,0) rectangle (axis cs:69.8,0.00357142857142858);
\draw[draw=none,fill=steelblue31119180,fill opacity=0.7] (axis cs:69.8,0) rectangle (axis cs:72.6,0);
\draw[draw=none,fill=steelblue31119180,fill opacity=0.7] (axis cs:72.6,0) rectangle (axis cs:75.4,0);
\draw[draw=none,fill=steelblue31119180,fill opacity=0.7] (axis cs:75.4,0) rectangle (axis cs:78.2,0);
\draw[draw=none,fill=steelblue31119180,fill opacity=0.7] (axis cs:78.2,0) rectangle (axis cs:81,0.000714285714285715);
\draw[draw=none,fill=mediumpurple148103189,fill opacity=0.7] (axis cs:53,0) rectangle (axis cs:55.8,0.0657142857142858);
\addlegendimage{ybar,ybar legend,draw=none,fill=mediumpurple148103189,fill opacity=0.7}
\addlegendentry{propM+logical}

\draw[draw=none,fill=mediumpurple148103189,fill opacity=0.7] (axis cs:55.8,0) rectangle (axis cs:58.6,0.0992857142857141);
\draw[draw=none,fill=mediumpurple148103189,fill opacity=0.7] (axis cs:58.6,0) rectangle (axis cs:61.4,0.0900000000000001);
\draw[draw=none,fill=mediumpurple148103189,fill opacity=0.7] (axis cs:61.4,0) rectangle (axis cs:64.2,0.0849999999999999);
\draw[draw=none,fill=mediumpurple148103189,fill opacity=0.7] (axis cs:64.2,0) rectangle (axis cs:67,0.0128571428571429);
\draw[draw=none,fill=mediumpurple148103189,fill opacity=0.7] (axis cs:67,0) rectangle (axis cs:69.8,0.00357142857142858);
\draw[draw=none,fill=mediumpurple148103189,fill opacity=0.7] (axis cs:69.8,0) rectangle (axis cs:72.6,0);
\draw[draw=none,fill=mediumpurple148103189,fill opacity=0.7] (axis cs:72.6,0) rectangle (axis cs:75.4,0);
\draw[draw=none,fill=mediumpurple148103189,fill opacity=0.7] (axis cs:75.4,0) rectangle (axis cs:78.2,0);
\draw[draw=none,fill=mediumpurple148103189,fill opacity=0.7] (axis cs:78.2,0) rectangle (axis cs:81,0.000714285714285715);
\draw[draw=none,fill=forestgreen4416044,fill opacity=0.7] (axis cs:199.5,0) rectangle (axis cs:199.6,0);
\addlegendimage{ybar,ybar legend,draw=none,fill=forestgreen4416044,fill opacity=0.7}
\addlegendentry{uniform}

\draw[draw=none,fill=forestgreen4416044,fill opacity=0.7] (axis cs:189.6,0) rectangle (axis cs:199.7,0);
\draw[draw=none,fill=forestgreen4416044,fill opacity=0.7] (axis cs:199.7,0) rectangle (axis cs:199.8,0);
\draw[draw=none,fill=forestgreen4416044,fill opacity=0.7] (axis cs:199.8,0) rectangle (axis cs:199.9,0);
\draw[draw=none,fill=forestgreen4416044,fill opacity=0.7] (axis cs:199.9,0) rectangle (axis cs:200,0);
\draw[draw=none,fill=forestgreen4416044,fill opacity=0.7] (axis cs:200.1,0) rectangle (axis cs:200.2,0);
\draw[draw=none,fill=forestgreen4416044,fill opacity=0.7] (axis cs:200.2,0) rectangle (axis cs:200.3,0);
\draw[draw=none,fill=forestgreen4416044,fill opacity=0.7] (axis cs:200.3,0) rectangle (axis cs:200.4,0);
\draw[draw=none,fill=forestgreen4416044,fill opacity=0.7] (axis cs:200.4,0) rectangle (axis cs:200.5,0);
\draw[draw=none,fill=orchid227119194,fill opacity=0.7] (axis cs:90,0) rectangle (axis cs:100.5,0.00019047619047619);
\addlegendimage{ybar,ybar legend,draw=none,fill=orchid227119194,fill opacity=0.7}
\addlegendentry{uniform+logical}

\draw[draw=none,fill=orchid227119194,fill opacity=0.7] (axis cs:100.5,0) rectangle (axis cs:111,0);
\draw[draw=none,fill=orchid227119194,fill opacity=0.7] (axis cs:111,0) rectangle (axis cs:121.5,0);
\draw[draw=none,fill=orchid227119194,fill opacity=0.7] (axis cs:121.5,0) rectangle (axis cs:132,0.000380952380952381);
\draw[draw=none,fill=orchid227119194,fill opacity=0.7] (axis cs:132,0) rectangle (axis cs:142.5,0.00019047619047619);
\draw[draw=none,fill=orchid227119194,fill opacity=0.7] (axis cs:142.5,0) rectangle (axis cs:153,0.00019047619047619);
\draw[draw=none,fill=orchid227119194,fill opacity=0.7] (axis cs:153,0) rectangle (axis cs:163.5,0.000571428571428571);
\draw[draw=none,fill=orchid227119194,fill opacity=0.7] (axis cs:163.5,0) rectangle (axis cs:174,0.000761904761904762);
\draw[draw=none,fill=orchid227119194,fill opacity=0.7] (axis cs:174,0) rectangle (axis cs:184.5,0.00819047619047619);
\draw[draw=none,fill=orchid227119194,fill opacity=0.7] (axis cs:184.5,0) rectangle (axis cs:195,0.0847619047619048);
\legend{};
\end{axis}
\end{tikzpicture} &
             \begin{tikzpicture}

\definecolor{darkslategray38}{RGB}{38,38,38}
\definecolor{forestgreen4416044}{RGB}{44,160,44}
\definecolor{lightgray204}{RGB}{204,204,204}
\definecolor{mediumpurple148103189}{RGB}{148,103,189}
\definecolor{orchid227119194}{RGB}{227,119,194}
\definecolor{steelblue31119180}{RGB}{31,119,180}

\tikzstyle{every node}=[font=\scriptsize]
\begin{axis}[
axis line style={darkslategray38},
height=\figheight,
legend cell align={left},
legend style={fill opacity=0.8, draw opacity=1, text opacity=1, draw=none},
tick align=outside,
tick pos=left,
title={
$N_1/N=0.2$,    \(\displaystyle f \propto 1/\pi\)},
width=\figwidth,
x grid style={lightgray204},
xmin=22.8, xmax=203.2,
xtick style={color=darkslategray38},
y grid style={lightgray204},
ymin=0, ymax=0.6,
ytick style={color=darkslategray38}, 
yticklabels={,,}
]
\draw[draw=none,fill=steelblue31119180,fill opacity=0.7] (axis cs:36,0) rectangle (axis cs:39,0.222666666666667);
\addlegendimage{ybar,ybar legend,draw=none,fill=steelblue31119180,fill opacity=0.7}
\addlegendentry{propM}

\draw[draw=none,fill=steelblue31119180,fill opacity=0.7] (axis cs:39,0) rectangle (axis cs:42,0.0586666666666667);
\draw[draw=none,fill=steelblue31119180,fill opacity=0.7] (axis cs:42,0) rectangle (axis cs:45,0.0233333333333333);
\draw[draw=none,fill=steelblue31119180,fill opacity=0.7] (axis cs:45,0) rectangle (axis cs:48,0.0126666666666667);
\draw[draw=none,fill=steelblue31119180,fill opacity=0.7] (axis cs:48,0) rectangle (axis cs:51,0.008);
\draw[draw=none,fill=steelblue31119180,fill opacity=0.7] (axis cs:51,0) rectangle (axis cs:54,0.00266666666666667);
\draw[draw=none,fill=steelblue31119180,fill opacity=0.7] (axis cs:54,0) rectangle (axis cs:57,0.00266666666666667);
\draw[draw=none,fill=steelblue31119180,fill opacity=0.7] (axis cs:57,0) rectangle (axis cs:60,0.00133333333333333);
\draw[draw=none,fill=steelblue31119180,fill opacity=0.7] (axis cs:60,0) rectangle (axis cs:63,0.000666666666666667);
\draw[draw=none,fill=steelblue31119180,fill opacity=0.7] (axis cs:63,0) rectangle (axis cs:66,0.000666666666666667);
\draw[draw=none,fill=mediumpurple148103189,fill opacity=0.7] (axis cs:31,0) rectangle (axis cs:32,0.024);
\addlegendimage{ybar,ybar legend,draw=none,fill=mediumpurple148103189,fill opacity=0.7}
\addlegendentry{propM+logical}

\draw[draw=none,fill=mediumpurple148103189,fill opacity=0.7] (axis cs:32,0) rectangle (axis cs:33,0.554);
\draw[draw=none,fill=mediumpurple148103189,fill opacity=0.7] (axis cs:33,0) rectangle (axis cs:34,0.174);
\draw[draw=none,fill=mediumpurple148103189,fill opacity=0.7] (axis cs:34,0) rectangle (axis cs:35,0.066);
\draw[draw=none,fill=mediumpurple148103189,fill opacity=0.7] (axis cs:35,0) rectangle (axis cs:36,0.036);
\draw[draw=none,fill=mediumpurple148103189,fill opacity=0.7] (axis cs:36,0) rectangle (axis cs:37,0.038);
\draw[draw=none,fill=mediumpurple148103189,fill opacity=0.7] (axis cs:37,0) rectangle (axis cs:38,0.044);
\draw[draw=none,fill=mediumpurple148103189,fill opacity=0.7] (axis cs:38,0) rectangle (axis cs:39,0.036);
\draw[draw=none,fill=mediumpurple148103189,fill opacity=0.7] (axis cs:39,0) rectangle (axis cs:40,0.02);
\draw[draw=none,fill=mediumpurple148103189,fill opacity=0.7] (axis cs:40,0) rectangle (axis cs:41,0.008);
\draw[draw=none,fill=forestgreen4416044,fill opacity=0.7] (axis cs:172,0) rectangle (axis cs:174.3,0.0034782608695652);
\addlegendimage{ybar,ybar legend,draw=none,fill=forestgreen4416044,fill opacity=0.7}
\addlegendentry{uniform}

\draw[draw=none,fill=forestgreen4416044,fill opacity=0.7] (axis cs:174.3,0) rectangle (axis cs:176.6,0.0078260869565218);
\draw[draw=none,fill=forestgreen4416044,fill opacity=0.7] (axis cs:176.6,0) rectangle (axis cs:178.9,0.0104347826086956);
\draw[draw=none,fill=forestgreen4416044,fill opacity=0.7] (axis cs:178.9,0) rectangle (axis cs:181.2,0.0426086956521742);
\draw[draw=none,fill=forestgreen4416044,fill opacity=0.7] (axis cs:181.2,0) rectangle (axis cs:183.5,0.0660869565217388);
\draw[draw=none,fill=forestgreen4416044,fill opacity=0.7] (axis cs:183.5,0) rectangle (axis cs:185.8,0.0913043478260865);
\draw[draw=none,fill=forestgreen4416044,fill opacity=0.7] (axis cs:185.8,0) rectangle (axis cs:188.1,0.142608695652175);
\draw[draw=none,fill=forestgreen4416044,fill opacity=0.7] (axis cs:188.1,0) rectangle (axis cs:190.4,0.043478260869565);
\draw[draw=none,fill=forestgreen4416044,fill opacity=0.7] (axis cs:190.4,0) rectangle (axis cs:192.7,0.0226086956521741);
\draw[draw=none,fill=forestgreen4416044,fill opacity=0.7] (axis cs:192.7,0) rectangle (axis cs:195,0.0043478260869565);
\draw[draw=none,fill=orchid227119194,fill opacity=0.7] (axis cs:159,0) rectangle (axis cs:162,0.00266666666666667);
\addlegendimage{ybar,ybar legend,draw=none,fill=orchid227119194,fill opacity=0.7}
\addlegendentry{uniform+logical}

\draw[draw=none,fill=orchid227119194,fill opacity=0.7] (axis cs:162,0) rectangle (axis cs:165,0.002);
\draw[draw=none,fill=orchid227119194,fill opacity=0.7] (axis cs:165,0) rectangle (axis cs:168,0.00466666666666667);
\draw[draw=none,fill=orchid227119194,fill opacity=0.7] (axis cs:168,0) rectangle (axis cs:171,0.014);
\draw[draw=none,fill=orchid227119194,fill opacity=0.7] (axis cs:171,0) rectangle (axis cs:174,0.034);
\draw[draw=none,fill=orchid227119194,fill opacity=0.7] (axis cs:174,0) rectangle (axis cs:177,0.0733333333333333);
\draw[draw=none,fill=orchid227119194,fill opacity=0.7] (axis cs:177,0) rectangle (axis cs:180,0.102666666666667);
\draw[draw=none,fill=orchid227119194,fill opacity=0.7] (axis cs:180,0) rectangle (axis cs:183,0.0693333333333333);
\draw[draw=none,fill=orchid227119194,fill opacity=0.7] (axis cs:183,0) rectangle (axis cs:186,0.026);
\draw[draw=none,fill=orchid227119194,fill opacity=0.7] (axis cs:186,0) rectangle (axis cs:189,0.00466666666666667);
\legend{};
\end{axis}

\end{tikzpicture} & 
             \begin{tikzpicture}

\definecolor{darkslategray38}{RGB}{38,38,38}
\definecolor{forestgreen4416044}{RGB}{44,160,44}
\definecolor{lightgray204}{RGB}{204,204,204}
\definecolor{mediumpurple148103189}{RGB}{148,103,189}
\definecolor{orchid227119194}{RGB}{227,119,194}
\definecolor{steelblue31119180}{RGB}{31,119,180}

\tikzstyle{every node}=[font=\scriptsize]
\begin{axis}[
axis line style={darkslategray38},
height=\figheight,
legend cell align={left},
legend style={fill opacity=0.8, draw opacity=1, text opacity=1, draw=none, at={(0.25,0.9)}, legend columns=2, anchor=north west},
tick align=outside,
tick pos=left,
title={ $N_1/N=0.8$,    \(\displaystyle f \propto 1/\pi\)},
width=\figwidth,
x grid style={lightgray204},
xmin=37.65, xmax=111.35,
xtick style={color=darkslategray38},
y grid style={lightgray204},
ylabel=\textcolor{darkslategray38}{},
ymin=0, ymax=0.65,
ytick style={color=darkslategray38}, 
yticklabels={,,}
]
\draw[draw=none,fill=steelblue31119180,fill opacity=0.7] (axis cs:48,0) rectangle (axis cs:49.3,0.603076923076924);
\addlegendimage{ybar,ybar legend,draw=none,fill=steelblue31119180,fill opacity=0.7}
\addlegendentry{propM}

\draw[draw=none,fill=steelblue31119180,fill opacity=0.7] (axis cs:49.3,0) rectangle (axis cs:50.6,0.155384615384615);
\draw[draw=none,fill=steelblue31119180,fill opacity=0.7] (axis cs:50.6,0) rectangle (axis cs:51.9,0);
\draw[draw=none,fill=steelblue31119180,fill opacity=0.7] (axis cs:51.9,0) rectangle (axis cs:53.2,0);
\draw[draw=none,fill=steelblue31119180,fill opacity=0.7] (axis cs:53.2,0) rectangle (axis cs:54.5,0);
\draw[draw=none,fill=steelblue31119180,fill opacity=0.7] (axis cs:54.5,0) rectangle (axis cs:55.8,0.00461538461538463);
\draw[draw=none,fill=steelblue31119180,fill opacity=0.7] (axis cs:55.8,0) rectangle (axis cs:57.1,0.0046153846153846);
\draw[draw=none,fill=steelblue31119180,fill opacity=0.7] (axis cs:57.1,0) rectangle (axis cs:58.4,0);
\draw[draw=none,fill=steelblue31119180,fill opacity=0.7] (axis cs:58.4,0) rectangle (axis cs:59.7,0);
\draw[draw=none,fill=steelblue31119180,fill opacity=0.7] (axis cs:59.7,0) rectangle (axis cs:61,0.00153846153846154);
\draw[draw=none,fill=mediumpurple148103189,fill opacity=0.7] (axis cs:41,0) rectangle (axis cs:42.2,0.0833333333333331);
\addlegendimage{ybar,ybar legend,draw=none,fill=mediumpurple148103189,fill opacity=0.7}
\addlegendentry{propM+logical}

\draw[draw=none,fill=mediumpurple148103189,fill opacity=0.7] (axis cs:42.2,0) rectangle (axis cs:43.4,0.0750000000000003);
\draw[draw=none,fill=mediumpurple148103189,fill opacity=0.7] (axis cs:43.4,0) rectangle (axis cs:44.6,0.115);
\draw[draw=none,fill=mediumpurple148103189,fill opacity=0.7] (axis cs:44.6,0) rectangle (axis cs:45.8,0.153333333333334);
\draw[draw=none,fill=mediumpurple148103189,fill opacity=0.7] (axis cs:45.8,0) rectangle (axis cs:47,0.13);
\draw[draw=none,fill=mediumpurple148103189,fill opacity=0.7] (axis cs:47,0) rectangle (axis cs:48.2,0.168333333333333);
\draw[draw=none,fill=mediumpurple148103189,fill opacity=0.7] (axis cs:48.2,0) rectangle (axis cs:49.4,0.0733333333333336);
\draw[draw=none,fill=mediumpurple148103189,fill opacity=0.7] (axis cs:49.4,0) rectangle (axis cs:50.6,0.0333333333333333);
\draw[draw=none,fill=mediumpurple148103189,fill opacity=0.7] (axis cs:50.6,0) rectangle (axis cs:51.8,0);
\draw[draw=none,fill=mediumpurple148103189,fill opacity=0.7] (axis cs:51.8,0) rectangle (axis cs:53,0.00166666666666666);
\draw[draw=none,fill=forestgreen4416044,fill opacity=0.7] (axis cs:92,0) rectangle (axis cs:93.6,0.00500000000000002);
\addlegendimage{ybar,ybar legend,draw=none,fill=forestgreen4416044,fill opacity=0.7}
\addlegendentry{uniform}; 

\draw[draw=none,fill=forestgreen4416044,fill opacity=0.7] (axis cs:93.6,0) rectangle (axis cs:95.2,0.0274999999999999);
\draw[draw=none,fill=forestgreen4416044,fill opacity=0.7] (axis cs:95.2,0) rectangle (axis cs:96.8,0.0312500000000001);
\draw[draw=none,fill=forestgreen4416044,fill opacity=0.7] (axis cs:96.8,0) rectangle (axis cs:98.4,0.134999999999999);
\draw[draw=none,fill=forestgreen4416044,fill opacity=0.7] (axis cs:98.4,0) rectangle (axis cs:100,0.0937500000000003);
\draw[draw=none,fill=forestgreen4416044,fill opacity=0.7] (axis cs:100,0) rectangle (axis cs:101.6,0.161250000000001);
\draw[draw=none,fill=forestgreen4416044,fill opacity=0.7] (axis cs:101.6,0) rectangle (axis cs:103.2,0.108749999999999);
\draw[draw=none,fill=forestgreen4416044,fill opacity=0.7] (axis cs:103.2,0) rectangle (axis cs:104.8,0.0287500000000001);
\draw[draw=none,fill=forestgreen4416044,fill opacity=0.7] (axis cs:104.8,0) rectangle (axis cs:106.4,0.0249999999999999);
\draw[draw=none,fill=forestgreen4416044,fill opacity=0.7] (axis cs:106.4,0) rectangle (axis cs:108,0.00875000000000003);
\draw[draw=none,fill=orchid227119194,fill opacity=0.7] (axis cs:80,0) rectangle (axis cs:81.5,0.00533333333333333);
\addlegendimage{ybar,ybar legend,draw=none,fill=orchid227119194,fill opacity=0.7}
\addlegendentry{uniform+logical}

\draw[draw=none,fill=orchid227119194,fill opacity=0.7] (axis cs:81.5,0) rectangle (axis cs:83,0.00666666666666667);
\draw[draw=none,fill=orchid227119194,fill opacity=0.7] (axis cs:83,0) rectangle (axis cs:84.5,0.0773333333333333);
\draw[draw=none,fill=orchid227119194,fill opacity=0.7] (axis cs:84.5,0) rectangle (axis cs:86,0.0866666666666667);
\draw[draw=none,fill=orchid227119194,fill opacity=0.7] (axis cs:86,0) rectangle (axis cs:87.5,0.174666666666667);
\draw[draw=none,fill=orchid227119194,fill opacity=0.7] (axis cs:87.5,0) rectangle (axis cs:89,0.106666666666667);
\draw[draw=none,fill=orchid227119194,fill opacity=0.7] (axis cs:89,0) rectangle (axis cs:90.5,0.138666666666667);
\draw[draw=none,fill=orchid227119194,fill opacity=0.7] (axis cs:90.5,0) rectangle (axis cs:92,0.0306666666666667);
\draw[draw=none,fill=orchid227119194,fill opacity=0.7] (axis cs:92,0) rectangle (axis cs:93.5,0.0306666666666667);
\draw[draw=none,fill=orchid227119194,fill opacity=0.7] (axis cs:93.5,0) rectangle (axis cs:95,0.00933333333333333);
\end{axis}

\end{tikzpicture}
            \end{tabular}
            \caption{Distribution of the number of transactions audited ($\tau$) for the same experiments as~\Cref{fig:no-side-info-CS}, with $\varepsilon=0.05$. We omit the uniform~(without logical CS) CS histograms, as they are concentrated entirely at $N$}
            \label{fig:no-side-info-Hist}
    \end{figure}
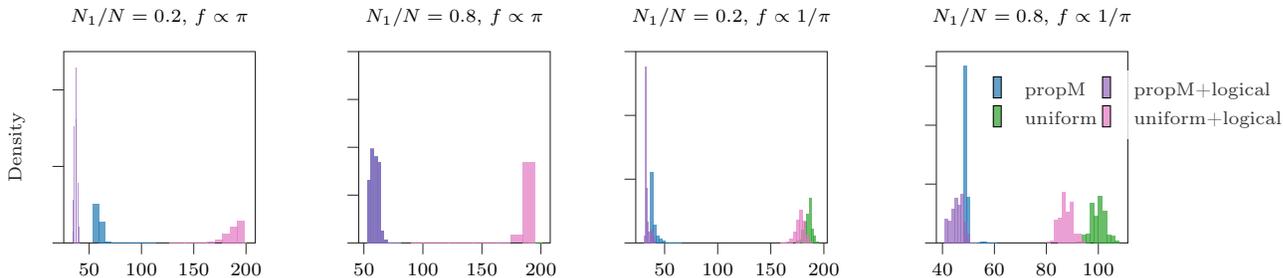

    \paragraph{No side information: uniform vs.\ \propM\ sampling.}
        In the first experiment, we compare the performance of the \propM strategy with the uniform baseline. In addition to this, we also illustrate the significance of logical CS~(introduced in \Cref{subsec:logical-CS}) especially in cases when there are a few large $\pi$ values.
        From the widths of the CSs plotted in \Cref{fig:no-side-info-CS}, we can see that \propM outperforms the uniform baseline in all four cases. The gap in performance increases when $\Nlarge$ is small since $\pi$ deviates more significantly from the uniform weighting: it consists of a few large weights with the rest close to $0$. On the other hand, when $\Nlarge$ is large, the weights  resemble the uniform distribution, leading to the competitive performance of the uniform baseline. The logical CSs are most useful in the case of small $\Nlarge$, especially with $f \propto \pi$. This is because for small $\Nlarge$, every query to an index with large $\pi$ value leads to a significant reduction in the uncertainty about $m^*$.
    
         Next, in~\Cref{fig:no-side-info-Hist}, we plot the distribution of the stopping time $\tau$ for an RLFA
 with $\varepsilon = \delta=0.05$,  over $500$ independent trials. The \propM strategy leads to a significant reduction in the sample size requirement to obtain an $\varepsilon$-accurate estimate of $m^*$ as compared to the uniform baseline, both with and without the logical CS. Furthermore, the distribution of $\tau$ with the \propM strategy often has  less variability than the uniform strategy. Hence, \propM has demonstrated itself empirically to be a better sampling strategy than simply sampling uniformly, as one would do when all the weights are equal.

    \paragraph{Using \propMS\ with accurate side information.} 
        In the second experiment, we study the benefit of incorporating accurate side information in the design of our CSs, by comparing the performance of \propMS strategy with that of the \propM strategy. We generate $S$ randomly while ensuring $S(i) / f(i) \in [1 \pm a]$ (from \eqref{eq:accurate-side-info-assumption}) for some $a \in (0,1)$. Thus smaller values of $a$ imply that the scores $S(i)$ are more accurate approximations of $f(i)$ for all $i \in [N]$.

\begin{figure}[t]
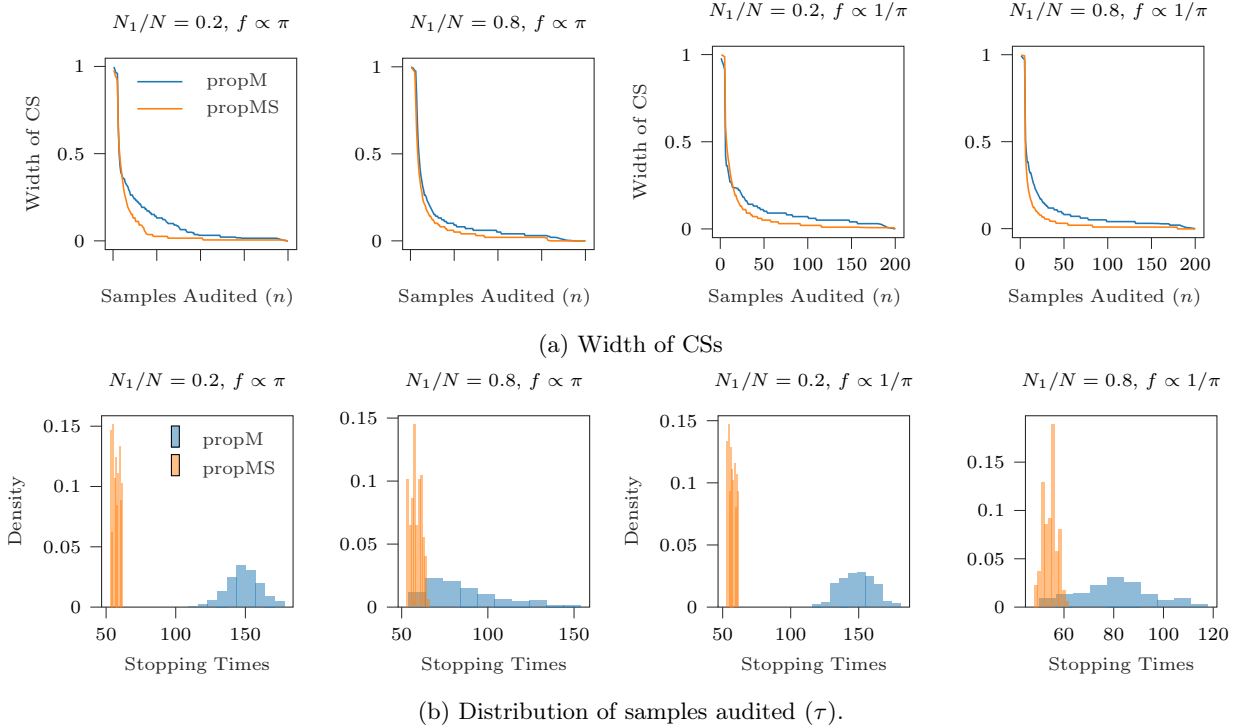

\begin{subfigure}{\columnwidth}
     \def\figwidth{0.25\columnwidth}
     \def\figheight{0.25\columnwidth} %
     \centering
     \hspace*{-1em}
     \input{Figures/AccurateSideInfoCSf_propto_M_large_40_A_0_1}
     \begin{tikzpicture}

\definecolor{darkorange25512714}{RGB}{255,127,14}
\definecolor{darkslategray38}{RGB}{38,38,38}
\definecolor{lightgray204}{RGB}{204,204,204}
\definecolor{steelblue31119180}{RGB}{31,119,180}

\tikzstyle{every node}=[font=\scriptsize]
\begin{axis}[
axis line style={darkslategray38},
height=\figheight,
legend cell align={left},
legend style={fill opacity=0.8, draw opacity=1, text opacity=1, draw=none},
tick align=outside,
tick pos=left,
title={$N_1/N=0.8$,    \(\displaystyle f \propto \pi\)},
width=\figwidth,
x grid style={lightgray204},
xlabel=\textcolor{darkslategray38}{Samples Audited (\(\displaystyle n\)) },
xmin=-8.95, xmax=209.95,
xtick style={color=darkslategray38},
xticklabels={,,},
y grid style={lightgray204},
ylabel=\textcolor{darkslategray38}{},
ymin=-0.0500106766164154, ymax=1.04939080940641,
ytick style={color=darkslategray38}
]
\addplot [semithick, steelblue31119180]
table {%
1 0.999418014587189
2 0.995346215047418
3 0.992641442399086
4 0.985880871871744
5 0.977225642726343
6 0.9749674005879
7 0.844463904591544
8 0.666666666666667
9 0.545454545454545
10 0.474747474747475
11 0.414141414141414
12 0.373737373737374
13 0.343434343434343
14 0.313131313131313
15 0.292929292929293
16 0.262626262626263
17 0.262626262626263
18 0.252525252525253
19 0.232323232323232
20 0.222222222222222
21 0.212121212121212
22 0.202020202020202
23 0.191919191919192
24 0.181818181818182
25 0.171717171717172
26 0.161616161616162
27 0.151515151515152
28 0.151515151515152
29 0.141414141414141
30 0.141414141414141
31 0.141414141414141
32 0.131313131313131
33 0.131313131313131
34 0.131313131313131
35 0.131313131313131
36 0.121212121212121
37 0.121212121212121
38 0.121212121212121
39 0.111111111111111
40 0.111111111111111
41 0.111111111111111
42 0.101010101010101
43 0.101010101010101
44 0.101010101010101
45 0.101010101010101
46 0.101010101010101
47 0.101010101010101
48 0.101010101010101
49 0.101010101010101
50 0.0909090909090909
51 0.0909090909090909
52 0.0909090909090909
53 0.0808080808080808
54 0.0808080808080808
55 0.0808080808080808
56 0.0808080808080808
57 0.0808080808080808
58 0.0808080808080808
59 0.0808080808080808
60 0.0808080808080808
61 0.0808080808080808
62 0.0808080808080808
63 0.0707070707070707
64 0.0707070707070707
65 0.0707070707070707
66 0.0707070707070707
67 0.0707070707070707
68 0.0707070707070707
69 0.0707070707070707
70 0.0707070707070707
71 0.0707070707070707
72 0.0606060606060606
73 0.0606060606060606
74 0.0606060606060606
75 0.0606060606060606
76 0.0606060606060606
77 0.0606060606060606
78 0.0606060606060606
79 0.0606060606060606
80 0.0606060606060606
81 0.0606060606060606
82 0.0606060606060606
83 0.0606060606060606
84 0.0606060606060606
85 0.0606060606060606
86 0.0606060606060606
87 0.0606060606060606
88 0.0606060606060606
89 0.0606060606060606
90 0.0606060606060606
91 0.0606060606060606
92 0.0606060606060606
93 0.0606060606060606
94 0.0606060606060606
95 0.0606060606060606
96 0.0606060606060606
97 0.0606060606060606
98 0.0606060606060606
99 0.0505050505050505
100 0.0505050505050505
101 0.0505050505050505
102 0.0505050505050505
103 0.0505050505050505
104 0.0404040404040404
105 0.0404040404040404
106 0.0404040404040404
107 0.0404040404040404
108 0.0404040404040404
109 0.0404040404040404
110 0.0404040404040404
111 0.0404040404040404
112 0.0404040404040404
113 0.0404040404040404
114 0.0404040404040404
115 0.0404040404040404
116 0.0404040404040404
117 0.0404040404040404
118 0.0404040404040404
119 0.0404040404040404
120 0.0404040404040404
121 0.0404040404040404
122 0.0404040404040404
123 0.0404040404040404
124 0.0404040404040404
125 0.0404040404040404
126 0.0404040404040404
127 0.0404040404040404
128 0.0404040404040404
129 0.0404040404040404
130 0.0404040404040404
131 0.0303030303030303
132 0.0303030303030303
133 0.0303030303030303
134 0.0303030303030303
135 0.0303030303030303
136 0.0303030303030303
137 0.0303030303030303
138 0.0303030303030303
139 0.0303030303030303
140 0.0303030303030303
141 0.0303030303030303
142 0.0303030303030303
143 0.0303030303030303
144 0.0303030303030303
145 0.0303030303030303
146 0.0303030303030303
147 0.0303030303030303
148 0.0303030303030303
149 0.0303030303030303
150 0.0303030303030303
151 0.0303030303030303
152 0.0303030303030303
153 0.0303030303030303
154 0.0303030303030303
155 0.0303030303030303
156 0.0295853212282953
157 0.0268079150135939
158 0.0256275150785858
159 0.025624293963876
160 0.0202020202020202
161 0.0202020202020202
162 0.0202020202020202
163 0.0202020202020202
164 0.0202020202020202
165 0.0202020202020202
166 0.0202020202020202
167 0.0185129441595042
168 0.0185100792913299
169 0.0165765531287208
170 0.0143348639614696
171 0.0131745693982788
172 0.0119148405111511
173 0.0102739604661783
174 0.00855878107667651
175 0.00855679148784072
176 0.00756072812537095
177 0.00499859261353419
178 0.00358157800336323
179 0.00357985877956529
180 0.00357738308042838
181 0.00222001206220451
182 0.00221784580606454
183 0.00221465752252176
184 0.00221377578222925
185 -1.70382500269683e-05
186 -1.89569366346731e-05
187 -2.00804030042745e-05
188 -2.15979870909488e-05
189 -2.34559016422109e-05
190 -2.67333602817499e-05
191 -2.83846937993992e-05
192 -2.90721392929294e-05
193 -3.09175424067498e-05
194 -3.16809069669799e-05
195 -3.32973368125655e-05
196 -3.43624968358602e-05
197 -3.4990433847415e-05
198 -3.63842174364026e-05
199 -3.70906123617343e-05
200 -3.78817971961598e-05
};
\addlegendentry{propM}
\addplot [semithick, darkorange25512714]
table {%
1 0.997036467100548
2 0.989438832262953
3 0.983065506335134
4 0.976266557769051
5 0.969505987241709
6 0.808080808080808
7 0.636363636363636
8 0.525252525252525
9 0.444444444444444
10 0.383838383838384
11 0.343434343434343
12 0.303030303030303
13 0.282828282828283
14 0.242424242424242
15 0.222222222222222
16 0.212121212121212
17 0.202020202020202
18 0.181818181818182
19 0.171717171717172
20 0.161616161616162
21 0.151515151515151
22 0.141414141414141
23 0.141414141414141
24 0.131313131313131
25 0.121212121212121
26 0.121212121212121
27 0.111111111111111
28 0.101010101010101
29 0.101010101010101
30 0.101010101010101
31 0.101010101010101
32 0.0909090909090909
33 0.0909090909090909
34 0.0808080808080808
35 0.0808080808080808
36 0.0808080808080808
37 0.0808080808080808
38 0.0808080808080808
39 0.0808080808080808
40 0.0707070707070707
41 0.0707070707070707
42 0.0606060606060606
43 0.0606060606060606
44 0.0606060606060606
45 0.0606060606060606
46 0.0606060606060606
47 0.0606060606060606
48 0.0606060606060606
49 0.0606060606060606
50 0.0505050505050505
51 0.0505050505050505
52 0.0505050505050505
53 0.0505050505050505
54 0.0505050505050505
55 0.0505050505050505
56 0.0505050505050505
57 0.0505050505050505
58 0.0404040404040404
59 0.0404040404040404
60 0.0404040404040404
61 0.0404040404040404
62 0.0404040404040404
63 0.0404040404040404
64 0.0404040404040404
65 0.0404040404040404
66 0.0404040404040404
67 0.0404040404040404
68 0.0404040404040404
69 0.0404040404040404
70 0.0404040404040404
71 0.0404040404040404
72 0.0404040404040404
73 0.0404040404040404
74 0.0404040404040404
75 0.0303030303030303
76 0.0303030303030303
77 0.0303030303030303
78 0.0303030303030303
79 0.0303030303030303
80 0.0303030303030303
81 0.0303030303030303
82 0.0303030303030303
83 0.0303030303030303
84 0.0303030303030303
85 0.0202020202020202
86 0.0202020202020202
87 0.0202020202020202
88 0.0202020202020202
89 0.0202020202020202
90 0.0202020202020202
91 0.0202020202020202
92 0.0202020202020202
93 0.0202020202020202
94 0.0202020202020202
95 0.0202020202020202
96 0.0202020202020202
97 0.0202020202020202
98 0.0202020202020202
99 0.0202020202020202
100 0.0202020202020202
101 0.0202020202020202
102 0.0202020202020202
103 0.0202020202020202
104 0.0202020202020202
105 0.0202020202020202
106 0.0202020202020202
107 0.0202020202020202
108 0.0202020202020202
109 0.0202020202020202
110 0.0202020202020202
111 0.0202020202020202
112 0.0202020202020202
113 0.0202020202020202
114 0.0202020202020202
115 0.0202020202020202
116 0.0202020202020202
117 0.0202020202020202
118 0.0202020202020202
119 0.0202020202020202
120 0.0202020202020202
121 0.0202020202020202
122 0.0202020202020202
123 0.0202020202020202
124 0.0202020202020202
125 0.0202020202020202
126 0.0202020202020202
127 0.0202020202020202
128 0.0202020202020202
129 0.0202020202020202
130 0.0202020202020202
131 0.0202020202020202
132 0.0202020202020202
133 0.0202020202020202
134 0.0202020202020202
135 0.0202020202020202
136 0.0202020202020202
137 0.0202020202020202
138 0.0202020202020202
139 0.0202020202020202
140 0.0202020202020202
141 0.0202020202020202
142 0.0202020202020202
143 0.0202020202020202
144 0.0202020202020202
145 0.0202020202020202
146 0.0202020202020202
147 0.0202020202020202
148 0.0202020202020202
149 0.0202020202020202
150 0.0202020202020202
151 0.0202020202020202
152 0.0202020202020202
153 0.0202020202020202
154 0.0183557767551641
155 0.0173243228189133
156 0.0162125773394556
157 0.00452111597035237
158 0.00204857319769031
159 0.000966389872843088
160 3.78989050804357e-05
161 3.61429495593368e-05
162 3.29218348494864e-05
163 0
164 0
165 0
166 0
167 0
168 0
169 0
170 0
171 0
172 0
173 0
174 0
175 0
176 -4.66382187258585e-07
177 -3.74384082679757e-06
178 -6.03868279802633e-06
179 -6.74507772335797e-06
180 -7.5362625577835e-06
181 -8.60142258107821e-06
182 -9.48606621309356e-06
183 -1.19755266317356e-05
184 -1.33693102207233e-05
185 -1.51997732311115e-05
186 -1.71184598388163e-05
187 -1.81508316655643e-05
188 -2.01404205013578e-05
189 -2.19983350526198e-05
190 -2.35159191392942e-05
191 -2.57949377974187e-05
192 -2.74113676430043e-05
193 -2.91305914409401e-05
194 -3.00123317334422e-05
195 -3.06997772269724e-05
196 -3.24925352749306e-05
197 -3.4337938388751e-05
198 -3.51013029489811e-05
199 -3.7253860184383e-05
200 -3.78817971959378e-05
};
\addlegendentry{propMS};
\legend{};
\end{axis}

\end{tikzpicture}
     \begin{tikzpicture}

\definecolor{darkorange25512714}{RGB}{255,127,14}
\definecolor{darkslategray38}{RGB}{38,38,38}
\definecolor{lightgray204}{RGB}{204,204,204}
\definecolor{steelblue31119180}{RGB}{31,119,180}

\tikzstyle{every node}=[font=\scriptsize]
\begin{axis}[
axis line style={darkslategray38},
height=\figheight,
legend cell align={left},
legend style={fill opacity=0.8, draw opacity=1, text opacity=1, draw=none},
tick align=outside,
tick pos=left,
title={$N_1/N=0.2$,    \(\displaystyle f \propto 1/\pi\)},
width=\figwidth,
x grid style={lightgray204},
xlabel=\textcolor{darkslategray38}{Samples Audited (\(\displaystyle n\)) },
xmin=-8.95, xmax=209.95,
xtick style={color=darkslategray38},
y grid style={lightgray204},
ylabel=\textcolor{darkslategray38}{Width of CS},
ymin=-0.0498894058902771, ymax=1.04767752369582,
ytick style={color=darkslategray38}
]
\addplot [semithick, steelblue31119180]
table {%
1 0.979064300831915
2 0.961444550441976
3 0.950166366899512
4 0.935000846568604
5 0.915276734570577
6 0.424107027158049
7 0.363475337560522
8 0.361807979086153
9 0.331489721759017
10 0.301157846458991
11 0.270822488565697
12 0.269490494762087
13 0.259362721076908
14 0.239124170629484
15 0.237702341952787
16 0.236071925930226
17 0.236057084945166
18 0.235125881112661
19 0.232323232323232
20 0.232151440587066
21 0.222222222222222
22 0.222222222222222
23 0.212121212121212
24 0.202020202020202
25 0.191919191919192
26 0.181818181818182
27 0.181818181818182
28 0.171717171717172
29 0.161616161616162
30 0.161616161616162
31 0.151515151515152
32 0.141414141414141
33 0.141414141414141
34 0.141414141414141
35 0.141414141414141
36 0.141414141414141
37 0.141414141414141
38 0.131313131313131
39 0.131313131313131
40 0.131313131313131
41 0.131313131313131
42 0.131313131313131
43 0.121212121212121
44 0.121212121212121
45 0.121212121212121
46 0.111111111111111
47 0.111111111111111
48 0.111111111111111
49 0.111111111111111
50 0.101010101010101
51 0.101010101010101
52 0.101010101010101
53 0.101010101010101
54 0.0909090909090909
55 0.0909090909090909
56 0.0909090909090909
57 0.0909090909090909
58 0.0909090909090909
59 0.0909090909090909
60 0.0909090909090909
61 0.0909090909090909
62 0.0909090909090909
63 0.0909090909090909
64 0.0909090909090909
65 0.0909090909090909
66 0.0909090909090909
67 0.0909090909090909
68 0.0909090909090909
69 0.0909090909090909
70 0.0909090909090909
71 0.0909090909090909
72 0.0909090909090909
73 0.0909090909090909
74 0.0909090909090909
75 0.0909090909090909
76 0.0808080808080808
77 0.0808080808080808
78 0.0808080808080808
79 0.0808080808080808
80 0.0808080808080808
81 0.0808080808080808
82 0.0808080808080808
83 0.0707070707070707
84 0.0707070707070707
85 0.0707070707070707
86 0.0707070707070707
87 0.0707070707070707
88 0.0707070707070707
89 0.0707070707070707
90 0.0707070707070707
91 0.0707070707070707
92 0.0707070707070707
93 0.0707070707070707
94 0.0707070707070707
95 0.0707070707070707
96 0.0707070707070707
97 0.0707070707070707
98 0.0707070707070707
99 0.0707070707070707
100 0.0707070707070707
101 0.0606060606060606
102 0.0606060606060606
103 0.0606060606060606
104 0.0606060606060606
105 0.0606060606060606
106 0.0606060606060606
107 0.0606060606060606
108 0.0606060606060606
109 0.0606060606060606
110 0.0505050505050505
111 0.0505050505050505
112 0.0505050505050505
113 0.0505050505050505
114 0.0505050505050505
115 0.0505050505050505
116 0.0505050505050505
117 0.0505050505050505
118 0.0505050505050505
119 0.0505050505050505
120 0.0505050505050505
121 0.0505050505050505
122 0.0505050505050505
123 0.0505050505050505
124 0.0505050505050505
125 0.0505050505050505
126 0.0505050505050505
127 0.0505050505050505
128 0.0505050505050505
129 0.0505050505050505
130 0.0505050505050505
131 0.0505050505050505
132 0.0505050505050505
133 0.0505050505050505
134 0.0505050505050505
135 0.0505050505050505
136 0.0505050505050505
137 0.0505050505050505
138 0.0505050505050505
139 0.0505050505050505
140 0.0505050505050505
141 0.0505050505050505
142 0.0505050505050505
143 0.0505050505050505
144 0.0505050505050505
145 0.0505050505050505
146 0.0505050505050505
147 0.0505050505050505
148 0.0404040404040404
149 0.0404040404040404
150 0.0404040404040404
151 0.0404040404040404
152 0.0404040404040404
153 0.0404040404040404
154 0.0404040404040404
155 0.0404040404040404
156 0.0404040404040404
157 0.0303030303030303
158 0.0303030303030303
159 0.0303030303030303
160 0.0303030303030303
161 0.0303030303030303
162 0.0303030303030303
163 0.0303030303030303
164 0.0303030303030303
165 0.0303030303030303
166 0.0303030303030303
167 0.0303030303030303
168 0.0303030303030303
169 0.0303030303030303
170 0.0303030303030303
171 0.0303030303030303
172 0.0303030303030303
173 0.0303030303030303
174 0.0303030303030303
175 0.0303030303030303
176 0.0303030303030303
177 0.0303030303030303
178 0.0303030303030303
179 0.0303030303030303
180 0.0302814701924397
181 0.0293142391740967
182 0.0283766703815596
183 0.0273362701674035
184 0.0260660870438815
185 0.0242596405145423
186 0.0205428264225801
187 0.0184140398002789
188 0.0169323778989864
189 0.0132628249162917
190 0.0110509427218345
191 0.00817961385073762
192 0.00743546453141408
193 0.00654975753853382
194 0.00556802671349443
195 0.00469491492571039
196 0.00397336388965375
197 0.00340350710638704
198 0.00265617669230145
199 0.000771300689186516
200 0
};
\addlegendentry{propM}
\addplot [semithick, darkorange25512714]
table {%
1 0.997788117805543
2 0.997016817116356
3 0.994626248584766
4 0.990330863099223
5 0.987982645756018
6 0.589223900415084
7 0.508067514915229
8 0.445389440220627
9 0.40392966918987
10 0.372670903100543
11 0.333333333333333
12 0.303030303030303
13 0.272727272727273
14 0.252525252525253
15 0.232323232323232
16 0.212121212121212
17 0.191919191919192
18 0.191919191919192
19 0.171717171717172
20 0.161616161616162
21 0.151515151515152
22 0.141414141414141
23 0.131313131313131
24 0.131313131313131
25 0.121212121212121
26 0.111111111111111
27 0.111111111111111
28 0.111111111111111
29 0.111111111111111
30 0.0909090909090909
31 0.0909090909090909
32 0.0909090909090909
33 0.0909090909090909
34 0.0909090909090909
35 0.0808080808080808
36 0.0808080808080808
37 0.0707070707070707
38 0.0707070707070707
39 0.0707070707070707
40 0.0707070707070707
41 0.0707070707070707
42 0.0707070707070707
43 0.0606060606060606
44 0.0606060606060606
45 0.0606060606060606
46 0.0606060606060606
47 0.0606060606060606
48 0.0505050505050505
49 0.0505050505050505
50 0.0505050505050505
51 0.0505050505050505
52 0.0505050505050505
53 0.0505050505050505
54 0.0505050505050505
55 0.0505050505050505
56 0.0505050505050505
57 0.0505050505050505
58 0.0404040404040404
59 0.0404040404040404
60 0.0404040404040404
61 0.0404040404040404
62 0.0404040404040404
63 0.0404040404040404
64 0.0404040404040404
65 0.0404040404040404
66 0.0404040404040404
67 0.0303030303030303
68 0.0303030303030303
69 0.0303030303030303
70 0.0303030303030303
71 0.0303030303030303
72 0.0303030303030303
73 0.0303030303030303
74 0.0303030303030303
75 0.0303030303030303
76 0.0303030303030303
77 0.0303030303030303
78 0.0303030303030303
79 0.0303030303030303
80 0.0303030303030303
81 0.0303030303030303
82 0.0303030303030303
83 0.0303030303030303
84 0.0303030303030303
85 0.0303030303030303
86 0.0303030303030303
87 0.0303030303030303
88 0.0303030303030303
89 0.0303030303030303
90 0.0303030303030303
91 0.0303030303030303
92 0.0202020202020202
93 0.0202020202020202
94 0.0202020202020202
95 0.0202020202020202
96 0.0202020202020202
97 0.0202020202020202
98 0.0202020202020202
99 0.0202020202020202
100 0.0202020202020202
101 0.0202020202020202
102 0.0202020202020202
103 0.0202020202020202
104 0.0202020202020202
105 0.0202020202020202
106 0.0202020202020202
107 0.0202020202020202
108 0.0202020202020202
109 0.0202020202020202
110 0.0202020202020202
111 0.0202020202020202
112 0.0202020202020202
113 0.0202020202020202
114 0.0202020202020202
115 0.0202020202020202
116 0.0101010101010101
117 0.0101010101010101
118 0.0101010101010101
119 0.0101010101010101
120 0.0101010101010101
121 0.0101010101010101
122 0.0101010101010101
123 0.0101010101010101
124 0.0101010101010101
125 0.0101010101010101
126 0.0101010101010101
127 0.0101010101010101
128 0.0101010101010101
129 0.0101010101010101
130 0.0101010101010101
131 0.0101010101010101
132 0.0101010101010101
133 0.0101010101010101
134 0.0101010101010101
135 0.0101010101010101
136 0.0101010101010101
137 0.0101010101010101
138 0.0101010101010101
139 0.0101010101010101
140 0.0101010101010101
141 0.0101010101010101
142 0.0101010101010101
143 0.0101010101010101
144 0.0101010101010101
145 0.0101010101010101
146 0.0101010101010101
147 0.0101010101010101
148 0.0101010101010101
149 0.0101010101010101
150 0.0101010101010101
151 0.0101010101010101
152 0.0101010101010101
153 0.0101010101010101
154 0.0101010101010101
155 0.0101010101010101
156 0.0101010101010101
157 0.0101010101010101
158 0.0101010101010101
159 0.00957905288116373
160 0.00911809160469765
161 0.0084282607216008
162 0.00839132680385035
163 0.00836479005100949
164 0.00835216844262721
165 0.00803909748315082
166 0.00801786065920435
167 0.00800158753854585
168 0.00797183667674972
169 0.00757754178531511
170 0.00754684662688057
171 0.00751800162988542
172 0.00748921322967949
173 0.00745268298427562
174 0.00742128589470137
175 0.00740280225424267
176 0.00738865892751636
177 0.00737805031540828
178 0.00736822657938865
179 0.00735338559432838
180 0.00731942000272437
181 0.00729379101125738
182 0.0072870999934472
183 0.00727357886768601
184 0.00725557507678931
185 0.00722324748652597
186 0.00719648390235753
187 0.00716501044162599
188 0.00714511221907491
189 0.00712482054257091
190 0.00709195918347136
191 0.00707673215936525
192 0.0070670428921937
193 0.00704150457532082
194 0.00702536916714444
195 0.00700495215917096
196 0.00699539465238377
197 0.00697644024739896
198 0.00696630784931232
199 0.00523952128003868
200 0
};
\addlegendentry{propMS};
\legend{};
\end{axis}

\end{tikzpicture}
     \input{Figures/AccurateSideInfoCSf_inv_propto_M_large_160_A_0_1}
     \caption{Width of CSs}
     \label{fig:accurate-side-info-CS}
\end{subfigure}
\begin{subfigure}{\columnwidth}
         \def\figwidth{0.25\columnwidth}
         \def\figheight{0.25\columnwidth} %
         \centering
         \hspace*{-1em}
         \begin{tikzpicture}

\definecolor{darkorange25512714}{RGB}{255,127,14}
\definecolor{darkslategray38}{RGB}{38,38,38}
\definecolor{lightgray204}{RGB}{204,204,204}
\definecolor{steelblue31119180}{RGB}{31,119,180}

\tikzstyle{every node}=[font=\scriptsize]
\begin{axis}[
axis line style={darkslategray38},
height=\figheight,
legend cell align={left},
legend style={fill opacity=0.8, draw opacity=1, text opacity=1, draw=none},
tick align=outside,
tick pos=left,
title={
$N_1/N=0.2$,  \(\displaystyle f \propto \pi\)},
width=\figwidth,
x grid style={lightgray204},
xlabel=\textcolor{darkslategray38}{Stopping Times},
xmin=46.75, xmax=184.25,
xtick style={color=darkslategray38},
y grid style={lightgray204},
ylabel=\textcolor{darkslategray38}{Density},
ymin=0, ymax=0.158666666666666,
ytick style={color=darkslategray38}
]
\draw[draw=none,fill=steelblue31119180,fill opacity=0.5] (axis cs:109,0) rectangle (axis cs:115.9,0.000579710144927536);
\addlegendimage{ybar,ybar legend,draw=none,fill=steelblue31119180,fill opacity=0.5}
\addlegendentry{propM}

\draw[draw=none,fill=steelblue31119180,fill opacity=0.5] (axis cs:115.9,0) rectangle (axis cs:122.8,0.00231884057971015);
\draw[draw=none,fill=steelblue31119180,fill opacity=0.5] (axis cs:122.8,0) rectangle (axis cs:129.7,0.00579710144927537);
\draw[draw=none,fill=steelblue31119180,fill opacity=0.5] (axis cs:129.7,0) rectangle (axis cs:136.6,0.0133333333333333);
\draw[draw=none,fill=steelblue31119180,fill opacity=0.5] (axis cs:136.6,0) rectangle (axis cs:143.5,0.0243478260869565);
\draw[draw=none,fill=steelblue31119180,fill opacity=0.5] (axis cs:143.5,0) rectangle (axis cs:150.4,0.0347826086956521);
\draw[draw=none,fill=steelblue31119180,fill opacity=0.5] (axis cs:150.4,0) rectangle (axis cs:157.3,0.0301449275362319);
\draw[draw=none,fill=steelblue31119180,fill opacity=0.5] (axis cs:157.3,0) rectangle (axis cs:164.2,0.0197101449275363);
\draw[draw=none,fill=steelblue31119180,fill opacity=0.5] (axis cs:164.2,0) rectangle (axis cs:171.1,0.00927536231884057);
\draw[draw=none,fill=steelblue31119180,fill opacity=0.5] (axis cs:171.1,0) rectangle (axis cs:178,0.00463768115942029);
\draw[draw=none,fill=darkorange25512714,fill opacity=0.5] (axis cs:53,0) rectangle (axis cs:53.9,0.146666666666667);
\addlegendimage{ybar,ybar legend,draw=none,fill=darkorange25512714,fill opacity=0.5}
\addlegendentry{propMS}

\draw[draw=none,fill=darkorange25512714,fill opacity=0.5] (axis cs:53.9,0) rectangle (axis cs:54.8,0.0622222222222223);
\draw[draw=none,fill=darkorange25512714,fill opacity=0.5] (axis cs:54.8,0) rectangle (axis cs:55.7,0.15111111111111);
\draw[draw=none,fill=darkorange25512714,fill opacity=0.5] (axis cs:55.7,0) rectangle (axis cs:56.6,0.106666666666667);
\draw[draw=none,fill=darkorange25512714,fill opacity=0.5] (axis cs:56.6,0) rectangle (axis cs:57.5,0.124444444444445);
\draw[draw=none,fill=darkorange25512714,fill opacity=0.5] (axis cs:57.5,0) rectangle (axis cs:58.4,0.0844444444444446);
\draw[draw=none,fill=darkorange25512714,fill opacity=0.5] (axis cs:58.4,0) rectangle (axis cs:59.3,0.111111111111111);
\draw[draw=none,fill=darkorange25512714,fill opacity=0.5] (axis cs:59.3,0) rectangle (axis cs:60.2,0.133333333333332);
\draw[draw=none,fill=darkorange25512714,fill opacity=0.5] (axis cs:60.2,0) rectangle (axis cs:61.1,0.088888888888889);
\draw[draw=none,fill=darkorange25512714,fill opacity=0.5] (axis cs:61.1,0) rectangle (axis cs:62,0.102222222222222);
\end{axis}

\end{tikzpicture}
         \begin{tikzpicture}

\definecolor{darkorange25512714}{RGB}{255,127,14}
\definecolor{darkslategray38}{RGB}{38,38,38}
\definecolor{lightgray204}{RGB}{204,204,204}
\definecolor{steelblue31119180}{RGB}{31,119,180}

\tikzstyle{every node}=[font=\scriptsize]
\begin{axis}[
axis line style={darkslategray38},
height=\figheight,
legend cell align={left},
legend style={fill opacity=0.8, draw opacity=1, text opacity=1, draw=none},
tick align=outside,
tick pos=left,
title={
$N_1/N=0.8$,    \(\displaystyle f \propto \pi\)},
width=\figwidth,
x grid style={lightgray204},
xlabel=\textcolor{darkslategray38}{Stopping Times},
xmin=47.95, xmax=159.05,
xtick style={color=darkslategray38},
y grid style={lightgray204},
ylabel=\textcolor{darkslategray38}{},
ymin=0, ymax=0.151846153846153,
ytick style={color=darkslategray38}
]
\draw[draw=none,fill=steelblue31119180,fill opacity=0.5] (axis cs:54,0) rectangle (axis cs:64,0.012);
\addlegendimage{ybar,ybar legend,draw=none,fill=steelblue31119180,fill opacity=0.5}
\addlegendentry{propM}

\draw[draw=none,fill=steelblue31119180,fill opacity=0.5] (axis cs:64,0) rectangle (axis cs:74,0.0228);
\draw[draw=none,fill=steelblue31119180,fill opacity=0.5] (axis cs:74,0) rectangle (axis cs:84,0.0204);
\draw[draw=none,fill=steelblue31119180,fill opacity=0.5] (axis cs:84,0) rectangle (axis cs:94,0.0152);
\draw[draw=none,fill=steelblue31119180,fill opacity=0.5] (axis cs:94,0) rectangle (axis cs:104,0.01);
\draw[draw=none,fill=steelblue31119180,fill opacity=0.5] (axis cs:104,0) rectangle (axis cs:114,0.0064);
\draw[draw=none,fill=steelblue31119180,fill opacity=0.5] (axis cs:114,0) rectangle (axis cs:124,0.0048);
\draw[draw=none,fill=steelblue31119180,fill opacity=0.5] (axis cs:124,0) rectangle (axis cs:134,0.0052);
\draw[draw=none,fill=steelblue31119180,fill opacity=0.5] (axis cs:134,0) rectangle (axis cs:144,0.0016);
\draw[draw=none,fill=steelblue31119180,fill opacity=0.5] (axis cs:144,0) rectangle (axis cs:154,0.0016);
\draw[draw=none,fill=darkorange25512714,fill opacity=0.5] (axis cs:53,0) rectangle (axis cs:54.3,0.101538461538462);
\addlegendimage{ybar,ybar legend,draw=none,fill=darkorange25512714,fill opacity=0.5}
\addlegendentry{propMS}

\draw[draw=none,fill=darkorange25512714,fill opacity=0.5] (axis cs:54.3,0) rectangle (axis cs:55.6,0.0646153846153844);
\draw[draw=none,fill=darkorange25512714,fill opacity=0.5] (axis cs:55.6,0) rectangle (axis cs:56.9,0.0861538461538463);
\draw[draw=none,fill=darkorange25512714,fill opacity=0.5] (axis cs:56.9,0) rectangle (axis cs:58.2,0.144615384615384);
\draw[draw=none,fill=darkorange25512714,fill opacity=0.5] (axis cs:58.2,0) rectangle (axis cs:59.5,0.0646153846153848);
\draw[draw=none,fill=darkorange25512714,fill opacity=0.5] (axis cs:59.5,0) rectangle (axis cs:60.8,0.101538461538462);
\draw[draw=none,fill=darkorange25512714,fill opacity=0.5] (axis cs:60.8,0) rectangle (axis cs:62.1,0.104615384615384);
\draw[draw=none,fill=darkorange25512714,fill opacity=0.5] (axis cs:62.1,0) rectangle (axis cs:63.4,0.0553846153846155);
\draw[draw=none,fill=darkorange25512714,fill opacity=0.5] (axis cs:63.4,0) rectangle (axis cs:64.7,0.0399999999999999);
\draw[draw=none,fill=darkorange25512714,fill opacity=0.5] (axis cs:64.7,0) rectangle (axis cs:66,0.00615384615384617);
\legend{};
\end{axis}

\end{tikzpicture}
         \begin{tikzpicture}

\definecolor{darkorange25512714}{RGB}{255,127,14}
\definecolor{darkslategray38}{RGB}{38,38,38}
\definecolor{lightgray204}{RGB}{204,204,204}
\definecolor{steelblue31119180}{RGB}{31,119,180}

\tikzstyle{every node}=[font=\scriptsize]
\begin{axis}[
axis line style={darkslategray38},
height=\figheight,
legend cell align={left},
legend style={fill opacity=0.8, draw opacity=1, text opacity=1, draw=none},
tick align=outside,
tick pos=left,
title={
$N_1/N=0.2$,    \(\displaystyle f \propto 1/\pi\)},
width=\figwidth,
x grid style={lightgray204},
xlabel=\textcolor{darkslategray38}{Stopping Times},
xmin=46.6, xmax=187.4,
xtick style={color=darkslategray38},
y grid style={lightgray204},
ylabel=\textcolor{darkslategray38}{Density},
ymin=0, ymax=0.154,
ytick style={color=darkslategray38}
]
\draw[draw=none,fill=steelblue31119180,fill opacity=0.5] (axis cs:116,0) rectangle (axis cs:122.5,0.00246153846153846);
\addlegendimage{ybar,ybar legend,draw=none,fill=steelblue31119180,fill opacity=0.5}
\addlegendentry{propM}

\draw[draw=none,fill=steelblue31119180,fill opacity=0.5] (axis cs:122.5,0) rectangle (axis cs:129,0.00369230769230769);
\draw[draw=none,fill=steelblue31119180,fill opacity=0.5] (axis cs:129,0) rectangle (axis cs:135.5,0.0190769230769231);
\draw[draw=none,fill=steelblue31119180,fill opacity=0.5] (axis cs:135.5,0) rectangle (axis cs:142,0.0233846153846154);
\draw[draw=none,fill=steelblue31119180,fill opacity=0.5] (axis cs:142,0) rectangle (axis cs:148.5,0.0270769230769231);
\draw[draw=none,fill=steelblue31119180,fill opacity=0.5] (axis cs:148.5,0) rectangle (axis cs:155,0.0276923076923077);
\draw[draw=none,fill=steelblue31119180,fill opacity=0.5] (axis cs:155,0) rectangle (axis cs:161.5,0.0246153846153846);
\draw[draw=none,fill=steelblue31119180,fill opacity=0.5] (axis cs:161.5,0) rectangle (axis cs:168,0.0184615384615385);
\draw[draw=none,fill=steelblue31119180,fill opacity=0.5] (axis cs:168,0) rectangle (axis cs:174.5,0.00430769230769231);
\draw[draw=none,fill=steelblue31119180,fill opacity=0.5] (axis cs:174.5,0) rectangle (axis cs:181,0.00307692307692308);
\draw[draw=none,fill=darkorange25512714,fill opacity=0.5] (axis cs:53,0) rectangle (axis cs:53.9,0.133333333333334);
\addlegendimage{ybar,ybar legend,draw=none,fill=darkorange25512714,fill opacity=0.5}
\addlegendentry{propMS}

\draw[draw=none,fill=darkorange25512714,fill opacity=0.5] (axis cs:53.9,0) rectangle (axis cs:54.8,0.146666666666667);
\draw[draw=none,fill=darkorange25512714,fill opacity=0.5] (axis cs:54.8,0) rectangle (axis cs:55.7,0.0933333333333327);
\draw[draw=none,fill=darkorange25512714,fill opacity=0.5] (axis cs:55.7,0) rectangle (axis cs:56.6,0.128888888888889);
\draw[draw=none,fill=darkorange25512714,fill opacity=0.5] (axis cs:56.6,0) rectangle (axis cs:57.5,0.111111111111111);
\draw[draw=none,fill=darkorange25512714,fill opacity=0.5] (axis cs:57.5,0) rectangle (axis cs:58.4,0.102222222222222);
\draw[draw=none,fill=darkorange25512714,fill opacity=0.5] (axis cs:58.4,0) rectangle (axis cs:59.3,0.115555555555556);
\draw[draw=none,fill=darkorange25512714,fill opacity=0.5] (axis cs:59.3,0) rectangle (axis cs:60.2,0.0799999999999995);
\draw[draw=none,fill=darkorange25512714,fill opacity=0.5] (axis cs:60.2,0) rectangle (axis cs:61.1,0.106666666666667);
\draw[draw=none,fill=darkorange25512714,fill opacity=0.5] (axis cs:61.1,0) rectangle (axis cs:62,0.0933333333333335);
\legend{};
\end{axis}

\end{tikzpicture}
         \begin{tikzpicture}

\definecolor{darkorange25512714}{RGB}{255,127,14}
\definecolor{darkslategray38}{RGB}{38,38,38}
\definecolor{lightgray204}{RGB}{204,204,204}
\definecolor{steelblue31119180}{RGB}{31,119,180}

\tikzstyle{every node}=[font=\scriptsize]
\begin{axis}[
axis line style={darkslategray38},
height=\figheight,
legend cell align={left},
legend style={fill opacity=0.8, draw opacity=1, text opacity=1, draw=none},
tick align=outside,
tick pos=left,
title={$N_1/N=0.8$,    \(\displaystyle f \propto 1/\pi\)},
width=\figwidth,
x grid style={lightgray204},
xlabel=\textcolor{darkslategray38}{Stopping Times},
xmin=44.5, xmax=121.5,
xtick style={color=darkslategray38},
y grid style={lightgray204},
ylabel=\textcolor{darkslategray38}{},
ymin=0, ymax=0.198,
ytick style={color=darkslategray38}
]
\draw[draw=none,fill=steelblue31119180,fill opacity=0.5] (axis cs:50,0) rectangle (axis cs:56.8,0.00882352941176471);
\addlegendimage{ybar,ybar legend,draw=none,fill=steelblue31119180,fill opacity=0.5}
\addlegendentry{propM}

\draw[draw=none,fill=steelblue31119180,fill opacity=0.5] (axis cs:56.8,0) rectangle (axis cs:63.6,0.0135294117647059);
\draw[draw=none,fill=steelblue31119180,fill opacity=0.5] (axis cs:63.6,0) rectangle (axis cs:70.4,0.0141176470588235);
\draw[draw=none,fill=steelblue31119180,fill opacity=0.5] (axis cs:70.4,0) rectangle (axis cs:77.2,0.0223529411764706);
\draw[draw=none,fill=steelblue31119180,fill opacity=0.5] (axis cs:77.2,0) rectangle (axis cs:84,0.0311764705882353);
\draw[draw=none,fill=steelblue31119180,fill opacity=0.5] (axis cs:84,0) rectangle (axis cs:90.8,0.0258823529411765);
\draw[draw=none,fill=steelblue31119180,fill opacity=0.5] (axis cs:90.8,0) rectangle (axis cs:97.6,0.0129411764705882);
\draw[draw=none,fill=steelblue31119180,fill opacity=0.5] (axis cs:97.6,0) rectangle (axis cs:104.4,0.00705882352941175);
\draw[draw=none,fill=steelblue31119180,fill opacity=0.5] (axis cs:104.4,0) rectangle (axis cs:111.2,0.00882352941176473);
\draw[draw=none,fill=steelblue31119180,fill opacity=0.5] (axis cs:111.2,0) rectangle (axis cs:118,0.00235294117647058);
\draw[draw=none,fill=darkorange25512714,fill opacity=0.5] (axis cs:48,0) rectangle (axis cs:49.4,0.0228571428571429);
\addlegendimage{ybar,ybar legend,draw=none,fill=darkorange25512714,fill opacity=0.5}
\addlegendentry{propMS}

\draw[draw=none,fill=darkorange25512714,fill opacity=0.5] (axis cs:49.4,0) rectangle (axis cs:50.8,0.0371428571428572);
\draw[draw=none,fill=darkorange25512714,fill opacity=0.5] (axis cs:50.8,0) rectangle (axis cs:52.2,0.128571428571428);
\draw[draw=none,fill=darkorange25512714,fill opacity=0.5] (axis cs:52.2,0) rectangle (axis cs:53.6,0.0857142857142858);
\draw[draw=none,fill=darkorange25512714,fill opacity=0.5] (axis cs:53.6,0) rectangle (axis cs:55,0.0914285714285715);
\draw[draw=none,fill=darkorange25512714,fill opacity=0.5] (axis cs:55,0) rectangle (axis cs:56.4,0.188571428571429);
\draw[draw=none,fill=darkorange25512714,fill opacity=0.5] (axis cs:56.4,0) rectangle (axis cs:57.8,0.0571428571428572);
\draw[draw=none,fill=darkorange25512714,fill opacity=0.5] (axis cs:57.8,0) rectangle (axis cs:59.2,0.0799999999999997);
\draw[draw=none,fill=darkorange25512714,fill opacity=0.5] (axis cs:59.2,0) rectangle (axis cs:60.6,0.0171428571428572);
\draw[draw=none,fill=darkorange25512714,fill opacity=0.5] (axis cs:60.6,0) rectangle (axis cs:62,0.00571428571428572);
\legend{}; 
\end{axis}

\end{tikzpicture}
         \caption{Distribution of samples audited ($\tau$).}
     \label{fig:accurate-side-info-Hist}
    \end{subfigure}
    \caption{Comparison of the \propMS vs. the \propM strategy with accurate side information $(S(i))$, i.e., $S(i) / f(i) \in [0.9, 1.1]$ where $\varepsilon= \delta = 0.05$. We see that \propMS outperforms \propM in both CS width and sample efficiency.}
\end{figure}
        In~\Cref{fig:accurate-side-info-CS}, we can see that the \propMS strategy with accurate side information dominates the \propM strategy. This is further reflected in the distribution of $\tau$ for an RLFA where $\varepsilon=\delta=0.05$ in~\Cref{fig:accurate-side-info-Hist}.
        Hence, in situations where we are confident in the accuracy of our side information, we should incorporate it directly into our sampling strategy to reduce the width of the CS.

    \paragraph{Control variates from possibly inaccurate side information.}
        Finally, we consider the case in which we do not have prior information about the accuracy of the side information. Thus, using the \propMS strategy in this scenario directly can lead to very conservative CSs~(this is because in the absence of tight guarantees on the range of the $S/f$ ratio, we will have to use the worst case range). Instead, we compare the performance of the \propM strategy, with and without using control variates described in~\Cref{sec:side information}.
        In this case, we set $S(i) =  c\times f(i) +  (1-c)\times R_i$ for $c \in (0,1)$, where $(R_i)_{i \in [N]}$  are \iid random variables distributed uniformly over $[0,1]$. The parameter $c$ controls the level of correlation between $f$ and $S$ values, with small $c$ values indicating low correlation.

        We generate the data with  $\Nlarge =  40$ and $N=200$. In~\Cref{fig:CV-CS}, we compare the CSs and the distribution of $\tau$ for an RLFA~(with $\varepsilon=\delta=0.05$) for the \propM strategy with and without control variates, when the side information is generate with $c=0.9$. Due to the high correlation, there is a significant decrease in the samples needed to reach an accuracy of $\varepsilon$, when using control variates.

        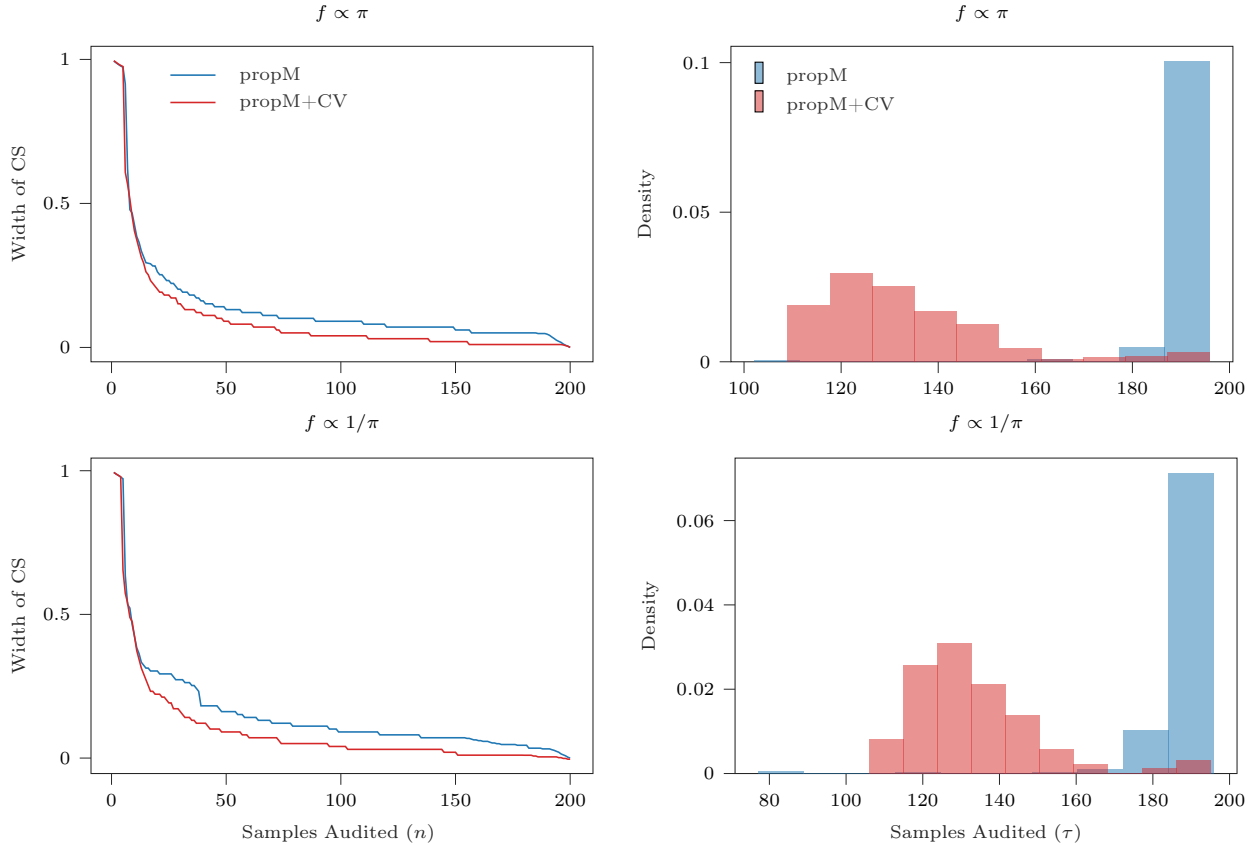
\begin{figure}[t]
            \def\figwidth{0.5\columnwidth}
            \def\figheight{0.35\columnwidth} %
            \hspace*{-1em}
            \begin{tikzpicture}

\definecolor{crimson2143940}{RGB}{214,39,40}
\definecolor{darkslategray38}{RGB}{38,38,38}
\definecolor{lightgray204}{RGB}{204,204,204}
\definecolor{steelblue31119180}{RGB}{31,119,180}

\tikzstyle{every node}=[font=\scriptsize]
\begin{axis}[
axis line style={darkslategray38},
height=\figheight,
legend cell align={left},
legend style={fill opacity=0.8,
    draw opacity=1,
    at={(0.14,0.97)},
    anchor=north west,
    text opacity=1, draw=none},
tick align=outside,
tick pos=left,
title={\(\displaystyle f \propto \pi\)},
width=\figwidth,
x grid style={lightgray204},
xmin=-8.95, xmax=209.95,
xtick style={color=darkslategray38},
y grid style={lightgray204},
ylabel=\textcolor{darkslategray38}{Width of CS},
ymin=-0.0497855178642922, ymax=1.04549587515009,
ytick style={color=darkslategray38}
]
\addplot [semithick, steelblue31119180]
table {%
1 0.994130457160991
2 0.988260914321981
3 0.982391371482972
4 0.977187040558113
5 0.972922319185073
6 0.91475778062222
7 0.619726796206534
8 0.478302367691526
9 0.465999227790727
10 0.42558597404335
11 0.385173626737221
12 0.364963305104036
13 0.334650487192459
14 0.314438084495461
15 0.294228318797445
16 0.292353031850916
17 0.290464204389029
18 0.282828282828283
19 0.282828282828283
20 0.262626262626263
21 0.252525252525253
22 0.252525252525253
23 0.242424242424242
24 0.232323232323232
25 0.232323232323232
26 0.222222222222222
27 0.222222222222222
28 0.212121212121212
29 0.202020202020202
30 0.202020202020202
31 0.191919191919192
32 0.191919191919192
33 0.191919191919192
34 0.181818181818182
35 0.181818181818182
36 0.181818181818182
37 0.171717171717172
38 0.171717171717172
39 0.161616161616162
40 0.161616161616162
41 0.151515151515152
42 0.151515151515152
43 0.151515151515152
44 0.151515151515152
45 0.141414141414141
46 0.141414141414141
47 0.141414141414141
48 0.141414141414141
49 0.141414141414141
50 0.131313131313131
51 0.131313131313131
52 0.131313131313131
53 0.131313131313131
54 0.131313131313131
55 0.131313131313131
56 0.131313131313131
57 0.121212121212121
58 0.121212121212121
59 0.121212121212121
60 0.121212121212121
61 0.121212121212121
62 0.121212121212121
63 0.121212121212121
64 0.121212121212121
65 0.121212121212121
66 0.111111111111111
67 0.111111111111111
68 0.111111111111111
69 0.111111111111111
70 0.111111111111111
71 0.111111111111111
72 0.111111111111111
73 0.101010101010101
74 0.101010101010101
75 0.101010101010101
76 0.101010101010101
77 0.101010101010101
78 0.101010101010101
79 0.101010101010101
80 0.101010101010101
81 0.101010101010101
82 0.101010101010101
83 0.101010101010101
84 0.101010101010101
85 0.101010101010101
86 0.101010101010101
87 0.101010101010101
88 0.101010101010101
89 0.0909090909090909
90 0.0909090909090909
91 0.0909090909090909
92 0.0909090909090909
93 0.0909090909090909
94 0.0909090909090909
95 0.0909090909090909
96 0.0909090909090909
97 0.0909090909090909
98 0.0909090909090909
99 0.0909090909090909
100 0.0909090909090909
101 0.0909090909090909
102 0.0909090909090909
103 0.0909090909090909
104 0.0909090909090909
105 0.0909090909090909
106 0.0909090909090909
107 0.0909090909090909
108 0.0909090909090909
109 0.0909090909090909
110 0.0808080808080808
111 0.0808080808080808
112 0.0808080808080808
113 0.0808080808080808
114 0.0808080808080808
115 0.0808080808080808
116 0.0808080808080808
117 0.0808080808080808
118 0.0808080808080808
119 0.0808080808080808
120 0.0707070707070707
121 0.0707070707070707
122 0.0707070707070707
123 0.0707070707070707
124 0.0707070707070707
125 0.0707070707070707
126 0.0707070707070707
127 0.0707070707070707
128 0.0707070707070707
129 0.0707070707070707
130 0.0707070707070707
131 0.0707070707070707
132 0.0707070707070707
133 0.0707070707070707
134 0.0707070707070707
135 0.0707070707070707
136 0.0707070707070707
137 0.0707070707070707
138 0.0707070707070707
139 0.0707070707070707
140 0.0707070707070707
141 0.0707070707070707
142 0.0707070707070707
143 0.0707070707070707
144 0.0707070707070707
145 0.0707070707070707
146 0.0707070707070707
147 0.0707070707070707
148 0.0707070707070707
149 0.0707070707070707
150 0.0606060606060606
151 0.0606060606060606
152 0.0606060606060606
153 0.0606060606060606
154 0.0606060606060606
155 0.0606060606060606
156 0.0606060606060606
157 0.0505050505050505
158 0.0505050505050505
159 0.0505050505050505
160 0.0505050505050505
161 0.0505050505050505
162 0.0505050505050505
163 0.0505050505050505
164 0.0505050505050505
165 0.0505050505050505
166 0.0505050505050505
167 0.0505050505050505
168 0.0505050505050505
169 0.0505050505050505
170 0.0505050505050505
171 0.0505050505050505
172 0.0505050505050505
173 0.0505050505050505
174 0.0505050505050505
175 0.0505050505050505
176 0.0505050505050505
177 0.0505050505050505
178 0.0505050505050505
179 0.0505050505050505
180 0.0505050505050505
181 0.0505050505050505
182 0.0505050505050505
183 0.0505050505050505
184 0.0505050505050505
185 0.0503589729107282
186 0.0482382366721203
187 0.0482279243547146
188 0.0482164073972541
189 0.0482051812006163
190 0.0459062364182213
191 0.0421548088859859
192 0.0362852660469766
193 0.0310282743894524
194 0.0251587315504431
195 0.0211720268963204
196 0.0176707973911546
197 0.0118012545521453
198 0.00744912322878544
199 0.00427133529291512
200 -1.99840144432528e-15
};
\addlegendentry{propM}
\addplot [semithick, crimson2143940]
table {%
1 0.995710357285802
2 0.989840814446792
3 0.983971271607783
4 0.979303710413013
5 0.975163132896874
6 0.608151420752057
7 0.566192378812067
8 0.515151515151515
9 0.454545454545455
10 0.404040404040404
11 0.373737373737374
12 0.343434343434343
13 0.313131313131313
14 0.292929292929293
15 0.262626262626263
16 0.252525252525253
17 0.232323232323232
18 0.222222222222222
19 0.212121212121212
20 0.202020202020202
21 0.191919191919192
22 0.191919191919192
23 0.181818181818182
24 0.181818181818182
25 0.181818181818182
26 0.171717171717172
27 0.171717171717172
28 0.171717171717172
29 0.151515151515152
30 0.151515151515152
31 0.141414141414141
32 0.131313131313131
33 0.131313131313131
34 0.131313131313131
35 0.131313131313131
36 0.131313131313131
37 0.121212121212121
38 0.121212121212121
39 0.121212121212121
40 0.111111111111111
41 0.111111111111111
42 0.111111111111111
43 0.111111111111111
44 0.111111111111111
45 0.111111111111111
46 0.101010101010101
47 0.101010101010101
48 0.101010101010101
49 0.0909090909090909
50 0.0909090909090909
51 0.0909090909090909
52 0.0808080808080808
53 0.0808080808080808
54 0.0808080808080808
55 0.0808080808080808
56 0.0808080808080808
57 0.0808080808080808
58 0.0808080808080808
59 0.0808080808080808
60 0.0808080808080808
61 0.0808080808080808
62 0.0707070707070707
63 0.0707070707070707
64 0.0707070707070707
65 0.0707070707070707
66 0.0707070707070707
67 0.0707070707070707
68 0.0707070707070707
69 0.0707070707070707
70 0.0707070707070707
71 0.0707070707070707
72 0.0606060606060606
73 0.0606060606060606
74 0.0505050505050505
75 0.0505050505050505
76 0.0505050505050505
77 0.0505050505050505
78 0.0505050505050505
79 0.0505050505050505
80 0.0505050505050505
81 0.0505050505050505
82 0.0505050505050505
83 0.0505050505050505
84 0.0505050505050505
85 0.0505050505050505
86 0.0505050505050505
87 0.0404040404040404
88 0.0404040404040404
89 0.0404040404040404
90 0.0404040404040404
91 0.0404040404040404
92 0.0404040404040404
93 0.0404040404040404
94 0.0404040404040404
95 0.0404040404040404
96 0.0404040404040404
97 0.0404040404040404
98 0.0404040404040404
99 0.0404040404040404
100 0.0404040404040404
101 0.0404040404040404
102 0.0404040404040404
103 0.0404040404040404
104 0.0404040404040404
105 0.0404040404040404
106 0.0404040404040404
107 0.0404040404040404
108 0.0404040404040404
109 0.0404040404040404
110 0.0404040404040404
111 0.0404040404040404
112 0.0303030303030303
113 0.0303030303030303
114 0.0303030303030303
115 0.0303030303030303
116 0.0303030303030303
117 0.0303030303030303
118 0.0303030303030303
119 0.0303030303030303
120 0.0303030303030303
121 0.0303030303030303
122 0.0303030303030303
123 0.0303030303030303
124 0.0303030303030303
125 0.0303030303030303
126 0.0303030303030303
127 0.0303030303030303
128 0.0303030303030303
129 0.0303030303030303
130 0.0303030303030303
131 0.0303030303030303
132 0.0303030303030303
133 0.0303030303030303
134 0.0303030303030303
135 0.0303030303030303
136 0.0303030303030303
137 0.0303030303030303
138 0.0303030303030303
139 0.0202020202020202
140 0.0202020202020202
141 0.0202020202020202
142 0.0202020202020202
143 0.0202020202020202
144 0.0202020202020202
145 0.0202020202020202
146 0.0202020202020202
147 0.0202020202020202
148 0.0202020202020202
149 0.0202020202020202
150 0.0202020202020202
151 0.0202020202020202
152 0.0202020202020202
153 0.0202020202020202
154 0.0202020202020202
155 0.0202020202020202
156 0.0101010101010101
157 0.0101010101010101
158 0.0101010101010101
159 0.0101010101010101
160 0.0101010101010101
161 0.0101010101010101
162 0.0101010101010101
163 0.0101010101010101
164 0.0101010101010101
165 0.0101010101010101
166 0.0101010101010101
167 0.0101010101010101
168 0.0101010101010101
169 0.0101010101010101
170 0.0101010101010101
171 0.0101010101010101
172 0.0101010101010101
173 0.0101010101010101
174 0.0101010101010101
175 0.0101010101010101
176 0.0101010101010101
177 0.0101010101010101
178 0.0101010101010101
179 0.0101010101010101
180 0.0101010101010101
181 0.0101010101010101
182 0.0101010101010101
183 0.0101010101010101
184 0.0101010101010101
185 0.0101010101010101
186 0.0101010101010101
187 0.0101010101010101
188 0.0101010101010101
189 0.0101010101010101
190 0.0101010101010101
191 0.0101010101010101
192 0.0101010101010101
193 0.0101010101010101
194 0.0101010101010101
195 0.0101010101010101
196 0.0101010101010101
197 0.00871322094948182
198 0.00641397016073644
199 0.00410241332619077
200 -1.99840144432528e-15
};
\addlegendentry{propM+CV}
\end{axis}

\end{tikzpicture}\hfill
            \begin{tikzpicture}

\definecolor{crimson2143940}{RGB}{214,39,40}
\definecolor{darkslategray38}{RGB}{38,38,38}
\definecolor{lightgray204}{RGB}{204,204,204}
\definecolor{steelblue31119180}{RGB}{31,119,180}

\tikzstyle{every node}=[font=\scriptsize]
\begin{axis}[
axis line style={darkslategray38},
height=\figheight,
legend cell align={left},
legend style={
  fill opacity=0.8,
  draw opacity=1,
  text opacity=1,
  at={(0.03,0.97)},
  anchor=north west,
  draw=none
},
tick align=outside,
tick pos=left,
title={
 \(\displaystyle f \propto \pi\)},
width=\figwidth,
x grid style={lightgray204},
xmin=97.3, xmax=200.7,
xtick style={color=darkslategray38},
y grid style={lightgray204},
ylabel=\textcolor{darkslategray38}{Density},
ymin=0, ymax=0.105446808510639,
ytick style={color=darkslategray38}
]
\draw[draw=none,fill=steelblue31119180,fill opacity=0.5] (axis cs:102,0) rectangle (axis cs:111.4,0.000425531914893617);
\addlegendimage{ybar,ybar legend,draw=none,fill=steelblue31119180,fill opacity=0.5}
\addlegendentry{propM}

\draw[draw=none,fill=steelblue31119180,fill opacity=0.5] (axis cs:111.4,0) rectangle (axis cs:120.8,0);
\draw[draw=none,fill=steelblue31119180,fill opacity=0.5] (axis cs:120.8,0) rectangle (axis cs:130.2,0);
\draw[draw=none,fill=steelblue31119180,fill opacity=0.5] (axis cs:130.2,0) rectangle (axis cs:139.6,0);
\draw[draw=none,fill=steelblue31119180,fill opacity=0.5] (axis cs:139.6,0) rectangle (axis cs:149,0);
\draw[draw=none,fill=steelblue31119180,fill opacity=0.5] (axis cs:149,0) rectangle (axis cs:158.4,0);
\draw[draw=none,fill=steelblue31119180,fill opacity=0.5] (axis cs:158.4,0) rectangle (axis cs:167.8,0.000851063829787233);
\draw[draw=none,fill=steelblue31119180,fill opacity=0.5] (axis cs:167.8,0) rectangle (axis cs:177.2,0);
\draw[draw=none,fill=steelblue31119180,fill opacity=0.5] (axis cs:177.2,0) rectangle (axis cs:186.6,0.00468085106382977);
\draw[draw=none,fill=steelblue31119180,fill opacity=0.5] (axis cs:186.6,0) rectangle (axis cs:196,0.100425531914894);
\draw[draw=none,fill=crimson2143940,fill opacity=0.5] (axis cs:109,0) rectangle (axis cs:117.7,0.0188505747126437);
\addlegendimage{ybar,ybar legend,draw=none,fill=crimson2143940,fill opacity=0.5}
\addlegendentry{propM+CV}

\draw[draw=none,fill=crimson2143940,fill opacity=0.5] (axis cs:117.7,0) rectangle (axis cs:126.4,0.0294252873563218);
\draw[draw=none,fill=crimson2143940,fill opacity=0.5] (axis cs:126.4,0) rectangle (axis cs:135.1,0.0252873563218391);
\draw[draw=none,fill=crimson2143940,fill opacity=0.5] (axis cs:135.1,0) rectangle (axis cs:143.8,0.0170114942528735);
\draw[draw=none,fill=crimson2143940,fill opacity=0.5] (axis cs:143.8,0) rectangle (axis cs:152.5,0.0124137931034483);
\draw[draw=none,fill=crimson2143940,fill opacity=0.5] (axis cs:152.5,0) rectangle (axis cs:161.2,0.00459770114942529);
\draw[draw=none,fill=crimson2143940,fill opacity=0.5] (axis cs:161.2,0) rectangle (axis cs:169.9,0.000919540229885059);
\draw[draw=none,fill=crimson2143940,fill opacity=0.5] (axis cs:169.9,0) rectangle (axis cs:178.6,0.00137931034482758);
\draw[draw=none,fill=crimson2143940,fill opacity=0.5] (axis cs:178.6,0) rectangle (axis cs:187.3,0.00183908045977011);
\draw[draw=none,fill=crimson2143940,fill opacity=0.5] (axis cs:187.3,0) rectangle (axis cs:196,0.00321839080459771);
\end{axis}

\end{tikzpicture}
        
            \hspace*{-1em}
            \begin{tikzpicture}

\definecolor{crimson2143940}{RGB}{214,39,40}
\definecolor{darkslategray38}{RGB}{38,38,38}
\definecolor{lightgray204}{RGB}{204,204,204}
\definecolor{steelblue31119180}{RGB}{31,119,180}

\tikzstyle{every node}=[font=\scriptsize]
\begin{axis}[
axis line style={darkslategray38},
height=\figheight,
legend cell align={left},
legend style={fill opacity=0.8, draw opacity=1, text opacity=1, draw=none},
tick align=outside,
tick pos=left,
title={\(\displaystyle f \propto 1/\pi\)},
width=\figwidth,
x grid style={lightgray204},
xlabel=\textcolor{darkslategray38}{Samples Audited (\(\displaystyle n\)) },
xmin=-8.95, xmax=209.95,
xtick style={color=darkslategray38},
y grid style={lightgray204},
ylabel=\textcolor{darkslategray38}{Width of CS},
ymin=-0.0538041940520794, ymax=1.04399619600975,
ytick style={color=darkslategray38}
]
\addplot [semithick, steelblue31119180]
table {%
1 0.994096178279667
2 0.988192356559335
3 0.983753604278858
4 0.977849782558525
5 0.971945960838193
6 0.635742232333494
7 0.534724329286462
8 0.521985563114468
9 0.464646464646465
10 0.424242424242424
11 0.383838383838384
12 0.363636363636364
13 0.333333333333333
14 0.323232323232323
15 0.313131313131313
16 0.313131313131313
17 0.303030303030303
18 0.303030303030303
19 0.303030303030303
20 0.303030303030303
21 0.292929292929293
22 0.292929292929293
23 0.292929292929293
24 0.292929292929293
25 0.292929292929293
26 0.292929292929293
27 0.282828282828283
28 0.272727272727273
29 0.272727272727273
30 0.272727272727273
31 0.272727272727273
32 0.262626262626263
33 0.262626262626263
34 0.262626262626263
35 0.252525252525252
36 0.252525252525252
37 0.242424242424242
38 0.232323232323232
39 0.181818181818182
40 0.181818181818182
41 0.181818181818182
42 0.181818181818182
43 0.181818181818182
44 0.181818181818182
45 0.181818181818182
46 0.181818181818182
47 0.171717171717172
48 0.161616161616162
49 0.161616161616162
50 0.161616161616162
51 0.161616161616162
52 0.161616161616162
53 0.161616161616162
54 0.161616161616162
55 0.151515151515152
56 0.151515151515152
57 0.151515151515152
58 0.141414141414141
59 0.141414141414141
60 0.141414141414141
61 0.141414141414141
62 0.141414141414141
63 0.141414141414141
64 0.131313131313131
65 0.131313131313131
66 0.131313131313131
67 0.131313131313131
68 0.131313131313131
69 0.131313131313131
70 0.121212121212121
71 0.121212121212121
72 0.121212121212121
73 0.121212121212121
74 0.121212121212121
75 0.121212121212121
76 0.121212121212121
77 0.121212121212121
78 0.121212121212121
79 0.111111111111111
80 0.111111111111111
81 0.111111111111111
82 0.111111111111111
83 0.111111111111111
84 0.111111111111111
85 0.111111111111111
86 0.111111111111111
87 0.111111111111111
88 0.111111111111111
89 0.111111111111111
90 0.111111111111111
91 0.111111111111111
92 0.111111111111111
93 0.111111111111111
94 0.111111111111111
95 0.101010101010101
96 0.101010101010101
97 0.101010101010101
98 0.101010101010101
99 0.0909090909090909
100 0.0909090909090909
101 0.0909090909090909
102 0.0909090909090909
103 0.0909090909090909
104 0.0909090909090909
105 0.0909090909090909
106 0.0909090909090909
107 0.0909090909090909
108 0.0909090909090909
109 0.0909090909090909
110 0.0909090909090909
111 0.0909090909090909
112 0.0909090909090909
113 0.0909090909090909
114 0.0909090909090909
115 0.0909090909090909
116 0.0909090909090909
117 0.0808080808080808
118 0.0808080808080808
119 0.0808080808080808
120 0.0808080808080808
121 0.0808080808080808
122 0.0808080808080808
123 0.0808080808080808
124 0.0808080808080808
125 0.0808080808080808
126 0.0808080808080808
127 0.0808080808080808
128 0.0808080808080808
129 0.0808080808080808
130 0.0808080808080808
131 0.0808080808080808
132 0.0808080808080808
133 0.0808080808080808
134 0.0808080808080808
135 0.0707070707070707
136 0.0707070707070707
137 0.0707070707070707
138 0.0707070707070707
139 0.0707070707070707
140 0.0707070707070707
141 0.0707070707070707
142 0.0707070707070707
143 0.0707070707070707
144 0.0707070707070707
145 0.0707070707070707
146 0.0707070707070707
147 0.0707070707070707
148 0.0707070707070707
149 0.0707070707070707
150 0.0707070707070707
151 0.0707070707070707
152 0.0707070707070707
153 0.0707070707070707
154 0.0707070707070707
155 0.0684885362448297
156 0.0684827349115747
157 0.0658296414840012
158 0.0632748919873287
159 0.0632708239598869
160 0.0606357824206832
161 0.0606296426287139
162 0.0581397098228297
163 0.0581364046580321
164 0.0581299980051845
165 0.0552230107976639
166 0.0525557641379759
167 0.0525500190625642
168 0.0500567985654651
169 0.0500518150774367
170 0.0472197153062718
171 0.0472143758678118
172 0.047207599464978
173 0.0471992897103125
174 0.0471930440211723
175 0.0471841662935815
176 0.047179386738076
177 0.0445724944538779
178 0.0445680078183746
179 0.0445622126263707
180 0.0445544529430474
181 0.0445489102071039
182 0.0344418193642676
183 0.0344366216507126
184 0.0344292789705756
185 0.0344246336151339
186 0.0344177711484391
187 0.0344129989369084
188 0.0319874119950789
189 0.0319817235733615
190 0.0319752985862017
191 0.0319696090277208
192 0.0291889157974013
193 0.0269194484185538
194 0.0235025833786729
195 0.0204283895469103
196 0.0145245678265776
197 0.0111368546990214
198 0.0078313557379317
199 0.00297155562274654
200 -1.55431223447522e-15
};
\addlegendentry{propM}
\addplot [semithick, crimson2143940]
table {%
1 0.994096178279667
2 0.988789310532055
3 0.983543557179251
4 0.979077620500345
5 0.653857349491058
6 0.573041593285548
7 0.540024885169565
8 0.489515305723792
9 0.474747474747475
10 0.424242424242424
11 0.373737373737374
12 0.343434343434343
13 0.313131313131313
14 0.292929292929293
15 0.272727272727273
16 0.252525252525253
17 0.232323232323232
18 0.232323232323232
19 0.222222222222222
20 0.222222222222222
21 0.222222222222222
22 0.212121212121212
23 0.212121212121212
24 0.202020202020202
25 0.191919191919192
26 0.191919191919192
27 0.171717171717172
28 0.171717171717172
29 0.171717171717172
30 0.161616161616162
31 0.151515151515152
32 0.141414141414141
33 0.141414141414141
34 0.141414141414141
35 0.131313131313131
36 0.131313131313131
37 0.121212121212121
38 0.121212121212121
39 0.121212121212121
40 0.121212121212121
41 0.121212121212121
42 0.111111111111111
43 0.101010101010101
44 0.101010101010101
45 0.101010101010101
46 0.101010101010101
47 0.101010101010101
48 0.0909090909090909
49 0.0909090909090909
50 0.0909090909090909
51 0.0909090909090909
52 0.0909090909090909
53 0.0909090909090909
54 0.0909090909090909
55 0.0909090909090909
56 0.0909090909090909
57 0.0808080808080808
58 0.0808080808080808
59 0.0808080808080808
60 0.0707070707070707
61 0.0707070707070707
62 0.0707070707070707
63 0.0707070707070707
64 0.0707070707070707
65 0.0707070707070707
66 0.0707070707070707
67 0.0707070707070707
68 0.0707070707070707
69 0.0707070707070707
70 0.0707070707070707
71 0.0707070707070707
72 0.0707070707070707
73 0.0606060606060606
74 0.0505050505050505
75 0.0505050505050505
76 0.0505050505050505
77 0.0505050505050505
78 0.0505050505050505
79 0.0505050505050505
80 0.0505050505050505
81 0.0505050505050505
82 0.0505050505050505
83 0.0505050505050505
84 0.0505050505050505
85 0.0505050505050505
86 0.0505050505050505
87 0.0505050505050505
88 0.0505050505050505
89 0.0505050505050505
90 0.0505050505050505
91 0.0505050505050505
92 0.0505050505050505
93 0.0505050505050505
94 0.0505050505050505
95 0.0404040404040404
96 0.0404040404040404
97 0.0404040404040404
98 0.0404040404040404
99 0.0404040404040404
100 0.0404040404040404
101 0.0404040404040404
102 0.0404040404040404
103 0.0303030303030303
104 0.0303030303030303
105 0.0303030303030303
106 0.0303030303030303
107 0.0303030303030303
108 0.0303030303030303
109 0.0303030303030303
110 0.0303030303030303
111 0.0303030303030303
112 0.0303030303030303
113 0.0303030303030303
114 0.0303030303030303
115 0.0303030303030303
116 0.0303030303030303
117 0.0303030303030303
118 0.0303030303030303
119 0.0303030303030303
120 0.0303030303030303
121 0.0303030303030303
122 0.0303030303030303
123 0.0303030303030303
124 0.0303030303030303
125 0.0303030303030303
126 0.0303030303030303
127 0.0303030303030303
128 0.0303030303030303
129 0.0303030303030303
130 0.0303030303030303
131 0.0303030303030303
132 0.0303030303030303
133 0.0303030303030303
134 0.0303030303030303
135 0.0303030303030303
136 0.0303030303030303
137 0.0303030303030303
138 0.0303030303030303
139 0.0303030303030303
140 0.0303030303030303
141 0.0303030303030303
142 0.0303030303030303
143 0.0303030303030303
144 0.0303030303030303
145 0.0202020202020202
146 0.0202020202020202
147 0.0202020202020202
148 0.0202020202020202
149 0.0202020202020202
150 0.0202020202020202
151 0.0101010101010101
152 0.0101010101010101
153 0.0101010101010101
154 0.0101010101010101
155 0.0101010101010101
156 0.0101010101010101
157 0.0101010101010101
158 0.0101010101010101
159 0.0101010101010101
160 0.0101010101010101
161 0.0101010101010101
162 0.0101010101010101
163 0.0101010101010101
164 0.0101010101010101
165 0.0101010101010101
166 0.0101010101010101
167 0.0101010101010101
168 0.0101010101010101
169 0.0101010101010101
170 0.0101010101010101
171 0.0101010101010101
172 0.0101010101010101
173 0.0101010101010101
174 0.0101010101010101
175 0.0101010101010101
176 0.0101010101010101
177 0.0101010101010101
178 0.0101010101010101
179 0.0101010101010101
180 0.00984063570828347
181 0.00983524754221776
182 0.00982955912050038
183 0.00982223722297992
184 0.0071061940706762
185 0.00710041406641665
186 0.00416035288646094
187 0.00415301020632391
188 0.00414911966596948
189 0.00414352337937268
190 0.00413435016068509
191 0.00412686664816853
192 0.00412063965891024
193 0.00411495010042934
194 0.00411164493563176
195 0.00152749887500936
196 0.00152259109089825
197 -0.00127652754085156
198 -0.00128115315053762
199 -0.00389897860844124
200 -0.00390417632199624
};
\addlegendentry{propM+CV};
\legend{}
\end{axis}

\end{tikzpicture}\hfill
            \begin{tikzpicture}

\definecolor{crimson2143940}{RGB}{214,39,40}
\definecolor{darkslategray38}{RGB}{38,38,38}
\definecolor{lightgray204}{RGB}{204,204,204}
\definecolor{steelblue31119180}{RGB}{31,119,180}

\tikzstyle{every node}=[font=\scriptsize]
\begin{axis}[
axis line style={darkslategray38},
height=\figheight,
legend cell align={left},
legend style={
  fill opacity=0.8,
  draw opacity=1,
  text opacity=1,
  at={(0.03,0.97)},
  anchor=north west,
  draw=none
},
tick align=outside,
tick pos=left,
title={
 \(\displaystyle f \propto 1/\pi\)},
width=\figwidth,
x grid style={lightgray204},
xlabel=\textcolor{darkslategray38}{Samples Audited ($\tau$)},
xmin=71.05, xmax=201.95,
xtick style={color=darkslategray38},
y grid style={lightgray204},
ylabel=\textcolor{darkslategray38}{Density},
ymin=0, ymax=0.0748235294117648,
ytick style={color=darkslategray38}
]
\draw[draw=none,fill=steelblue31119180,fill opacity=0.5] (axis cs:77,0) rectangle (axis cs:88.9,0.000672268907563025);
\addlegendimage{ybar,ybar legend,draw=none,fill=steelblue31119180,fill opacity=0.5}
\addlegendentry{propM}

\draw[draw=none,fill=steelblue31119180,fill opacity=0.5] (axis cs:88.9,0) rectangle (axis cs:100.8,0);
\draw[draw=none,fill=steelblue31119180,fill opacity=0.5] (axis cs:100.8,0) rectangle (axis cs:112.7,0);
\draw[draw=none,fill=steelblue31119180,fill opacity=0.5] (axis cs:112.7,0) rectangle (axis cs:124.6,0.000336134453781513);
\draw[draw=none,fill=steelblue31119180,fill opacity=0.5] (axis cs:124.6,0) rectangle (axis cs:136.5,0);
\draw[draw=none,fill=steelblue31119180,fill opacity=0.5] (axis cs:136.5,0) rectangle (axis cs:148.4,0);
\draw[draw=none,fill=steelblue31119180,fill opacity=0.5] (axis cs:148.4,0) rectangle (axis cs:160.3,0.000336134453781512);
\draw[draw=none,fill=steelblue31119180,fill opacity=0.5] (axis cs:160.3,0) rectangle (axis cs:172.2,0.00100840336134454);
\draw[draw=none,fill=steelblue31119180,fill opacity=0.5] (axis cs:172.2,0) rectangle (axis cs:184.1,0.0104201680672269);
\draw[draw=none,fill=steelblue31119180,fill opacity=0.5] (axis cs:184.1,0) rectangle (axis cs:196,0.0712605042016808);
\draw[draw=none,fill=crimson2143940,fill opacity=0.5] (axis cs:106,0) rectangle (axis cs:114.9,0.00808988764044943);
\addlegendimage{ybar,ybar legend,draw=none,fill=crimson2143940,fill opacity=0.5}
\addlegendentry{propM+CV}

\draw[draw=none,fill=crimson2143940,fill opacity=0.5] (axis cs:114.9,0) rectangle (axis cs:123.8,0.0256179775280899);
\draw[draw=none,fill=crimson2143940,fill opacity=0.5] (axis cs:123.8,0) rectangle (axis cs:132.7,0.0310112359550562);
\draw[draw=none,fill=crimson2143940,fill opacity=0.5] (axis cs:132.7,0) rectangle (axis cs:141.6,0.021123595505618);
\draw[draw=none,fill=crimson2143940,fill opacity=0.5] (axis cs:141.6,0) rectangle (axis cs:150.5,0.0139325842696629);
\draw[draw=none,fill=crimson2143940,fill opacity=0.5] (axis cs:150.5,0) rectangle (axis cs:159.4,0.00584269662921348);
\draw[draw=none,fill=crimson2143940,fill opacity=0.5] (axis cs:159.4,0) rectangle (axis cs:168.3,0.00224719101123595);
\draw[draw=none,fill=crimson2143940,fill opacity=0.5] (axis cs:168.3,0) rectangle (axis cs:177.2,0);
\draw[draw=none,fill=crimson2143940,fill opacity=0.5] (axis cs:177.2,0) rectangle (axis cs:186.1,0.00134831460674157);
\draw[draw=none,fill=crimson2143940,fill opacity=0.5] (axis cs:186.1,0) rectangle (axis cs:195,0.00314606741573035);
\legend{};
\end{axis}

\end{tikzpicture}
            \caption{The plots above show the width of the CSs and the distribution of $\tau$ for the $f \propto \pi$ and the $f \propto 1/\pi$ cases, where  $\Nlarge/N=0.2$ and $c=0.9$. }
            \label{fig:CV-CS}
        \end{figure}

    Finally, in~\Cref{fig:CV-Gain}, we study the variation in sample efficiency as the correlation between $S$ and $f$ changes~(i.e., by varying $c$). In particular, for $9$ linearly spaced $c$ values in the range $[0.1, 0.9]$, we compute the $\tau$ for an RLFA without ($\tau_{\text{no-CV}}$) and with control variates ($\tau_{\text{CV}}$) over $250$ trials, and then plot the variation of the mean of their ratio, $\tau_{\text{CV}} / \tau_{\text{no-CV}}$.
    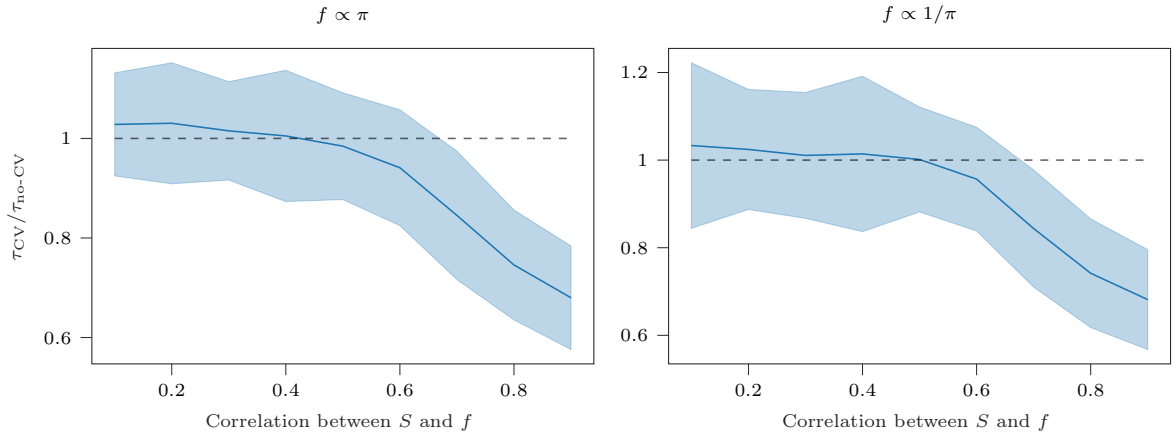
\begin{figure}[h!]
        \def\figwidth{0.5\columnwidth}
        \def\figheight{0.35\columnwidth} %
        \centering
        \hspace*{-1em}
        \begin{tikzpicture}

\definecolor{darkslategray38}{RGB}{38,38,38}
\definecolor{lightgray204}{RGB}{204,204,204}
\definecolor{steelblue31119180}{RGB}{31,119,180}

\tikzstyle{every node}=[font=\scriptsize]
\begin{axis}[
axis line style={darkslategray38},
height=\figheight,
tick align=outside,
tick pos=left,
title={$f \propto \pi$},
width=\figwidth,
x grid style={lightgray204},
xlabel=\textcolor{darkslategray38}{Correlation between \(\displaystyle S\) and \(\displaystyle f\) },
xmin=0.06, xmax=0.94,
xtick style={color=darkslategray38},
y grid style={lightgray204},
ylabel=\textcolor{darkslategray38}{$\tau_{\text{CV}} / \tau_{\text{no-CV}}$},
ymin=0.547065771174321, ymax=1.18082835565563,
ytick style={color=darkslategray38}
]
\path [draw=steelblue31119180, fill=steelblue31119180, opacity=0.3]
(axis cs:0.1,1.1316000527723)
--(axis cs:0.1,0.924691799634826)
--(axis cs:0.2,0.908885384661092)
--(axis cs:0.3,0.916365789053448)
--(axis cs:0.4,0.873148615712509)
--(axis cs:0.5,0.877236593401993)
--(axis cs:0.6,0.824895212114764)
--(axis cs:0.7,0.71635555802216)
--(axis cs:0.8,0.63545898189379)
--(axis cs:0.9,0.575873161378016)
--(axis cs:0.9,0.783970319525466)
--(axis cs:0.9,0.783970319525466)
--(axis cs:0.8,0.856179007956092)
--(axis cs:0.7,0.974431586573793)
--(axis cs:0.6,1.05739363587291)
--(axis cs:0.5,1.09171298392158)
--(axis cs:0.4,1.13688131443937)
--(axis cs:0.3,1.11417430485676)
--(axis cs:0.2,1.15202096545193)
--(axis cs:0.1,1.1316000527723)
--cycle;

\addplot [semithick, steelblue31119180]
table {%
0.1 1.02814592620356
0.2 1.03045317505651
0.3 1.0152700469551
0.4 1.00501496507594
0.5 0.984474788661788
0.6 0.941144423993837
0.7 0.845393572297977
0.8 0.745818994924941
0.9 0.679921740451741
};
\addplot [semithick, black, opacity=0.6, dashed]
table {%
0.1 1
0.2 1
0.3 1
0.4 1
0.5 1
0.6 1
0.7 1
0.8 1
0.9 1
};
\end{axis}

\end{tikzpicture}
        \begin{tikzpicture}

\definecolor{darkslategray38}{RGB}{38,38,38}
\definecolor{lightgray204}{RGB}{204,204,204}
\definecolor{steelblue31119180}{RGB}{31,119,180}

\tikzstyle{every node}=[font=\scriptsize]
\begin{axis}[
axis line style={darkslategray38},
height=\figheight,
tick align=outside,
tick pos=left,
title={$f \propto 1/\pi$},
width=\figwidth,
x grid style={lightgray204},
xlabel=\textcolor{darkslategray38}{Correlation between \(\displaystyle S\) and \(\displaystyle f\) },
xmin=0.06, xmax=0.94,
xtick style={color=darkslategray38},
y grid style={lightgray204},
ylabel={},
ymin=0.534775703427935, ymax=1.25496366491833,
ytick style={color=darkslategray38}
]
\path [draw=steelblue31119180, fill=steelblue31119180, opacity=0.3]
(axis cs:0.1,1.22222784848695)
--(axis cs:0.1,0.844299389345639)
--(axis cs:0.2,0.887274487110602)
--(axis cs:0.3,0.867043059931859)
--(axis cs:0.4,0.836745722151831)
--(axis cs:0.5,0.881643013402391)
--(axis cs:0.6,0.838142680624867)
--(axis cs:0.7,0.710054016237007)
--(axis cs:0.8,0.617690616905216)
--(axis cs:0.9,0.567511519859317)
--(axis cs:0.9,0.795625764730055)
--(axis cs:0.9,0.795625764730055)
--(axis cs:0.8,0.865728720917731)
--(axis cs:0.7,0.977514538624471)
--(axis cs:0.6,1.074994017744)
--(axis cs:0.5,1.12125089262325)
--(axis cs:0.4,1.19149522805915)
--(axis cs:0.3,1.15436051061344)
--(axis cs:0.2,1.16136266125258)
--(axis cs:0.1,1.22222784848695)
--cycle;

\addplot [semithick, steelblue31119180]
table {%
0.1 1.0332636189163
0.2 1.02431857418159
0.3 1.01070178527265
0.4 1.01412047510549
0.5 1.00144695301282
0.6 0.956568349184432
0.7 0.843784277430739
0.8 0.741709668911473
0.9 0.681568642294686
};
\addplot [semithick, black, opacity=0.6, dashed]
table {%
0.1 1
0.2 1
0.3 1
0.4 1
0.5 1
0.6 1
0.7 1
0.8 1
0.9 1
};
\end{axis}

\end{tikzpicture}
        \caption{The figures plot the variation of the reduction in $\tau$ for an RLFA with $\varepsilon=0.025, \delta=0.05$, when the CS is constructed with and without using control variates . The x-axis denotes the parameter $c \in [0.1., 0.9]$, and thus controls the amount of correlation between $S$ and $f$. As the amount of correlation between $S$ and $f$ increases, the CS with control variates decreases takes a decreasing fraction of the time it would take the CS w/o control variates. %
        }
        \label{fig:CV-Gain}
    \end{figure}

    \Cref{fig:CV-Gain} highlights the key advantage of our CS construction using control variates --- this method automatically adapts to the correlation between the side information and the $f$ values. In cases where the side information is highly correlated~(i.e., larger $c$ values), the reduction in samples is large; whereas when the correlation is small, our approach automatically reduces the impact of the side information.

\section{Conclusion}
\label{sec:conclusion}
    In this paper, we defined the concept of an $\rlfa$ and devised RLFA procedures from confidence sequences (CSs) for the weighted average of $N$ terms~(denoted by $m^*$), using adaptive randomized sampling \wor. First, for arbitrary randomized sampling strategies, we developed two methods of constructing CSs for $m^*$ using test martingales. We then addressed the question of effectively incorporating side information, with or without guarantees on their accuracy, to improve the quality of the CSs constructed.

    Our work opens up several interesting directions for future work. For instance, in~\Cref{theorem:oracle-strategy}, we characterized the sampling strategy that optimizes a lower bound on the one-step growth rate. 
    Future work could investigate whether we can obtain a more complete characterization of the optimal policy, without relying on approximations like~\Cref{theorem:oracle-strategy}.
    Another interesting issue, not addressed in our paper is that of considering more general types of side information available to us. As described in~\Cref{sec:introduction},  we have assumed that we have access to $[0,1]$ valued side information that is supposed to be a proxy for the true~(and unknown) $f$ values. However, in practical auditing problems, the side information is usually available in terms of  a collection of numeric, discrete and categorical features that are correlated with the unknown $f$ values. Developing methods for incorporating these more realistic forms of side information into our framework for designing CSs is another important question for future work. Furthermore, another type of side information is any knowledge from a prior audit. For example, auditors may know before reviewing any data (transactions or AI-generated side-info) that for this year, some accounts are likely to have smaller or bigger $f$ values than other accounts because of the specific performance incentives placed on the company managers by their supervisors or by the market conditions.

\vspace{2em}
\noindent \textbf{Acknowledgements.} We acknowledge Ian Waudby-Smith, Justin Whitehouse, Sang Wu, Tom Yan, and our collaborators at PricewaterhouseCoopers for their insightful discussions on this work. We also acknowledge PricewaterhouseCoopers for providing financial support for this research.
\bibliographystyle{abbrvnat}
\bibliography{ref}
\appendix
\newpage 
\section*{Organization}

We discuss some additional background material in~\Cref{appendix:additional-background}, and then formulate the proofs omitted from the body of the paper for \Cref{theorem:oracle-strategy} and \Cref{prop:control-variates-1} in \Cref{sec:Proofs}. We discuss a practical auditing consideration, where one may wish to estimate the remaining misstated fraction as opposed to the total misstated fraction, $m^*$, in \Cref{sec:alt-defs}. In \Cref{sec:hoeffding-empirical-bernstein}, we provide Hoeffding and empirical-Bernstein style CSs that are an alternative to the betting CSs we provide in the paper. We then use the Hoeffding CS constructed in the aforementioned to section as an example of the fact that previous CSs for unweighted mean estimation with uniform sampling from \citet{waudby2020confidence,waudby2020estimating} can be recovered by our method for a specific choice of $(\lambda_t)$ in \Cref{sec:HoefEBComparison}. Empirical results from simulations are shown for these CSs in \Cref{sec:HoefEBExperiments}, in which we compare the Hoeffding and empirical-Bernstein CSs to the betting CS discussed in the main body of the paper. Finally, we end by presenting the results of applying our methods on a real-world housing dataset in~\Cref{appendix:housing-data}. 

\section{Additional Background} 
\label{appendix:additional-background} 

    \subsection{Related Work on Confidence Sequences~(CS)}
    \label{appendix:related-work-CS}
        Confidence sequences~(CSs) are a fundamental tool in sequential analysis, and were introduced into this literature by Robbins and coauthors in a series of papers starting with~\citep{darling1967confidence}. Some other important early works in this area include~\citep{lai1976confidence} and~\citep{jennison1989interim}.
        More recently, there has been a resurgence of interest in confidence sequences, particularly motivated by its applications in anytime valid inference. In anytime valid inference, data samples are received sequentially, and the goal is to derive statistical guarantees that are valid even if one stops sampling and performs inference at a data-dependent time \citep{johari2015always, johari2017peeking, howard2021time, howard2022sequential}. Thus, confidence sequences can also be employed in the multi-armed bandit setting (where one can sample adaptively from multiple streams of data) to enable best arm identification \citep{jamieson2014best}. 
        Most of the papers mentioned above rely on making certain moment assumptions on the data-generating distributions. An important line of recent work aims to relax these assumptions, by  constructing confidence sequences for heavy-tailed data or contaminated data~\citep{wang2022catoni, bhatt2022catoni, mineiro2022lower, wang2023huber}. 
    
    \subsection{Betting-based CS construction}
    \label{appendix:betting-cs-construction}
        The betting-based approach for constructing confidence sequences builds upon the idea of \emph{testing-by-betting}, popularized recently by~\citet{shafer_testing_betting_2021}. This principle states that we can refute a claim~(equivalently, a null hypothesis $H_0$) about the probability distribution generating some data stream, if we can increase our wealth by repeatedly betting on the observations with the restriction that the betting payoffs are \emph{fair} under $H_0$. The restriction of fair payoff implies that that bettor is not expected to make large gains under $H_0$, irrespective of the betting strategy employed~(formally, the wealth process is a non-negative supermartingale). Consequently, if the bettor ends up making a large profit by betting on the observations, this can be considered as evidence against $H_0$; with the relative growth in wealth provide a precise measure of the strength of evidence. 

        To use the above principle for constructing CSs, we simultaneously play a continuum of betting games, indexed by $m \in [0,1]$, each with an initial wealth of $\$1$. For every $m \in [0,1]$, we bet against the claim $H_{0,m}$ that the true misspecified fraction $m^*$ is equal to $m$. We design the payoff functions of this betting game, such that the resulting wealth process, $\{W_t(m): t \geq 1\}$, is a non-negative martingale if $H_{0,m}$ were true, but grows at an exponential rate otherwise. Due to this property, the process at  $m^*$, denoted by $\{W_t(m^*): t \geq 1\}$,  is actually a \emph{test martingale}; that is, a nonnegative martingale with an initial value $1$. Hence, Ville's inequality~(recalled below in~\Cref{fact:ville}) implies that with probability at least $1-\alpha$, the process $(W_t(m^*))$ never exceeds the value $1/\alpha$. This fact, suggests a natural definition of a CS for $m^*$, consisting of sets $C_t = \{m: W_t(m)< 1/\alpha\}$, since these sets contain $m^*$ for all $t \geq 1$, with probability at least $1-\alpha$.
        To conclude, the betting-based approach breaks the task of constructing confidence sequences into two smaller tasks: 
        \begin{enumerate}
            \item Choosing a sequence of payoff functions that are fair under $H_{0,m}$: we achieve this by using the idea of importance weighting. 
            \item Developing a betting strategy that ensures fast growth of the wealth process for all $m \neq m^*$: we use the \kelly strategy for this in our construction. 
        \end{enumerate}
        An additional design choice that is unique to the problem studied in this paper, is that of the \emph{sampling strategy} to select the transaction indices. We discuss several strategies in~\Cref{sec:sampling-strategies}, whose performance is determined by the availability and accuracy of side-information. 

        We end this discussion by recalling a statement of Ville's inequality~\citep{ville1939etude}. 
        \begin{fact}[Ville's Inequality]
            \label{fact:ville} 
            Suppose $\{M_t: t \geq 0\}$ denotes a nonnegative supermartingale adapted to a filtration $\{\mc{F}_t: t \geq 0\}$. Then, for any $\alpha >0$, we have 
            \begin{align}
                \mathbb{P}\lp \exists t \geq 0: M_t \geq 1/\alpha \rp \leq \frac{\mathbb{E}[M_0]}{\alpha}.  
            \end{align}
        \end{fact}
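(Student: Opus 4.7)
The plan is to establish Ville's inequality via a classical optional-stopping argument. Write $c = 1/\alpha$ for the threshold, and define the first-passage time $\tau \defined \inf\{t \geq 0 : M_t \geq c\}$, with the convention $\inf \emptyset = \infty$. Since $(M_t)$ is adapted to $(\mc{F}_t)$, $\tau$ is an $(\mc{F}_t)$-stopping time.

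Next, I would invoke the elementary fact that stopping preserves the supermartingale property: for every finite $t \geq 0$, the stopped process $(M_{t \wedge \tau})$ is again a nonnegative supermartingale, hence $\mathbb{E}[M_{t \wedge \tau}] \leq \mathbb{E}[M_0]$. The key observation is that on the event $\{\tau \leq t\}$ one has $M_{t \wedge \tau} = M_\tau \geq c$ by the definition of $\tau$. Combined with nonnegativity, this gives the pointwise lower bound
\[ M_{t \wedge \tau} \geq c \cdot \mathbf{1}_{\{\tau \leq t\}}. \]
Taking expectations and rearranging yields $\mathbb{P}(\tau \leq t) \leq \mathbb{E}[M_0]/c = \alpha \, \mathbb{E}[M_0]$.

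Finally, I would pass to the time-uniform bound by monotone convergence: the sets $\{\tau \leq t\}$ are nested increasing in $t$, and their union equals $\{\tau < \infty\} = \{\exists t \geq 0 : M_t \geq c\}$, so continuity of probability from below gives
\[ \mathbb{P}\lp \exists t \geq 0 : M_t \geq 1/\alpha \rp \;=\; \lim_{t \to \infty} \mathbb{P}(\tau \leq t) \;\leq\; \alpha \, \mathbb{E}[M_0]. \]
This is the scaling used throughout the paper: applied to a test martingale with $\mathbb{E}[M_0] = 1$, it yields the threshold probability $\alpha$, which is exactly what \Cref{prop:type-I} uses with $\alpha = \delta$.

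The only step requiring any care is verifying that $M_\tau \geq c$ on $\{\tau < \infty\}$. In the discrete-time regime relevant to this paper (where $t$ ranges over $[N]$) this is immediate from the definition of $\tau$ as an infimum over an integer set, so no path-regularity issues arise; in continuous time one would instead need right-continuity of sample paths to pass to the limit at $\tau$. Beyond this, the argument is a routine application of the optional stopping theorem for bounded stopping times $t \wedge \tau$, and does not require any additional integrability hypotheses because $(M_t)$ is already assumed nonnegative.
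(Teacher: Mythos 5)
Your proof is correct and is the standard optional-stopping argument for Ville's inequality; the paper itself does not prove this statement but simply recalls it as a classical fact with a citation to Ville (1939), so your write-up supplies the argument the paper leaves implicit. Each step checks out: $\tau$ is a stopping time, the stopped process $(M_{t\wedge\tau})$ is a nonnegative supermartingale so $\mathbb{E}[M_{t\wedge\tau}]\leq\mathbb{E}[M_0]$, the pointwise bound $M_{t\wedge\tau}\geq c\,\mathbf{1}_{\{\tau\leq t\}}$ uses nonnegativity off the event $\{\tau\leq t\}$, and continuity from below passes to the time-uniform statement. Your closing remark about path regularity is also apt: in the discrete-time setting of this paper ($t$ ranging over $[N]$ or the nonnegative integers) the inequality $M_\tau\geq c$ on $\{\tau<\infty\}$ is immediate, and no further integrability is needed.

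One point worth flagging: your derivation yields $\mathbb{P}(\exists t\geq 0: M_t\geq 1/\alpha)\leq\alpha\,\mathbb{E}[M_0]$, i.e., $\mathbb{E}[M_0]$ \emph{divided by the threshold} $1/\alpha$. The right-hand side displayed in the Fact, $\mathbb{E}[M_0]/\alpha$, cannot be what is intended --- it is vacuous for $\alpha\leq 1$ and false for $\alpha>1$ (take $M_t\equiv 1$). The form you prove, $\alpha\,\mathbb{E}[M_0]$, is the correct one and is exactly what the paper uses downstream (e.g., Proposition 2.1 applies it to a test martingale with $\mathbb{E}[M_0]=1$ to get the bound $\delta$). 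So your proof establishes the statement as it is actually used, and the displayed bound in the Fact should be read as a typo for $\alpha\,\mathbb{E}[M_0]$.
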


    \subsection{Working with minibatches}
    \label{appendix:minibatch}
        In this paper, we have developed our methodology under the assumption that the transactions are sampled and sent to the human auditor, one at a time. In practical scenarios, it may be preferable for the human auditor to evaluate a minibatch of transactions a time, rather than querying the transactions one-by-one. This generalization can be easily handled by updating the wealth process with the averaged payoff (over the minibatch in each round). More specifically, we can proceed as follows, for any $m \in [0,1]$, and for $t=1, 2, \ldots$: 
        \begin{itemize}
            \item Calculate the next sampling distribution, $q_t$. 
            \item Sample the next batch of transactions, $\mc{B}_t \defined \{I_t^{(1)}, \ldots, I_t^{(B)}\}$, with $I_t^{(j)}$ is drawn according to $q_t$ restricted on $\mc{N}_t \setminus \{I_t^{(1)}, \ldots, I_t^{(j-1)}\}$. Recall that, we now have $\mc{N}_t = [N] \setminus \lp \cup_{s=1}^{t-1} \mc{B}_s \rp$. 
            \item Obtain the true $f$ values, $\{f(I_t^{(i)}: 1 \leq i \leq B\}$ from the oracle (i.e., the human auditor). 
            \item Update the wealth process: 
            \begin{align}
                W_t(m) = W_{t-1}(m) \times \lp 1 + \frac{1}{B} \sum_{i=1}^B \lambda_t^{(i)} \lp Z_t^{(i)} - \mu_t^{(i)}(m) \rp \rp, \quad \text{for } m \in [0,1].  
            \end{align}
            Here $\lambda_t^{(i)}$ denotes the bet based on $\cup_{s=1}^{t-1} \mc{B}_t \cup \{X_t^{(1)}, \ldots, X_t^{(i-1)} \}$, and $Z_t^{(i)}-\mu_t^{(i)}$ denotes the analogous payoff function, as defined in~\Cref{subsec:betting-CS-no-side-info}. 
            \item Update the CS, as $C_t = \{m \in [0,1]: W_t(m) < 1/\alpha\}$. 
        \end{itemize}

\section{Proofs}
\label{sec:Proofs}
\subsection{Proof of Proposition~\ref{theorem:oracle-strategy}}
\label{sec:oracle-proof}
    Recall that $\expect_{I_n \sim q}[Z_n] = \mu_n(m^*)$ for any sampling distribution $q \in \Delta^{\mc{N}_n}$. Now, we note the following equivalencies
    \begin{align}
        &\expect_{I_n \sim q}[B_n(\lambda, m)]\\ 
        &= \expect_{I_n \sim q}[\lambda(Z_n - \mu_n(m)) - \lambda^2(Z_n -\mu_n(m))^2]\\
        &=\expect_{I_n \sim q}[\lambda(Z_n - \mu_n(m))] - \expect_{I_n \sim q}[\lambda^2(Z_n -\mu_n(m))^2]\\
        &=\lambda(m^* - m) - \lambda^2\expect_{I_n \sim q}[(Z_n- \mu^*_n(m) -\mu_n(m) + \mu^*_n(m))^2]\\
        &=\lambda(m^* - m) - \lambda^2(m^* - m)^2 + 2\lambda^2\expect_{I_n \sim q}[Z_n- \mu_n(m^*)](m^*-m) - \lambda^2\expect_{I_n \sim q}[(Z_n- \mu^*_n(m))^2]\\
        &= \lambda(m^* - m) - \lambda^2(m^* - m)^2 - \mathbb{V}_{I_n \sim q}[Z_n].
    \end{align}
    
    The above equivalencies show that $q_n^* = \argmin_{q \in \Delta^{\mc{N}_{n}}}\ \mathbb{V}_{I_n \sim q}[Z_n]$ by definition of $q_n^*$ in \eqref{eq:max-bound}, and since $\lambda, m^*, m$ are fixed in the optimization problem. Consequently, the minimizer of $\mathbb{V}_{I_n \sim q}[Z_n]$ is when the distribution of $Z_n$ has support on only a single value and the variance is 0. This is achieved when $q_n(i) \propto \pi(i)f(i)$ for each $i \in \mc{N}_n$. Hence, we have shown our desired result.

    \subsection{Proof of~ Proposition~\ref{prop:control-variates-1}}\label{proof:control-variates-1}
        For any $t \geq 1$, introduce the random variable $D_t = t \times \lp \Mtilde_t - A_t\rp = \sum_{i=1}^{t} \beta_i U_i$.  By construction of the term $U_i$, we know that for any $t \geq 1$, we have
        \begin{align}
            \mathbb{E}[D_t|\mc{F}_{t-1}] = \sum_{i=1}^{t-1} \beta_i U_i + \beta_t \mathbb{E}[D_t|\mc{F}_{t-1}] = D_{t-1}.
        \end{align}
        Thus, $\{D_t: t \geq 1\}$ is a martingale process. Furthermore, since both $\beta_t$ and $U_t$ lie in the set $[-1,1]$, the martingale process $\{D_t: t \geq 1\}$ has bounded differences.  Hence, by using the time-uniform deviation inequality for martingales with bounded differences~\citep[Eq.~(11)]{howard2021time}, we have
        \begin{align}
            \mathbb{P} \lp \exists t \leq n: |D_t| > 1.7 \sqrt{t \lp  \log \log(2t) + 0.72\log(10.4/\delta) \rp } \rp < \delta.
        \end{align}
        Since $|\Mtilde_t - A_t|/t = |D_t|/t$, the result follows.

\section{Alternative Definitions of RLFA}
\label{sec:alt-defs}

\paragraph{Testing based $\boldsymbol{\rlfa}$.}  An alternative notion of RLFAs, that mirrors the corresponding definition of risk-limiting audits~\citep{stark_conservative_statistical_2008a} more closely, can be obtained by framing it  as the task of testing whether the overall misstated fraction, $m^*$, is ``small''.  This can be interpreted as auditing the claim that all the reported monetary values are accurate. 
Formally, we assume that the announced assertion is that $m^* \leq \varepsilon$, and we want to design a procedure that (i) with probability at least $(1-\delta)$, rejects this assertion if it is false, and (ii) confirms this assertion with probability $1$ if it is true. Our general, CS-based strategy that we developed in this paper can be easily adapted to this problem --- we simply define the stopping time $\tau(\varepsilon, \delta) \coloneqq \min\{t \in [N]:  \mc{C}_t \subseteq (\varepsilon, 1] \cup \{N\}$ and reject if $C_{\tau(\varepsilon, \delta)} \subseteq (\varepsilon, 1]$. Properties (i) and (ii) follow immediately from the definition of a CS, and the consistency of a CS, i.e., $|\mc{C}_t| \rightarrow 0$ as $t \rightarrow N$.

\paragraph{Auditing the remaining misstated fraction.} Another quantity of interest is the \textit{remaining} misstated fraction, i.e., what is the remaining misstated fraction assuming we correct the transaction values for the transactions we have audited? In many practical auditing scenarios, the company does correct their finances in accordance with their records for the audited transactions, i.e., $f(I_i) = 0$ for each $i \in [t]$ after querying the $t$th transaction. This is equivalent to estimating $\mu_t(m^*)$, a time varying quantity, instead of the static quantity of $m^*$. Since $\mu_t(m^*)$ is simply a shift of $m^*$ by a quantity that is known to the auditor, all estimates of $m^*$ we produce in this work can easily be transformed into estimates of the remaining misstated fraction as well (by subtracting $\sum_{i \in [t]} \pi(I_i)f(I_i)$ from both boundaries of our CSs). Thus, we can estimate the remaining misstated fraction as efficiently as we estimate $m^*$.

\section{Hoeffding and empirical-Bernstein Confidence sequences}\label{sec:hoeffding-empirical-bernstein}

    In this section, we present a different approach for constructing confidence sequences, that are based on nonnegative supermartingales (NSMs), instead of nonnegative martingales used by the betting based CS (\Cref{subsec:betting-CS-no-side-info}). While these CSs are typically looser than the betting CS defined in \eqref{eq:conf-seq-def-1}, they are computationally inexpensive and can be derived analytically. We will introduce two such CSs, the Hoeffding CS and empirical-Bernstein CS, and each will have boundaries that take on an explicit form. Thus, only constant time is needed to compute the boundaries for each new sample. In contrast, the betting CS computes its boundaries through a root finding procedure which requires $O(t)$ computations to derive updated boundaries after receiving the $t$th sample. We provide simulations comparing the Hoeffding and empirical-Bernstein CSs with the betting CS in \Cref{sec:HoefEBExperiments}.
    Before defining our CSs, we  first introduce the following quantities:
    \begin{align}
        \widehat{m}_t \coloneqq \frac{\pi(I_t)}{q_t(I_t)}f(I_t) + \sum\limits_{i = 1}^{t - 1}\pi(I_i) f(I_i), \quad
        \widehat{\mu}_t \coloneqq  \frac{\sum\limits_{i = 1}^t \widehat{m}_i}{t}, \qquad \widehat{\mu}_t(\lambda_1^t) \coloneqq  \frac{\sum\limits_{i = 1}^t \lambda_i\widehat{m}_i}{\sum\limits_{i = 1}^t \lambda_i}.
    \end{align} Note that \(\widehat{\mu}_t(\lambda_1^t)\) where \(\lambda_1 = \ldots= \lambda_t = 1\) is equivalent to \(\widehat{\mu}_t\).

    \subsection{Hoeffding CS}\label{subsec:hoeffding-cs}
        To define the Hoeffding CS, we first define the nonnegative supermartingale (NSM) associated with it.
        As noted in prior work \citep{howard2021time,waudby2020estimating}, let the following be a CGF-like function for Hoeffding:
        \begin{align}
            \psi_{\rmH}^c(\lambda) \coloneqq \frac{\lambda^2c^2}{8},
        \end{align} for any fixed \(c > 0\). Now we define the following Hoeffding NSM as follows:
        \begin{align}
            M_t^{\rmH}(m) &\coloneqq \exp\left(\sum\limits_{i= 1}^t \lambda_i(Z_i - \mu_i(m)) - \psi_{\rmH}^{c_i}(\lambda_i)\right) =\exp\left(\sum\limits_{i= 1}^t \lambda_i(\widehat{m}_i - m) - \psi_{\rmH}^{c_i}(\lambda_i)\right),
        \end{align} where  \(c_t \geq \max_{i \in \mathcal{U}_t} \pi(i) / q_t(i)\), and both \((\lambda_t)\) and \((c_t)\) are predictable w.r.t.\ \((\filtration_t)\).

        \begin{proposition}
        \label{prop:hoeffding-nsm}
            \((M_t^{\rmH}(m^*))_{t \in [N]}\) is an NSM.
        \end{proposition}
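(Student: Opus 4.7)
The plan is to verify the defining inequality $\mathbb{E}[M_t^{\rmH}(m^*) \mid \mathcal{F}_{t-1}] \leq M_{t-1}^{\rmH}(m^*)$ directly from the definition, using the standard Hoeffding MGF bound on the conditional distribution of $Z_t$. Nonnegativity of $M_t^{\rmH}(m^*)$ is immediate since it is an exponential.

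First I would write $M_t^{\rmH}(m^*) = M_{t-1}^{\rmH}(m^*) \cdot \exp\bigl(\lambda_t(Z_t - \mu_t(m^*)) - \psi_{\rmH}^{c_t}(\lambda_t)\bigr)$. Since $(\lambda_t)$, $(c_t)$, and $\mu_t(m^*) = m^* - \sum_{j=1}^{t-1} \pi(I_j) f(I_j)$ are all $\mathcal{F}_{t-1}$-measurable, and $M_{t-1}^{\rmH}(m^*)$ is as well, these can be pulled out of the conditional expectation, reducing the claim to showing
\begin{align}
    \mathbb{E}\bigl[\exp(\lambda_t(Z_t - \mu_t(m^*))) \,\bigm|\, \mathcal{F}_{t-1}\bigr] \leq \exp(\psi_{\rmH}^{c_t}(\lambda_t)) = \exp(\lambda_t^2 c_t^2 / 8).
\end{align}

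The next step is to verify that $Z_t - \mu_t(m^*)$ is, conditionally on $\mathcal{F}_{t-1}$, a zero-mean random variable bounded in a range of length at most $c_t$. For the mean, use importance weighting: $\mathbb{E}[Z_t \mid \mathcal{F}_{t-1}] = \sum_{i \in \mathcal{N}_t} q_t(i) \cdot \frac{\pi(i) f(i)}{q_t(i)} = \sum_{i \in \mathcal{N}_t} \pi(i) f(i) = m^* - \sum_{j=1}^{t-1} \pi(I_j) f(I_j) = \mu_t(m^*)$. For the range, note that $f(i) \in [0,1]$ implies $Z_t = f(I_t) \pi(I_t)/q_t(I_t) \in [0,\, \pi(I_t)/q_t(I_t)] \subseteq [0, c_t]$ by the predictable choice of $c_t \geq \max_{i \in \mathcal{N}_t} \pi(i)/q_t(i)$. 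Hence the shifted variable $Z_t - \mu_t(m^*)$ lies in an interval of length at most $c_t$.

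The final step is to invoke Hoeffding's lemma: for any zero-mean random variable $X$ supported in an interval of length $b-a \leq c_t$, one has $\mathbb{E}[\exp(\lambda X)] \leq \exp(\lambda^2(b-a)^2/8) \leq \exp(\lambda^2 c_t^2 / 8)$. Applying this conditionally to $X = Z_t - \mu_t(m^*)$ with $\lambda = \lambda_t$ yields the required bound, and the supermartingale property follows. There is no real obstacle here — the construction is tailored so that importance weighting delivers the correct conditional mean and the predictable envelope $c_t$ delivers the correct range; the only care needed is to keep track of what is $\mathcal{F}_{t-1}$-measurable so that the predictable quantities are treated as constants inside the conditional expectation.
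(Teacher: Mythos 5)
Your proof is correct and follows essentially the same route as the paper's: both reduce to the facts that $\expect[Z_t \mid \filtration_{t-1}] = \mu_t(m^*)$ by importance weighting, that $Z_t$ is conditionally supported in $[0, c_t]$, and then invoke Hoeffding's MGF lemma for bounded random variables. You simply spell out the predictability bookkeeping and the conditional-mean computation that the paper leaves implicit.
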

        \begin{proof}
            First, note that \(\expect[Z_t \mid \filtration_{t - 1}]=\mu_t(m)\) since we are assuming the null $H_{0,m}$ is true. Second, note that \(Z_t \in [0, \max_{i \in \mathcal{U}_t} \pi(i) / q_t(i)]\) is bounded.
            Thus, the desired statement follows directly from the MGF bound on bounded random variables i.e.\ if \(X \in [\ell, u]\) is a random variable, then \(\expect[\exp(\lambda(X - \expect[X]))] \leq \exp(\lambda^2(u - \ell)^2 / 8)\) for any \(\lambda \in \reals\).
        \end{proof}

        Consequently, we can derive the following Hoeffding CS:
        \begin{align}
            C^{\rmH}_t \coloneqq \left(\widehat{\mu}_t(\lambda_1^t) \pm \frac{\log(2 / \alpha) + \sum\limits_{i = 1}^t \psi_{\rmH}^{c_i}(\lambda_i)}{\sum\limits_{i = 1}^t \lambda_i}\right) \cap [0, 1],
            \label{eqn:hoeffding-cs}
        \end{align} where  \(c_t \geq \max_{i \in \mathcal{U}_t} \pi(i) / q_t(i)\), and both \((\lambda_t)\) and \((c_t)\) are predictable w.r.t.\ \((\filtration_t)\).

    \subsection{Empirical-Bernstein CS}

        Define the following CGF-like function for empirical-Bernstein:
        \begin{align}
            \psi_{\E}^c \coloneqq \frac{-\log(1 - c\lambda) - c\lambda}{c^2},
        \end{align} for any \(c > 0\). Now we define the following empirical-Bernstein NSM:
        \begin{align}
            M_t^{\EB}(m) &\coloneqq \exp\left(\sum\limits_{i= 1}^t \lambda_i(Z_i - \mu_i(m)) - (Z_i - \widehat{\mu}_{i - 1})^2\psi_{\E}^{c_i}(\lambda_i)\right)\\
            &=\exp\left(\sum\limits_{i= 1}^t \lambda_i(\widehat{m}_t - m) - (Z_i - \widehat{\mu}_{i - 1})^2\psi_{\E}^{c_i}(\lambda_i)\right),
        \end{align} where \(\lambda_t \in [0, 1 / c_t)\), \(c_t \geq \widehat{\mu}_{t - 1}\), and both \((\lambda_t)\) and \((c_t)\) are predictable w.r.t.\ \((\filtration_t)\).

         \paragraph{Constructing the upper CS through mirroring.} \(M_t^{\EB}\) can be used to construct a CS that lower bounds \(m^*\), but naively constructing an analog NSM (i.e., by negating $ Z_t - \mu_t(m)$ into $\mu_t(m) - Z_t$) results in a loose construction for the upper CS, since $c_t$ would need to lower bound $\widehat{\mu}_{t - 1} - Z_t$. $Z_t$ (and hence $c_t$) be quite large depending on the sampling probabilities $q_t$. Thus, we use the fact that $m^* \in [0, 1]$ and hence $1 - m^* \in [0, 1]$ to construct a ``mirroring'' lower CS for $1 - m^* = \sum_{i = 1}^N \pi(i)(1 - f(i))$. The lower CS for $1 - m^*$ is based upon the following NSM:
        \begin{align}
            {M'}_t^{\EB}(m) &\coloneqq \exp\left(\sum\limits_{i= 1}^t \lambda_i\left(\widetilde{Z}_i - \left(1 - m + \sum\limits_{j = 1}^{i - 1}\pi(I_j)(1 - f(I_j))\right) - (\widetilde{Z}_i - \widetilde{\mu}_{i - 1})^2\psi_{\E}^{c_i}(\lambda_i)\right)\right)\\ &=\exp\left(\sum\limits_{i= 1}^t \lambda_i(\widetilde{m}_t - (1 - m)) - (\widetilde{Z}_i - \widetilde{\mu}_{i - 1})^2\psi_{\E}^{c_i}(\lambda_i)\right).
        \end{align} Here, define $\widetilde{Z}_t \coloneqq \frac{\pi(I_t)}{q_t(I_t)}(1 - f(I_t))$, and let $\widetilde{m}_t$, and  $\widetilde{\mu}_t$ be counterparts of $\widehat{m}_t$ and $\widehat{\mu}_t$ where $f(i)$ is replaced with $1 - f(i)$ in the respective definitions for each \(i \in [N]\).
        \begin{proposition}
        \label{prop:EBNSM}
        \((M_t^{\EB}(m^*))_{t \in [N]}\) and \(({M_t'}^{\EB}(m^*))_{t \in [N]}\) are both NSMs.
        \end{proposition}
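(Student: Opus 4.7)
The plan is to verify the nonnegative supermartingale property directly for $(M_t^{\EB}(m^*))$; the argument for $({M_t'}^{\EB}(m^*))$ then follows by a symmetric argument after substituting $f(i) \mapsto 1 - f(i)$ and $m^* \mapsto 1 - m^*$, since $\widetilde{Z}_t = \pi(I_t)(1-f(I_t))/q_t(I_t)$ is the corresponding unbiased importance-weighted estimator (and $1 - m^*$ lies in $[0,1]$ just as $m^*$ does). Nonnegativity of both processes is immediate since each is an exponential. So the content of the proposition is verifying $\mathbb{E}[M_t^{\EB}(m^*) \mid \mathcal{F}_{t-1}] \leq M_{t-1}^{\EB}(m^*)$.

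First, I would decompose the incremental exponent by writing $Z_t - \mu_t(m^*) = (Z_t - \widehat{\mu}_{t-1}) + (\widehat{\mu}_{t-1} - \mu_t(m^*))$. Because $\lambda_t$, $c_t$, and $\widehat{\mu}_{t-1}$ are all $\mathcal{F}_{t-1}$-measurable, the factor $\exp(\lambda_t(\widehat{\mu}_{t-1} - \mu_t(m^*)))$ can be pulled outside the conditional expectation, yielding
\begin{align*}
\frac{\mathbb{E}[M_t^{\EB}(m^*) \mid \mathcal{F}_{t-1}]}{M_{t-1}^{\EB}(m^*)} = e^{\lambda_t(\widehat{\mu}_{t-1} - \mu_t(m^*))}\, \mathbb{E}\!\left[e^{\lambda_t(Z_t - \widehat{\mu}_{t-1}) - (Z_t - \widehat{\mu}_{t-1})^2\,\psi_E^{c_t}(\lambda_t)} \,\big|\, \mathcal{F}_{t-1}\right].
\end{align*}
The key analytic step is the Fan-type inequality: for any $\tilde{c} > 0$, any $x \geq -\tilde{c}$, and any $\lambda \in [0, 1/\tilde{c})$, one has $\exp(\lambda x - x^2 \psi_E^{\tilde{c}}(\lambda)) \leq 1 + \lambda x$. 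Applying this with $x = Z_t - \widehat{\mu}_{t-1}$ and $\tilde{c} = c_t$ is valid because $Z_t \geq 0$ and $\widehat{\mu}_{t-1} \leq c_t$ by hypothesis give $x \geq -c_t$, while $\lambda_t \in [0, 1/c_t)$ gives the required bound on $\lambda_t c_t$. This produces the pointwise bound $\exp(\lambda_t(Z_t - \widehat{\mu}_{t-1}) - (Z_t - \widehat{\mu}_{t-1})^2 \psi_E^{c_t}(\lambda_t)) \leq 1 + \lambda_t(Z_t - \widehat{\mu}_{t-1})$.

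Taking conditional expectation and using $\mathbb{E}[Z_t \mid \mathcal{F}_{t-1}] = \sum_{i \in \mathcal{N}_t} q_t(i)\,\pi(i)f(i)/q_t(i) = \mu_t(m^*)$, the right-hand side collapses to $1 + \lambda_t(\mu_t(m^*) - \widehat{\mu}_{t-1})$. Writing $u = \lambda_t(\mu_t(m^*) - \widehat{\mu}_{t-1})$, the conditional ratio is bounded above by $e^{-u}(1+u)$, which is at most $1$ for every $u > -1$. The hypotheses already guarantee $u > -1$: $\mu_t(m^*) \geq 0$ (as it is the remaining misstated fraction) and $\lambda_t \widehat{\mu}_{t-1} \leq \lambda_t c_t < 1$. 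Hence the conditional expectation is $\leq M_{t-1}^{\EB}(m^*)$, establishing the supermartingale property. The main subtlety lies in carefully checking the two range conditions (namely $Z_t - \widehat{\mu}_{t-1} \geq -c_t$ for Fan's inequality and $u > -1$ for the final $e^{-u}(1+u) \leq 1$ step); both are secured by the imposed predictability and boundedness constraints on $(\lambda_t)$ and $(c_t)$. The mirrored process $({M_t'}^{\EB}(m^*))$ is handled by the identical chain of inequalities applied to $\widetilde{Z}_t$, $\widetilde{\mu}_{t-1}$, and $1 - m^*$.
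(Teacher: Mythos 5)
Your proof is correct and follows essentially the same route as the paper's: the same decomposition $Z_t - \mu_t(m^*) = (Z_t - \widehat{\mu}_{t-1}) + (\widehat{\mu}_{t-1} - \mu_t(m^*))$, the same application of Fan's inequality to the centered term, the same use of $\expect[Z_t \mid \filtration_{t-1}] = \mu_t(m^*)$, and the same final bound $e^{-u}(1+u) \le 1$ (the paper phrases it as $1 - x \le e^{-x}$, which holds for all real $x$, so your extra check that $u > -1$ is harmless but unnecessary).
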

    To prove \Cref{prop:EBNSM}, we introduce the following key lemma from \citet{fan_exponential_inequalities_2015}.
\begin{lemma}[{\citealt[Lemma 4.1]{fan_exponential_inequalities_2015}}]
    \label{lemma:Fan}
    Let \(\xi\) be a number bounded from below i.e.\ satisfies \(\xi \geq -c\), where \(c \in \reals^+\) is a fixed constant. Let the following be true: \(\lambda \in [0, 1/c)\). Then,
    \begin{align}
    1 + \lambda \xi \geq \exp(\lambda \xi + \xi^2(\log(1 - c\lambda) + c\lambda)).
    \end{align}
    \end{lemma}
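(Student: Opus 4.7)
Both sides of the inequality equal $1$ when $\lambda=0$ or $\xi=0$, so I will assume $\lambda \in (0,1/c)$ and $\xi \neq 0$. Since $\lambda\xi \geq -c\lambda > -1$, both sides are strictly positive, and after taking logarithms the claim is equivalent to
\begin{align*}
\log(1+\lambda\xi) - \lambda\xi \;\geq\; \frac{\xi^{2}}{c^{2}}\bigl(\log(1-c\lambda)+c\lambda\bigr),
\end{align*}
where the coefficient on the right carries a factor $1/c^{2}$; this is the form that matches the empirical-Bernstein CGF $\psi_{\E}^{c}$ in the excerpt (the factor $1/c^{2}$ is absorbed into the definition of $\psi_{\E}^{c}$, and the literal $\xi^{2}$ coefficient as stated can already fail at, e.g., $c=1/2$, $\lambda=1$, $\xi=-1/2$).

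The plan is to reduce this to the monotonicity of a single auxiliary function. Define $\phi:(-1,\infty)\to\mathbb{R}$ by
\[
\phi(y) \;=\; \frac{\log(1+y)-y}{y^{2}} \quad (y\neq 0), \qquad \phi(0) \;:=\; -\tfrac{1}{2},
\]
which is continuous at $0$ by the Taylor expansion of $\log(1+y)$. Granting for a moment that $\phi$ is non-decreasing on $(-1,\infty)$, I set $y=\lambda\xi \geq -c\lambda$ and apply monotonicity to obtain $\phi(\lambda\xi) \geq \phi(-c\lambda) = \bigl(\log(1-c\lambda)+c\lambda\bigr)/(c\lambda)^{2}$. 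Multiplying through by $(\lambda\xi)^{2}>0$ yields exactly the displayed bound, and exponentiating both sides finishes the argument.

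The only nontrivial step is therefore the monotonicity of $\phi$. Direct differentiation gives $\phi'(y) = h(y)/y^{3}$ where $h(y) := 2y - 2\log(1+y) - y^{2}/(1+y)$. A short computation yields
\[
h'(y) \;=\; 2 - \frac{2}{1+y} - \frac{2y(1+y)-y^{2}}{(1+y)^{2}} \;=\; \frac{y^{2}}{(1+y)^{2}} \;\geq\; 0,
\]
so $h$ is non-decreasing with $h(0)=0$. Hence $h(y)$ has the same sign as $y$ on $(-1,\infty)\setminus\{0\}$, matching the sign of $y^{3}$, and so $\phi'(y)\geq 0$ throughout. The main obstacle is identifying the right auxiliary function $\phi$ and recognizing that the derivative $h'$ collapses to a nonnegative perfect square; the rest of the argument is elementary substitution.
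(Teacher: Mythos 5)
Your proof is correct and follows essentially the same route as the paper's: both arguments reduce the claim to the monotonicity of $x \mapsto (\log(1+x)-x)/x^2$ on $(-1,\infty)$ and then compare its values at $\lambda\xi$ and $-c\lambda$. You add two things of genuine value: you actually verify the monotonicity (via $\phi'(y)=h(y)/y^{3}$ with $h'(y)=y^{2}/(1+y)^{2}\geq 0$), which the paper merely asserts, and you correctly point out that the inequality as displayed in the statement is missing a factor of $1/c^{2}$ on the $\xi^{2}$ term --- your counterexample is valid, and the paper's own rearrangement in its proof, as well as its subsequent use of the lemma through the empirical-Bernstein CGF (which absorbs the $1/c^{2}$), are consistent with the corrected form you prove.
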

    \begin{proof}
    The proof revolves around the following function \(h\):
    \begin{align}
        h(x) \coloneqq \frac{\log(1 + x) - x}{x^2 / 2}, \qquad x > -1.
    \end{align}
    Note that \(f\) is increasing in its domain. Then, \(\lambda \xi \geq -c\lambda > -1\) by definition of \(\lambda\) and \(\xi\). Thus,
    \begin{align}
        h(\lambda \xi) &\geq h(-c\lambda) \Leftrightarrow \frac{\log(1 + \lambda \xi) - \lambda \xi}{\xi^2}  \geq \frac{\log(1 - c\lambda) + c\lambda}{c^2}.
    \end{align} The desired statement follows from expanding this inequality and rearranging terms.
    \end{proof}

    \begin{proof}[Proof of \Cref{prop:EBNSM}] We will only show the proof that $(M^{\EB}_t(m^*))$ is an NSM, since the proof that $({M'}_t^{\EB}(m^*))$ is an NSM will follow a similar derivation.
    Let \(Y_t = Z_i - \mu_i(m)\) and \(\delta_t = \widehat{\mu}_{t - 1} - \mu_i(m)\).
    Note that \(Y_t - \delta_t = Z_t - \widehat{\mu}_{t - 1}\). To prove our desired statement, it suffices to show the following is true:
    \begin{align}
    \expect\left[\exp(\lambda_t Y_t - (Y_t - \widehat{\mu}_{t - 1})^2\psi_{\E}^{\widehat{\mu}_{t - 1}}) \mid \filtration_{t - 1}\right] \leq 1.
    \label{eqn:EBIncrement}
    \end{align}

    We will now show that \eqref{eqn:EBIncrement} is indeed true:

    \begin{align}
    \expect\left[\exp(\lambda_t Y_t - (Y_t - \delta_t)^2\psi_{\E}^{\widehat{\mu}_{t - 1}}) \mid \filtration_{t - 1}\right] &= \expect\left[\exp(\lambda_t (Y_t - \delta_t) - (Y_t - \delta_t)^2\psi_{\E}^{\widehat{\mu}_{t - 1}}) \mid \filtration_{t - 1}\right]\exp(\lambda_t \delta_t)\\
    &\leq\expect\left[1 + \lambda_t (Y_t - \delta_t)  \mid \filtration_{t - 1}\right]\exp(\lambda_t \delta_t)\\
    &= \expect\left[1 - \lambda_t \delta_t  \mid \filtration_{t - 1}\right]\exp(\lambda_t \delta_t) \leq 1.
    \end{align}

    The 1st inequality is by application of \Cref{lemma:Fan} with \(\xi_t = Y_t - \delta_t\) and \(c = \widehat{\mu}_{t - 1}\) (\(Z_t \geq 0\), so \(\xi_t \geq -\widehat{\mu}_{t - 1}\)). The 2nd equality is the result of \(\expect[Y_t \mid \filtration_{t - 1}] = (m^* - m^*_t) - (m^* - m^*_t) = 0\). The last inequality is by \(1 - x \leq \exp(-x)\) for all \(x \in \reals\).
    Thus, we have proved \eqref{eqn:EBIncrement} and our desired statement as a result.
    \end{proof}
    As a result, we can construct the following empirical-Bernstein CS:
    \begin{align}
        C_t^{\EB} \coloneqq &\left(\widehat{\mu}({\lambda'}_1^t) - \frac{\log(2 / \alpha) + \sum\limits_{i = 1}^t(Z_i - \widehat
        {\mu}_{i - 1})^2\psi_E^c(\lambda_i')}{\sum\limits_{i= 1}^t \lambda_i'}, 1 - \widetilde{\mu}_t({\lambda}_1^t) + \frac{\log(2 / \alpha) + \sum\limits_{i = 1}^t(\widetilde{Z}_t - \widetilde{\mu}_{t - 1})^2\psi_E^c(\lambda_i)}{\sum\limits_{i = 1}^t \lambda_i} \right)
        \cap [0, 1].
    \end{align} The mirroring trick of constructing a lower CS for $1 - m^*$ was originally employed to construct confidence bounds for off policy evaluation in contextual bandits \cite{thomas_highconfidence_offpolicy_2015,waudby-smith_anytimevalid_offpolicy_2022}. To the best of our knowledge, this is the first use of the mirroring trick simply for mean estimation.

        For both the Hoeffding and empirical-Bernstein CSs, we show in \Cref{sec:HoefEBComparison} that we are able to recover the unweighted, uniform sampling versions introduced by \citet{waudby2020confidence} as a special case, i.e., when $q_t$ is the uniform distribution over the remaining items and $\pi$ is uniform over all items. Thus, our formulations of the Hoeffding and empirical-Bernstein CSs generalize the CSs for sampling without replacement in \cite{waudby2020confidence} to weighted sampling and estimation.

\section{Connections with Waudby-Smith and Ramdas \cite{waudby2020confidence,waudby2020estimating}}
\label{sec:HoefEBComparison}
CSs for estimation of the unweighted mean through uniform sampling without replacement are shown in \citet{waudby2020confidence} for Hoeffding and empirical-Bernstein style CSs, and \citet{waudby2020estimating} for betting style CSs. In this section, we show these results are a special case of our results that also account for non-uniform sampling strategies and weighted means. For simplicity, we will show the Hoeffding case, and the results for the other CSs follow a similar argument. Following the notation of \citet{waudby2020confidence}, let \((X(i))_{i \in [N]}\) be a finite population of values in $[0, 1]$ (without loss of generality to arbitrary bounds on the $X(i)$). Let \(X_1, \dots, X_N\) be random variables that are the the result of from sampling uniformly w/o replacement from this population. \citet{waudby2020confidence} construct the following NSM for $\mu \coloneqq \tfrac{1}{N}\sum\limits_{i = 1}^N X(i)$:

        \begin{align}
            M_t^{\WSR}(m) \coloneqq \exp\left(\sum\limits_{i = 1}^t\lambda_i\left(X_i- m + \frac{1}{N-i+1}\sum\limits_{j = 1}^{i - 1}(X_j - m)\right) - \psi_H(\lambda_i)\right),
        \end{align} and derive the CS
        \begin{align}
            C_t^{\WSR} \coloneqq \left(\widehat{\mu}_t^{\WSR}(\lambda_1^t) \pm \frac{\log(2 / \alpha) + \sum\limits_{i = 1}^t\psi_H(\lambda_i)}{\sum\limits_{i = 1}^t \lambda_i\left(1 + \frac{i - 1}{N - i + 1}\right)}\right). 
        \end{align}
        The center of this CS is defined as
        \begin{align}
            \widehat{\mu}_t^{\WSR}(\lambda_1^t) \coloneqq \frac{\sum\limits_{i = 1}^t\lambda_i\left(X_i + \frac{1}{N-i + 1}\sum\limits_{j = 1}^{i - 1}X_j\right)}{\sum\limits_{i = 1}^t \lambda_i(1 + \frac{i - 1}{N - i + 1})} = \frac{\sum\limits_{i = 1}^t\lambda_i\left(X_i + \frac{1}{N-i + 1}\sum\limits_{j = 1}^{i - 1}X_j\right)}{\sum\limits_{i = 1}^t \lambda_i \cdot \frac{N}{N - i + 1}}.
        \end{align}

        In the weighted setting, \(\pi(i) = 1 / N\) and \(f(i) = X(i)\) for each \(i \in [N]\) implies that we are estimating the uniformly weighted average $m^* = \mu$. For each \(t \in [N]\) and \(i \in \mc{N}_t\), set \(q_t(i)  = 1 / (N - t + 1)\) to be the uniform distribution over the remaining items. This gets us the following estimate of the mean from our Hoeffding CS:
        \begin{align}
            \widehat{\mu}_t({\lambda_1'}^t) = \frac{\sum\limits_{i = 1}^t \lambda_i'\left(\frac{N - i + 1}{N}X_i + \frac{1}{N}\sum\limits_{j = 1}^{i - 1}X_j\right)}{\sum\limits_{i = 1}^t \lambda_i'}.
        \end{align}

        By setting \(\lambda'_i = \lambda_i N / (N - i + 1)\), for each \(i \in [t]\), we get that $\widehat{\mu}_t({\lambda'}_1^t) = \widehat{\mu}^{\WSR}_t(\lambda_1^t)$. To see that $C_t^H = C_t^{\WSR}$, we set $c_t = (N - t + 1) / N$ for each $t \in [N]$. Note that that is minimum possible value that $c_t$ can be since $\pi(i) = 1 / N$ for each $i \in [N]$, and $q_t(i) = 1/ (N - t + 1)$ for each $i \in \mc{N}_t$. As a result, we are able to recover the Hoeffding CS from \cite{waudby2020confidence} as a special case of our Hoeffding CS.

\section{Experiments Comparing Different CS Constructions}
\label{sec:HoefEBExperiments}
    In \Cref{fig:HoefEBCS}, we compare the width of the Hoeffding, empirical-Bernstein, and betting CSs under the \propM sampling strategy, i.e., under a weighted sampling strategy. We follow the same setup as Experiment 1 in \Cref{sec:experiments}. We see that the empirical-Bernstein CS is tighter in cases where $\Nlarge$ is larger. In these cases, the support size, $c_t$, is large for $\Nlarge$ transactions, with makes the Hoeffding CS looser as a result. On the other hand, empirical-Bernstein is able to take advantage of the low-variance from a large number of transactions having similar misstated fractions $f(I_t)$. However, when $\Nlarge$ is small, most transactions will have a small support size, $c_t$, and the Hoeffding CS will be tighter than empirical-Bernstein as a result. The betting CS is tighter than both Hoeffding and empirical-Bernstein CSs in all our simulated setups. This trend is reflected in \Cref{fig:HoefEBHist} where we plot the histogram of the first time the CS reaches $\varepsilon=0.2$ width, i.e., the empirical-Bernstein CS reaches $\varepsilon$ width faster than the Hoeffding CS when $\Nlarge$ is larger, and the betting CS is the faster than both Hoeffding and empirical-Bernstein CSs.
    \begin{figure}[htb!]
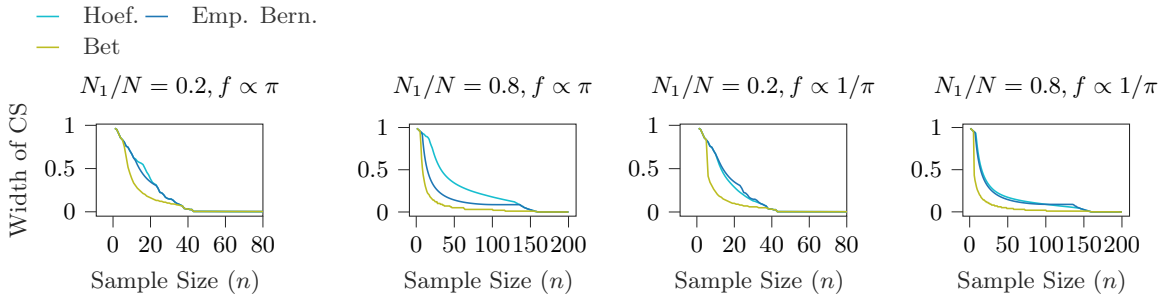

            \def\figwidth{0.23\textwidth}
            \def\figheight{0.17\textwidth} %
        \hspace*{-2pt}
        \input{Figures/hoef_eb/exp1_prop_2_cs.tex}
        \input{Figures/hoef_eb/exp1_prop_8_cs.tex}
        \input{Figures/hoef_eb/exp1_inv_2_cs.tex}
        \input{Figures/hoef_eb/exp1_inv_8_cs.tex}
        \caption{Plots showing the variation of the width of the betting, Hoeffding, and empirical-Bernstein CSs using \propM sampling strategies in different data regimes with $N=200$; all CSs here also are intersected with the logical CS of~\Cref{subsec:logical-CS}. We can see that empirical-Bernstein is tighter than Hoeffding in cases where the proportion of transactions with large weights is high (i.e., $\Nlarge / N$ is large), and vice versa. Across the board, the betting CS is tighter than both Hoeffding and empirical-Bernstein CSs.}
        \label{fig:HoefEBCS}
    \end{figure}
    \begin{figure}[htb!]
            \def\figwidth{0.23\textwidth}
            \def\figheight{0.17\textwidth} %
        \begin{tikzpicture}

\definecolor{darkslategray38}{RGB}{38,38,38}
\definecolor{darkturquoise23190207}{RGB}{23,190,207}
\definecolor{goldenrod18818934}{RGB}{188,189,34}
\definecolor{lightgray204}{RGB}{204,204,204}
\definecolor{steelblue31119180}{RGB}{31,119,180}

\begin{axis}[
axis line style={darkslategray38},
height=\figheight,
legend cell align={left},
legend style={fill opacity=0.8, draw opacity=1, text opacity=1, draw=none, legend columns=2, at={(0.5,2.4)}, anchor=north},
tick align=outside,
tick pos=left,
title={\(\displaystyle N_1 / N = 0.2, f \propto \pi\)},
width=\figwidth,
x grid style={lightgray204},
xlabel=\textcolor{darkslategray38}{Stopping Times},
xmin=16.15, xmax=34.85,
xtick style={color=darkslategray38},
y grid style={lightgray204},
ylabel=\textcolor{darkslategray38}{Density},
ymin=0, ymax=1,
ytick style={color=darkslategray38}
]
\draw[draw=none,fill=darkturquoise23190207,fill opacity=0.7] (axis cs:25,0) rectangle (axis cs:25.8,0.005);
\addlegendimage{ybar,ybar legend,draw=none,fill=darkturquoise23190207,fill opacity=0.7}
\addlegendentry{Hoef.}

\draw[draw=none,fill=darkturquoise23190207,fill opacity=0.7] (axis cs:25.8,0) rectangle (axis cs:26.6,0.02);
\draw[draw=none,fill=darkturquoise23190207,fill opacity=0.7] (axis cs:26.6,0) rectangle (axis cs:27.4,0.0875000000000003);
\draw[draw=none,fill=darkturquoise23190207,fill opacity=0.7] (axis cs:27.4,0) rectangle (axis cs:28.2,0.21);
\draw[draw=none,fill=darkturquoise23190207,fill opacity=0.7] (axis cs:28.2,0) rectangle (axis cs:29,0);
\draw[draw=none,fill=darkturquoise23190207,fill opacity=0.7] (axis cs:29,0) rectangle (axis cs:29.8,0.36);
\draw[draw=none,fill=darkturquoise23190207,fill opacity=0.7] (axis cs:29.8,0) rectangle (axis cs:30.6,0.3075);
\draw[draw=none,fill=darkturquoise23190207,fill opacity=0.7] (axis cs:30.6,0) rectangle (axis cs:31.4,0.152500000000001);
\draw[draw=none,fill=darkturquoise23190207,fill opacity=0.7] (axis cs:31.4,0) rectangle (axis cs:32.2,0.0874999999999995);
\draw[draw=none,fill=darkturquoise23190207,fill opacity=0.7] (axis cs:32.2,0) rectangle (axis cs:33,0.0200000000000001);
\draw[draw=none,fill=steelblue31119180,fill opacity=0.7] (axis cs:25,0) rectangle (axis cs:25.9,0.00666666666666668);
\addlegendimage{ybar,ybar legend,draw=none,fill=steelblue31119180,fill opacity=0.7}
\addlegendentry{Emp. Bern.}

\draw[draw=none,fill=steelblue31119180,fill opacity=0.7] (axis cs:25.9,0) rectangle (axis cs:26.8,0.0266666666666666);
\draw[draw=none,fill=steelblue31119180,fill opacity=0.7] (axis cs:26.8,0) rectangle (axis cs:27.7,0.0822222222222224);
\draw[draw=none,fill=steelblue31119180,fill opacity=0.7] (axis cs:27.7,0) rectangle (axis cs:28.6,0.204444444444444);
\draw[draw=none,fill=steelblue31119180,fill opacity=0.7] (axis cs:28.6,0) rectangle (axis cs:29.5,0.255555555555556);
\draw[draw=none,fill=steelblue31119180,fill opacity=0.7] (axis cs:29.5,0) rectangle (axis cs:30.4,0.277777777777778);
\draw[draw=none,fill=steelblue31119180,fill opacity=0.7] (axis cs:30.4,0) rectangle (axis cs:31.3,0.186666666666666);
\draw[draw=none,fill=steelblue31119180,fill opacity=0.7] (axis cs:31.3,0) rectangle (axis cs:32.2,0.0533333333333332);
\draw[draw=none,fill=steelblue31119180,fill opacity=0.7] (axis cs:32.2,0) rectangle (axis cs:33.1,0.0133333333333334);
\draw[draw=none,fill=steelblue31119180,fill opacity=0.7] (axis cs:33.1,0) rectangle (axis cs:34,0.00444444444444445);
\draw[draw=none,fill=goldenrod18818934,fill opacity=0.7] (axis cs:17,0) rectangle (axis cs:17.6,0.833333333333331);
\addlegendimage{ybar,ybar legend,draw=none,fill=goldenrod18818934,fill opacity=0.7}
\addlegendentry{Bet}

\draw[draw=none,fill=goldenrod18818934,fill opacity=0.7] (axis cs:17.6,0) rectangle (axis cs:18.2,0.793333333333336);
\draw[draw=none,fill=goldenrod18818934,fill opacity=0.7] (axis cs:18.2,0) rectangle (axis cs:18.8,0);
\draw[draw=none,fill=goldenrod18818934,fill opacity=0.7] (axis cs:18.8,0) rectangle (axis cs:19.4,0.0266666666666668);
\draw[draw=none,fill=goldenrod18818934,fill opacity=0.7] (axis cs:19.4,0) rectangle (axis cs:20,0);
\draw[draw=none,fill=goldenrod18818934,fill opacity=0.7] (axis cs:20,0) rectangle (axis cs:20.6,0.00333333333333333);
\draw[draw=none,fill=goldenrod18818934,fill opacity=0.7] (axis cs:20.6,0) rectangle (axis cs:21.2,0.00666666666666669);
\draw[draw=none,fill=goldenrod18818934,fill opacity=0.7] (axis cs:21.2,0) rectangle (axis cs:21.8,0);
\draw[draw=none,fill=goldenrod18818934,fill opacity=0.7] (axis cs:21.8,0) rectangle (axis cs:22.4,0);
\draw[draw=none,fill=goldenrod18818934,fill opacity=0.7] (axis cs:22.4,0) rectangle (axis cs:23,0.00333333333333333);
\end{axis}

\end{tikzpicture}
        \begin{tikzpicture}

\definecolor{darkslategray38}{RGB}{38,38,38}
\definecolor{darkturquoise23190207}{RGB}{23,190,207}
\definecolor{goldenrod18818934}{RGB}{188,189,34}
\definecolor{lightgray204}{RGB}{204,204,204}
\definecolor{steelblue31119180}{RGB}{31,119,180}

\begin{axis}[
axis line style={darkslategray38},
height=\figheight,
tick align=outside,
tick pos=left,
title={\(\displaystyle N_1 / N = 0.8, f \propto \pi\)},
width=\figwidth,
x grid style={lightgray204},
xlabel=\textcolor{darkslategray38}{Stopping Times},
xmin=13, xmax=101,
xtick style={color=darkslategray38},
y grid style={lightgray204},
ymin=0, ymax=1,
ytick style={color=darkslategray38}
]
\draw[draw=none,fill=darkturquoise23190207,fill opacity=0.7] (axis cs:85,0) rectangle (axis cs:86.2,0.00833333333333331);
\addlegendimage{ybar,ybar legend,draw=none,fill=darkturquoise23190207,fill opacity=0.7}

\draw[draw=none,fill=darkturquoise23190207,fill opacity=0.7] (axis cs:86.2,0) rectangle (axis cs:87.4,0.00833333333333331);
\draw[draw=none,fill=darkturquoise23190207,fill opacity=0.7] (axis cs:87.4,0) rectangle (axis cs:88.6,0.0400000000000004);
\draw[draw=none,fill=darkturquoise23190207,fill opacity=0.7] (axis cs:88.6,0) rectangle (axis cs:89.8,0.0699999999999998);
\draw[draw=none,fill=darkturquoise23190207,fill opacity=0.7] (axis cs:89.8,0) rectangle (axis cs:91,0.111666666666666);
\draw[draw=none,fill=darkturquoise23190207,fill opacity=0.7] (axis cs:91,0) rectangle (axis cs:92.2,0.346666666666666);
\draw[draw=none,fill=darkturquoise23190207,fill opacity=0.7] (axis cs:92.2,0) rectangle (axis cs:93.4,0.113333333333333);
\draw[draw=none,fill=darkturquoise23190207,fill opacity=0.7] (axis cs:93.4,0) rectangle (axis cs:94.6,0.0850000000000008);
\draw[draw=none,fill=darkturquoise23190207,fill opacity=0.7] (axis cs:94.6,0) rectangle (axis cs:95.8,0.0383333333333332);
\draw[draw=none,fill=darkturquoise23190207,fill opacity=0.7] (axis cs:95.8,0) rectangle (axis cs:97,0.0116666666666666);
\draw[draw=none,fill=steelblue31119180,fill opacity=1] (axis cs:35,0) rectangle (axis cs:35.2,0.0199999999999997);
\addlegendimage{ybar,ybar legend,draw=none,fill=steelblue31119180,fill opacity=0.7}

\draw[draw=none,fill=steelblue31119180,fill opacity=0.7] (axis cs:35.2,0) rectangle (axis cs:35.4,0);
\draw[draw=none,fill=steelblue31119180,fill opacity=0.7] (axis cs:35.4,0) rectangle (axis cs:35.6,0);
\draw[draw=none,fill=steelblue31119180,fill opacity=0.7] (axis cs:35.6,0) rectangle (axis cs:35.8,0);
\draw[draw=none,fill=steelblue31119180,fill opacity=0.7] (axis cs:35.8,0) rectangle (axis cs:36,0);
\draw[draw=none,fill=steelblue31119180,fill opacity=1] (axis cs:36,0) rectangle (axis cs:36.4,4.95999999999993);
\draw[draw=none,fill=steelblue31119180,fill opacity=0.7] (axis cs:36.2,0) rectangle (axis cs:36.4,0);
\draw[draw=none,fill=steelblue31119180,fill opacity=0.7] (axis cs:36.4,0) rectangle (axis cs:36.6,0);
\draw[draw=none,fill=steelblue31119180,fill opacity=0.7] (axis cs:36.6,0) rectangle (axis cs:36.8,0);
\draw[draw=none,fill=steelblue31119180,fill opacity=1] (axis cs:36.8,0) rectangle (axis cs:37.2,0.0199999999999997);
\draw[draw=none,fill=goldenrod18818934,fill opacity=1] (axis cs:17,0) rectangle (axis cs:17.4,4.99999999999993);
\addlegendimage{ybar,ybar legend,draw=none,fill=goldenrod18818934,fill opacity=0.7}

\draw[draw=none,fill=goldenrod18818934,fill opacity=0.7] (axis cs:17.1,0) rectangle (axis cs:17.2,0);
\draw[draw=none,fill=goldenrod18818934,fill opacity=0.7] (axis cs:17.2,0) rectangle (axis cs:17.3,0);
\draw[draw=none,fill=goldenrod18818934,fill opacity=0.7] (axis cs:17.3,0) rectangle (axis cs:17.4,0);
\draw[draw=none,fill=goldenrod18818934,fill opacity=0.7] (axis cs:17.4,0) rectangle (axis cs:17.5,0);
\draw[draw=none,fill=goldenrod18818934,fill opacity=0.7] (axis cs:17.5,0) rectangle (axis cs:17.6,0);
\draw[draw=none,fill=goldenrod18818934,fill opacity=0.7] (axis cs:17.6,0) rectangle (axis cs:17.7,0);
\draw[draw=none,fill=goldenrod18818934,fill opacity=0.7] (axis cs:17.7,0) rectangle (axis cs:17.8,0);
\draw[draw=none,fill=goldenrod18818934,fill opacity=0.7] (axis cs:17.8,0) rectangle (axis cs:17.9,0);
\draw[draw=none,fill=goldenrod18818934,fill opacity=1] (axis cs:17.9,0) rectangle (axis cs:18.3,4.99999999999993);
\end{axis}

\end{tikzpicture}
        \begin{tikzpicture}

\definecolor{darkslategray38}{RGB}{38,38,38}
\definecolor{darkturquoise23190207}{RGB}{23,190,207}
\definecolor{goldenrod18818934}{RGB}{188,189,34}
\definecolor{lightgray204}{RGB}{204,204,204}
\definecolor{steelblue31119180}{RGB}{31,119,180}

\begin{axis}[
axis line style={darkslategray38},
height=\figheight,
tick align=outside,
tick pos=left,
title={\(\displaystyle N_1 / N = 0.2, f \propto 1/\pi\)},
width=\figwidth,
x grid style={lightgray204},
xlabel=\textcolor{darkslategray38}{Stopping Times},
xmin=9.85, xmax=35.15,
xtick style={color=darkslategray38},
y grid style={lightgray204},
ymin=0, ymax=1,
ytick style={color=darkslategray38}
]
\draw[draw=none,fill=darkturquoise23190207,fill opacity=0.7] (axis cs:23,0) rectangle (axis cs:23.6,0.00333333333333333);
\addlegendimage{ybar,ybar legend,draw=none,fill=darkturquoise23190207,fill opacity=0.7}

\draw[draw=none,fill=darkturquoise23190207,fill opacity=0.7] (axis cs:23.6,0) rectangle (axis cs:24.2,0.0466666666666668);
\draw[draw=none,fill=darkturquoise23190207,fill opacity=0.7] (axis cs:24.2,0) rectangle (axis cs:24.8,0);
\draw[draw=none,fill=darkturquoise23190207,fill opacity=0.7] (axis cs:24.8,0) rectangle (axis cs:25.4,0.350000000000001);
\draw[draw=none,fill=darkturquoise23190207,fill opacity=0.7] (axis cs:25.4,0) rectangle (axis cs:26,0);
\draw[draw=none,fill=darkturquoise23190207,fill opacity=0.7] (axis cs:26,0) rectangle (axis cs:26.6,0.659999999999998);
\draw[draw=none,fill=darkturquoise23190207,fill opacity=0.7] (axis cs:26.6,0) rectangle (axis cs:27.2,0.443333333333335);
\draw[draw=none,fill=darkturquoise23190207,fill opacity=0.7] (axis cs:27.2,0) rectangle (axis cs:27.8,0);
\draw[draw=none,fill=darkturquoise23190207,fill opacity=0.7] (axis cs:27.8,0) rectangle (axis cs:28.4,0.143333333333334);
\draw[draw=none,fill=darkturquoise23190207,fill opacity=0.7] (axis cs:28.4,0) rectangle (axis cs:29,0.02);
\draw[draw=none,fill=steelblue31119180,fill opacity=0.7] (axis cs:25,0) rectangle (axis cs:25.9,0.00666666666666668);
\addlegendimage{ybar,ybar legend,draw=none,fill=steelblue31119180,fill opacity=0.7}

\draw[draw=none,fill=steelblue31119180,fill opacity=0.7] (axis cs:25.9,0) rectangle (axis cs:26.8,0.0266666666666666);
\draw[draw=none,fill=steelblue31119180,fill opacity=0.7] (axis cs:26.8,0) rectangle (axis cs:27.7,0.0822222222222224);
\draw[draw=none,fill=steelblue31119180,fill opacity=0.7] (axis cs:27.7,0) rectangle (axis cs:28.6,0.204444444444444);
\draw[draw=none,fill=steelblue31119180,fill opacity=0.7] (axis cs:28.6,0) rectangle (axis cs:29.5,0.255555555555556);
\draw[draw=none,fill=steelblue31119180,fill opacity=0.7] (axis cs:29.5,0) rectangle (axis cs:30.4,0.277777777777778);
\draw[draw=none,fill=steelblue31119180,fill opacity=0.7] (axis cs:30.4,0) rectangle (axis cs:31.3,0.186666666666666);
\draw[draw=none,fill=steelblue31119180,fill opacity=0.7] (axis cs:31.3,0) rectangle (axis cs:32.2,0.0533333333333332);
\draw[draw=none,fill=steelblue31119180,fill opacity=0.7] (axis cs:32.2,0) rectangle (axis cs:33.1,0.0133333333333334);
\draw[draw=none,fill=steelblue31119180,fill opacity=0.7] (axis cs:33.1,0) rectangle (axis cs:34,0.00444444444444445);
\draw[draw=none,fill=goldenrod18818934,fill opacity=0.7] (axis cs:11,0) rectangle (axis cs:11.4,0.005);
\addlegendimage{ybar,ybar legend,draw=none,fill=goldenrod18818934,fill opacity=0.7}

\draw[draw=none,fill=goldenrod18818934,fill opacity=0.7] (axis cs:11.4,0) rectangle (axis cs:11.8,0);
\draw[draw=none,fill=goldenrod18818934,fill opacity=0.7] (axis cs:11.8,0) rectangle (axis cs:12.2,2.18500000000001);
\draw[draw=none,fill=goldenrod18818934,fill opacity=0.7] (axis cs:12.2,0) rectangle (axis cs:12.6,0);
\draw[draw=none,fill=goldenrod18818934,fill opacity=0.7] (axis cs:12.6,0) rectangle (axis cs:13,0);
\draw[draw=none,fill=goldenrod18818934,fill opacity=0.7] (axis cs:13,0) rectangle (axis cs:13.4,0.165);
\draw[draw=none,fill=goldenrod18818934,fill opacity=0.7] (axis cs:13.4,0) rectangle (axis cs:13.8,0);
\draw[draw=none,fill=goldenrod18818934,fill opacity=0.7] (axis cs:13.8,0) rectangle (axis cs:14.2,0.135);
\draw[draw=none,fill=goldenrod18818934,fill opacity=0.7] (axis cs:14.2,0) rectangle (axis cs:14.6,0);
\draw[draw=none,fill=goldenrod18818934,fill opacity=0.7] (axis cs:14.6,0) rectangle (axis cs:15,0.00999999999999999);
\end{axis}

\end{tikzpicture}
        \begin{tikzpicture}

\definecolor{darkslategray38}{RGB}{38,38,38}
\definecolor{darkturquoise23190207}{RGB}{23,190,207}
\definecolor{goldenrod18818934}{RGB}{188,189,34}
\definecolor{lightgray204}{RGB}{204,204,204}
\definecolor{steelblue31119180}{RGB}{31,119,180}

\begin{axis}[
axis line style={darkslategray38},
height=\figheight,
tick align=outside,
tick pos=left,
title={\(\displaystyle N_1 / N = 0.8, f \propto 1/\pi\)},
width=\figwidth,
x grid style={lightgray204},
xlabel=\textcolor{darkslategray38}{Stopping Times},
xmin=11.875, xmax=47.625,
xtick style={color=darkslategray38},
y grid style={lightgray204},
ymin=0, ymax=1,
ytick style={color=darkslategray38}
]
\draw[draw=none,fill=darkturquoise23190207,fill opacity=0.7] (axis cs:41,0) rectangle (axis cs:41.5,0.02);
\addlegendimage{ybar,ybar legend,draw=none,fill=darkturquoise23190207,fill opacity=0.7}

\draw[draw=none,fill=darkturquoise23190207,fill opacity=0.7] (axis cs:41.5,0) rectangle (axis cs:42,0);
\draw[draw=none,fill=darkturquoise23190207,fill opacity=0.7] (axis cs:42,0) rectangle (axis cs:42.5,0.308);
\draw[draw=none,fill=darkturquoise23190207,fill opacity=0.7] (axis cs:42.5,0) rectangle (axis cs:43,0);
\draw[draw=none,fill=darkturquoise23190207,fill opacity=0.7] (axis cs:43,0) rectangle (axis cs:43.5,1.164);
\draw[draw=none,fill=darkturquoise23190207,fill opacity=0.7] (axis cs:43.5,0) rectangle (axis cs:44,0);
\draw[draw=none,fill=darkturquoise23190207,fill opacity=0.7] (axis cs:44,0) rectangle (axis cs:44.5,0.468);
\draw[draw=none,fill=darkturquoise23190207,fill opacity=0.7] (axis cs:44.5,0) rectangle (axis cs:45,0);
\draw[draw=none,fill=darkturquoise23190207,fill opacity=0.7] (axis cs:45,0) rectangle (axis cs:45.5,0.028);
\draw[draw=none,fill=darkturquoise23190207,fill opacity=0.7] (axis cs:45.5,0) rectangle (axis cs:46,0.012);
\draw[draw=none,fill=steelblue31119180,fill opacity=0.7] (axis cs:36,0) rectangle (axis cs:36.3,0.0266666666666669);
\addlegendimage{ybar,ybar legend,draw=none,fill=steelblue31119180,fill opacity=0.7}

\draw[draw=none,fill=steelblue31119180,fill opacity=0.7] (axis cs:36.3,0) rectangle (axis cs:36.6,0);
\draw[draw=none,fill=steelblue31119180,fill opacity=0.7] (axis cs:36.6,0) rectangle (axis cs:36.9,0);
\draw[draw=none,fill=steelblue31119180,fill opacity=0.7] (axis cs:36.9,0) rectangle (axis cs:37.2,3.26666666666662);
\draw[draw=none,fill=steelblue31119180,fill opacity=0.7] (axis cs:37.2,0) rectangle (axis cs:37.5,0);
\draw[draw=none,fill=steelblue31119180,fill opacity=0.7] (axis cs:37.5,0) rectangle (axis cs:37.8,0);
\draw[draw=none,fill=steelblue31119180,fill opacity=0.7] (axis cs:37.8,0) rectangle (axis cs:38.1,0.00666666666666657);
\draw[draw=none,fill=steelblue31119180,fill opacity=0.7] (axis cs:38.1,0) rectangle (axis cs:38.4,0);
\draw[draw=none,fill=steelblue31119180,fill opacity=0.7] (axis cs:38.4,0) rectangle (axis cs:38.7,0);
\draw[draw=none,fill=steelblue31119180,fill opacity=0.7] (axis cs:38.7,0) rectangle (axis cs:39,0.0333333333333336);
\draw[draw=none,fill=goldenrod18818934,fill opacity=0.7] (axis cs:13.5,0) rectangle (axis cs:13.6,0);
\addlegendimage{ybar,ybar legend,draw=none,fill=goldenrod18818934,fill opacity=0.7}

\draw[draw=none,fill=goldenrod18818934,fill opacity=0.7] (axis cs:13.6,0) rectangle (axis cs:13.7,0);
\draw[draw=none,fill=goldenrod18818934,fill opacity=0.7] (axis cs:13.7,0) rectangle (axis cs:13.8,0);
\draw[draw=none,fill=goldenrod18818934,fill opacity=0.7] (axis cs:13.8,0) rectangle (axis cs:13.9,0);
\draw[draw=none,fill=goldenrod18818934,fill opacity=0.7] (axis cs:13.9,0) rectangle (axis cs:14,0);
\draw[draw=none,fill=goldenrod18818934,fill opacity=1] (axis cs:14,0) rectangle (axis cs:14.4,10);
\draw[draw=none,fill=goldenrod18818934,fill opacity=0.7] (axis cs:14.1,0) rectangle (axis cs:14.2,0);
\draw[draw=none,fill=goldenrod18818934,fill opacity=0.7] (axis cs:14.2,0) rectangle (axis cs:14.3,0);
\draw[draw=none,fill=goldenrod18818934,fill opacity=0.7] (axis cs:14.3,0) rectangle (axis cs:14.4,0);
\draw[draw=none,fill=goldenrod18818934,fill opacity=0.7] (axis cs:14.4,0) rectangle (axis cs:14.5,0);
\end{axis}

\end{tikzpicture}
        \caption{Histograms of the first time for each CS to reach a width of $\varepsilon=0.2$ with $N=200$. We choose a larger $\varepsilon$ than in Experiment 1 for purposes of demonstrating the difference between the CSs, as all CSs converge to the logical CS and have nearly identical stopping time distribution for small values of $\varepsilon$. We can see that the empirical-Bernstein CS is reaches $\varepsilon$ width earlier than Hoeffding CS when $\Nlarge$ is large and the reverse is true when $\Nlarge$ is small, and the betting CS reaches $\varepsilon$ uniformly the fastest.}
        \label{fig:HoefEBHist}
    \end{figure}
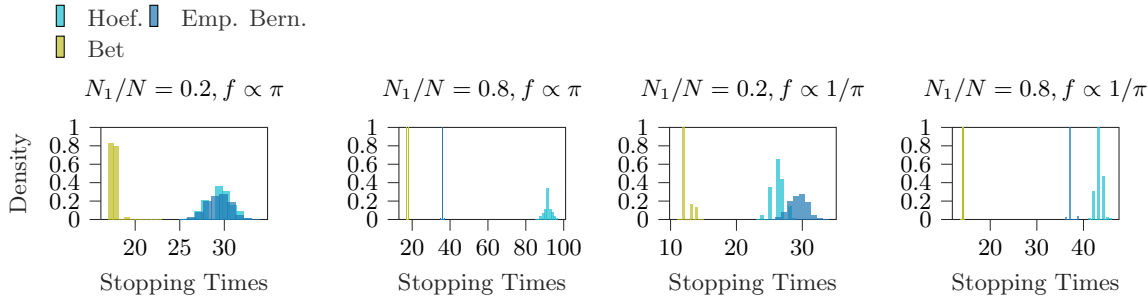

\section{Experiments with Housing Sales Data}
\label{appendix:housing-data} 

    We now apply our auditing scheme to the transactions in the `\href{https://www.kaggle.com/datasets/harlfoxem/housesalesprediction}{House Sales in King County}' dataset from Kaggle. The dataset consists of $21,616$ datapoints, each consisting of $21$ features describing the house (such as the number of bedrooms, the number of bathrooms, square footage, floors, condition, etc), and one target variable \texttt{price}.  We treat the \texttt{price} values as the `reported monetary values' for our framework. 
    
    \paragraph{Creating the ground truth.}  To adapt the dataset to our problem, we first need to generate the `ground truth', that is, the true $f$-values. To do this, we proceed in the following steps: 
    \begin{itemize}
        \item We first select $10\%$ of the dataset, and assign them some arbitrary $f$ values in the range $(0,0.7)$.
        
        \item Using this `labelled' dataset, we train a random forest regressor with $200$ trees, and mean-absolute error criterion.
        
        \item Finally, this trained regressor is then used to generate the ground truth for the remaining $90\%$ of the dataset.
    \end{itemize}
     The reason for using this approach for generating the ground truth, is that we want it to be dependent on the additional features associated with each transaction.
    
    \paragraph{Generating the side-information.} Having obtained the $M$ and $f$-values, we obtain the side-information~(i.e., $S$-values) by using $10\%$ of the remaining labelled data to train another predictor for generating side-information on the rest of the data. In our experiments, we either used a single decision-tree regressor, or a random forest with a small number of trees~(fewer than $50$). Informally, we expect that increasing the capacity of the regressor should lead to increased correlation between the ground truth and the side-information. 
    \begin{figure}[htb!]
            \def\figwidth{0.45\textwidth}
            \def\figheight{0.45\textwidth} %
        \hspace*{-2pt}
        \begin{tikzpicture}

\definecolor{crimson2143940}{RGB}{214,39,40}
\definecolor{darkslategray38}{RGB}{38,38,38}
\definecolor{lightgray204}{RGB}{204,204,204}
\definecolor{orchid227119194}{RGB}{227,119,194}
\definecolor{steelblue31119180}{RGB}{31,119,180}

\begin{axis}[
axis line style={darkslategray38},
height=\figheight,
legend cell align={left},
legend style={
  fill opacity=0.8,
  draw opacity=1,
  text opacity=1,
  at={(0.5,0.09)},
  anchor=south,
  draw=none
},
tick align=outside,
tick pos=left,
title={Stopping Times distribution ($\rho\approx 0.71$)},
width=\figwidth,
x grid style={lightgray204},
xlabel=\textcolor{darkslategray38}{Stopping Times},
xmin=70.95, xmax=248.05,
xtick style={color=darkslategray38},
y grid style={lightgray204},
ylabel=\textcolor{darkslategray38}{Density},
ymin=0, ymax=0.0750000000000001,
ytick style={color=darkslategray38}
]
\draw[draw=none,fill=steelblue31119180,fill opacity=0.4] (axis cs:79,0) rectangle (axis cs:84.2,0.00384615384615384);
\addlegendimage{ybar,ybar legend,draw=none,fill=steelblue31119180,fill opacity=0.4}
\addlegendentry{propM}

\draw[draw=none,fill=steelblue31119180,fill opacity=0.4] (axis cs:84.2,0) rectangle (axis cs:89.4,0.00384615384615384);
\draw[draw=none,fill=steelblue31119180,fill opacity=0.4] (axis cs:89.4,0) rectangle (axis cs:94.6,0.00769230769230771);
\draw[draw=none,fill=steelblue31119180,fill opacity=0.4] (axis cs:94.6,0) rectangle (axis cs:99.8,0.00769230769230769);
\draw[draw=none,fill=steelblue31119180,fill opacity=0.4] (axis cs:99.8,0) rectangle (axis cs:105,0.0230769230769231);
\draw[draw=none,fill=steelblue31119180,fill opacity=0.4] (axis cs:105,0) rectangle (axis cs:110.2,0.0384615384615384);
\draw[draw=none,fill=steelblue31119180,fill opacity=0.4] (axis cs:110.2,0) rectangle (axis cs:115.4,0.0384615384615384);
\draw[draw=none,fill=steelblue31119180,fill opacity=0.4] (axis cs:115.4,0) rectangle (axis cs:120.6,0.0307692307692308);
\draw[draw=none,fill=steelblue31119180,fill opacity=0.4] (axis cs:120.6,0) rectangle (axis cs:125.8,0.0192307692307692);
\draw[draw=none,fill=steelblue31119180,fill opacity=0.4] (axis cs:125.8,0) rectangle (axis cs:131,0.0192307692307693);
\draw[draw=none,fill=crimson2143940,fill opacity=0.4] (axis cs:83,0) rectangle (axis cs:86.3,0.0424242424242425);
\addlegendimage{ybar,ybar legend,draw=none,fill=crimson2143940,fill opacity=0.4}
\addlegendentry{propM+CV}

\draw[draw=none,fill=crimson2143940,fill opacity=0.4] (axis cs:86.3,0) rectangle (axis cs:89.6,0.0484848484848485);
\draw[draw=none,fill=crimson2143940,fill opacity=0.4] (axis cs:89.6,0) rectangle (axis cs:92.9,0.0484848484848483);
\draw[draw=none,fill=crimson2143940,fill opacity=0.4] (axis cs:92.9,0) rectangle (axis cs:96.2,0.0606060606060607);
\draw[draw=none,fill=crimson2143940,fill opacity=0.4] (axis cs:96.2,0) rectangle (axis cs:99.5,0.0363636363636364);
\draw[draw=none,fill=crimson2143940,fill opacity=0.4] (axis cs:99.5,0) rectangle (axis cs:102.8,0.0363636363636364);
\draw[draw=none,fill=crimson2143940,fill opacity=0.4] (axis cs:102.8,0) rectangle (axis cs:106.1,0.0121212121212121);
\draw[draw=none,fill=crimson2143940,fill opacity=0.4] (axis cs:106.1,0) rectangle (axis cs:109.4,0.0121212121212121);
\draw[draw=none,fill=crimson2143940,fill opacity=0.4] (axis cs:109.4,0) rectangle (axis cs:112.7,0);
\draw[draw=none,fill=crimson2143940,fill opacity=0.4] (axis cs:112.7,0) rectangle (axis cs:116,0.00606060606060607);
\draw[draw=none,fill=orchid227119194,fill opacity=0.4] (axis cs:184,0) rectangle (axis cs:189.6,0.00357142857142857);
\addlegendimage{ybar,ybar legend,draw=none,fill=orchid227119194,fill opacity=0.4}
\addlegendentry{uniform}

\draw[draw=none,fill=orchid227119194,fill opacity=0.4] (axis cs:189.6,0) rectangle (axis cs:195.2,0);
\draw[draw=none,fill=orchid227119194,fill opacity=0.4] (axis cs:195.2,0) rectangle (axis cs:200.8,0);
\draw[draw=none,fill=orchid227119194,fill opacity=0.4] (axis cs:200.8,0) rectangle (axis cs:206.4,0);
\draw[draw=none,fill=orchid227119194,fill opacity=0.4] (axis cs:206.4,0) rectangle (axis cs:212,0.00357142857142857);
\draw[draw=none,fill=orchid227119194,fill opacity=0.4] (axis cs:212,0) rectangle (axis cs:217.6,0.00357142857142857);
\draw[draw=none,fill=orchid227119194,fill opacity=0.4] (axis cs:217.6,0) rectangle (axis cs:223.2,0.0107142857142857);
\draw[draw=none,fill=orchid227119194,fill opacity=0.4] (axis cs:223.2,0) rectangle (axis cs:228.8,0.0357142857142856);
\draw[draw=none,fill=orchid227119194,fill opacity=0.4] (axis cs:228.8,0) rectangle (axis cs:234.4,0.0714285714285715);
\draw[draw=none,fill=orchid227119194,fill opacity=0.4] (axis cs:234.4,0) rectangle (axis cs:240,0.0500000000000001);
\end{axis}

\end{tikzpicture}
        \begin{tikzpicture}

\definecolor{crimson2143940}{RGB}{214,39,40}
\definecolor{darkslategray38}{RGB}{38,38,38}
\definecolor{lightgray204}{RGB}{204,204,204}
\definecolor{orchid227119194}{RGB}{227,119,194}
\definecolor{steelblue31119180}{RGB}{31,119,180}

\begin{axis}[
axis line style={darkslategray38},
height=\figheight,
legend cell align={left},
legend style={
  fill opacity=0.8,
  draw opacity=1,
  text opacity=1,
  at={(0.5,0.09)},
  anchor=south,
  draw=none
},
tick align=outside,
tick pos=left,
title={Stopping Times distribution ($\rho\approx 0.82$)},
width=\figwidth,
x grid style={lightgray204},
xlabel=\textcolor{darkslategray38}{Stopping Times},
xmin=67.95, xmax=245.05,
xtick style={color=darkslategray38},
y grid style={lightgray204},
ylabel=\textcolor{darkslategray38}{Density},
ymin=0, ymax=0.0855555555555555,
ytick style={color=darkslategray38}
]
\draw[draw=none,fill=steelblue31119180,fill opacity=0.4] (axis cs:88,0) rectangle (axis cs:92,0.02);
\addlegendimage{ybar,ybar legend,draw=none,fill=steelblue31119180,fill opacity=0.4}
\addlegendentry{propM}

\draw[draw=none,fill=steelblue31119180,fill opacity=0.4] (axis cs:92,0) rectangle (axis cs:96,0.015);
\draw[draw=none,fill=steelblue31119180,fill opacity=0.4] (axis cs:96,0) rectangle (axis cs:100,0.03);
\draw[draw=none,fill=steelblue31119180,fill opacity=0.4] (axis cs:100,0) rectangle (axis cs:104,0.03);
\draw[draw=none,fill=steelblue31119180,fill opacity=0.4] (axis cs:104,0) rectangle (axis cs:108,0.035);
\draw[draw=none,fill=steelblue31119180,fill opacity=0.4] (axis cs:108,0) rectangle (axis cs:112,0.045);
\draw[draw=none,fill=steelblue31119180,fill opacity=0.4] (axis cs:112,0) rectangle (axis cs:116,0.02);
\draw[draw=none,fill=steelblue31119180,fill opacity=0.4] (axis cs:116,0) rectangle (axis cs:120,0.01);
\draw[draw=none,fill=steelblue31119180,fill opacity=0.4] (axis cs:120,0) rectangle (axis cs:124,0.02);
\draw[draw=none,fill=steelblue31119180,fill opacity=0.4] (axis cs:124,0) rectangle (axis cs:128,0.025);
\draw[draw=none,fill=crimson2143940,fill opacity=0.4] (axis cs:76,0) rectangle (axis cs:78.7,0.0148148148148148);
\addlegendimage{ybar,ybar legend,draw=none,fill=crimson2143940,fill opacity=0.4}
\addlegendentry{propM+CV}

\draw[draw=none,fill=crimson2143940,fill opacity=0.4] (axis cs:78.7,0) rectangle (axis cs:81.4,0.0592592592592592);
\draw[draw=none,fill=crimson2143940,fill opacity=0.4] (axis cs:81.4,0) rectangle (axis cs:84.1,0.0222222222222223);
\draw[draw=none,fill=crimson2143940,fill opacity=0.4] (axis cs:84.1,0) rectangle (axis cs:86.8,0.0518518518518518);
\draw[draw=none,fill=crimson2143940,fill opacity=0.4] (axis cs:86.8,0) rectangle (axis cs:89.5,0.0814814814814814);
\draw[draw=none,fill=crimson2143940,fill opacity=0.4] (axis cs:89.5,0) rectangle (axis cs:92.2,0.0444444444444444);
\draw[draw=none,fill=crimson2143940,fill opacity=0.4] (axis cs:92.2,0) rectangle (axis cs:94.9,0.0518518518518518);
\draw[draw=none,fill=crimson2143940,fill opacity=0.4] (axis cs:94.9,0) rectangle (axis cs:97.6,0.0296296296296298);
\draw[draw=none,fill=crimson2143940,fill opacity=0.4] (axis cs:97.6,0) rectangle (axis cs:100.3,0.0074074074074074);
\draw[draw=none,fill=crimson2143940,fill opacity=0.4] (axis cs:100.3,0) rectangle (axis cs:103,0.0074074074074074);
\draw[draw=none,fill=orchid227119194,fill opacity=0.4] (axis cs:186,0) rectangle (axis cs:191.1,0.00392156862745098);
\addlegendimage{ybar,ybar legend,draw=none,fill=orchid227119194,fill opacity=0.4}
\addlegendentry{uniform}

\draw[draw=none,fill=orchid227119194,fill opacity=0.4] (axis cs:191.1,0) rectangle (axis cs:196.2,0);
\draw[draw=none,fill=orchid227119194,fill opacity=0.4] (axis cs:196.2,0) rectangle (axis cs:201.3,0);
\draw[draw=none,fill=orchid227119194,fill opacity=0.4] (axis cs:201.3,0) rectangle (axis cs:206.4,0.00784313725490197);
\draw[draw=none,fill=orchid227119194,fill opacity=0.4] (axis cs:206.4,0) rectangle (axis cs:211.5,0.00784313725490197);
\draw[draw=none,fill=orchid227119194,fill opacity=0.4] (axis cs:211.5,0) rectangle (axis cs:216.6,0.00784313725490197);
\draw[draw=none,fill=orchid227119194,fill opacity=0.4] (axis cs:216.6,0) rectangle (axis cs:221.7,0.011764705882353);
\draw[draw=none,fill=orchid227119194,fill opacity=0.4] (axis cs:221.7,0) rectangle (axis cs:226.8,0.0352941176470587);
\draw[draw=none,fill=orchid227119194,fill opacity=0.4] (axis cs:226.8,0) rectangle (axis cs:231.9,0.0470588235294118);
\draw[draw=none,fill=orchid227119194,fill opacity=0.4] (axis cs:231.9,0) rectangle (axis cs:237,0.0745098039215687);
\end{axis}

\end{tikzpicture}
        \caption{Histograms of the first time for each CS to reach a width of $\varepsilon=0.05$ with $N=250$. As expected, the \propM strategy~(both with and without control variates) is significantly more sample-efficient than the uniform baseline, and furthermore, the improvement by using control variates increases with increasing informativeness~(i.e., $\rho$) of the side-information.}
        \label{fig:housing}
    \end{figure}
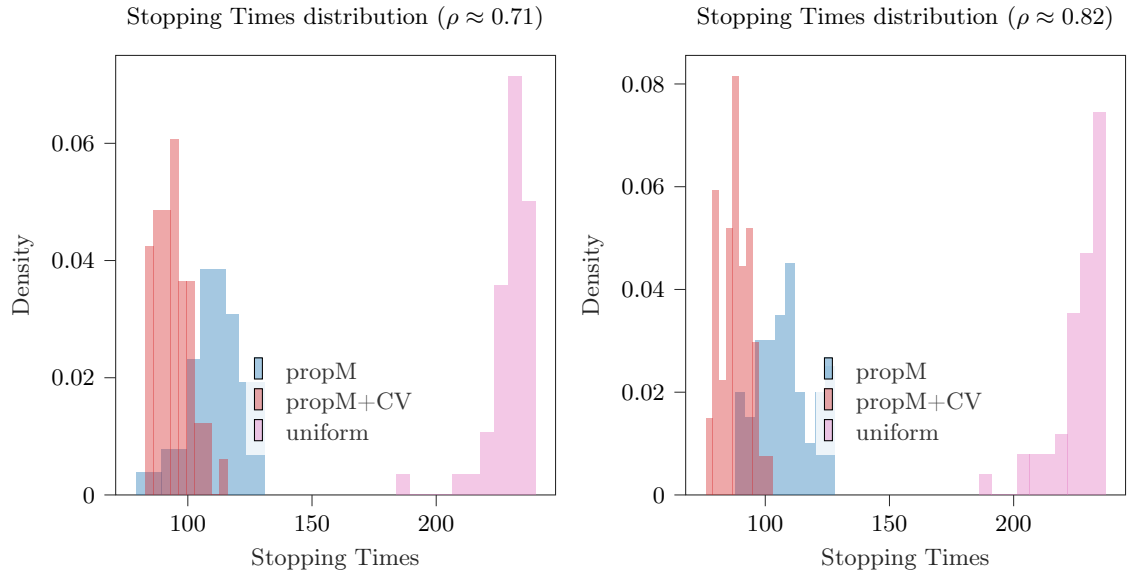
    \paragraph{Experimental Results.} In~\Cref{fig:housing}, we consider two instances of this problem: (i) side-information generated by a decision-tree regressor, and (ii) side-information generated by a random-forest regressor, consisting of $10$ trees. In the former case, the correlation between the side information and the ground-truth is approximately $0.71$, while in the latter it is around $0.82$. As shown in the plots, the \propM based strategies (both with and without control variates)  significantly outperform the uniform baseline strategy. Furthermore, the improvement by incorporating control variates increases with increasing correlation. 

\end{document}